\documentclass[11pt]{article}

\usepackage{amssymb,amsmath,amsthm,amsfonts,dsfont}
\usepackage[numbers,comma,sort&compress]{natbib}
\usepackage[colorlinks,hypertexnames=false]{hyperref}
\hypersetup{
	pdfstartview={FitH},
	pdfnewwindow=true,
	colorlinks=true,
	linkcolor=blue,
	citecolor=blue,
	filecolor=blue,
	urlcolor=blue}
\usepackage[capitalise]{cleveref}
\usepackage[all]{hypcap}
\usepackage{url}
\usepackage{graphicx}
\usepackage[font=small]{caption}
\usepackage[subrefformat=parens,labelformat=parens]{subcaption}
\usepackage{float}
\usepackage{footnote}
\usepackage{enumitem}
\usepackage[margin=1in]{geometry}

\newcommand{\abs}[1]{\left\lvert#1\right\rvert}
\newcommand{\norm}[1]{\left\lVert#1\right\rVert}

\DeclareMathOperator{\polylog}{polylog}

\newtheorem{theorem}{Theorem}
\newtheorem{definition}{Definition}
\newtheorem{lemma}{Lemma}
\newtheorem{proposition}[lemma]{Proposition}
\newtheorem{corollary}[lemma]{Corollary}
\theoremstyle{remark}

\theoremstyle{plain}
\newtheorem*{definition*}{Definition}

\newcommand{\thm}[1]{\hyperref[thm:#1]{Theorem~\ref*{thm:#1}}}
\newcommand{\defn}[1]{\hyperref[defn:#1]{Definition~\ref*{defn:#1}}}
\newcommand{\lem}[1]{\hyperref[lem:#1]{Lemma~\ref*{lem:#1}}}
\newcommand{\prop}[1]{\hyperref[prop:#1]{Proposition~\ref*{prop:#1}}}
\newcommand{\fig}[1]{\hyperref[fig:#1]{Figure~\ref*{fig:#1}}}
\newcommand{\tab}[1]{\hyperref[tab:#1]{Table~\ref*{tab:#1}}}
\renewcommand{\sec}[1]{\hyperref[sec:#1]{Section~\ref*{sec:#1}}}
\newcommand{\append}[1]{\hyperref[append:#1]{Appendix~\ref*{append:#1}}}
\newcommand{\cor}[1]{\hyperref[cor:#1]{Corollary~\ref*{cor:#1}}}
\newcommand{\obs}[1]{\hyperref[obs:#1]{Observation~\ref*{obs:#1}}}

\newcommand{\ket}[1]{|#1\rangle}
\newcommand{\bra}[1]{\langle#1|}
\newcommand{\ketbra}[2]{\ket{#1}\!\bra{#2}}

\usepackage{mathtools}

\usepackage{nicematrix}
\usepackage{tikz}

\usepackage[hyperpageref]{backref}
\renewcommand*{\backrefalt}[4]{%
\ifcase #1 %
No citations.%
\or
(Cited on page #2).%
\else
(Cited on pages #2).%
\fi
}

\usepackage{etoolbox}
\makeatletter
\patchcmd\NAT@citexnum{\let\NAT@last@num\NAT@num}{\MakeLinkTarget[cite]{}\Hy@backout{\@citeb\@extra@b@citeb}\let\NAT@last@num\NAT@num}{}{\fail}
\makeatother

\usepackage{etoolbox}
\apptocmd{\sloppy}{\hbadness 10000\relax}{}{}

\usepackage{authblk}

\usepackage{quantikz}
\usetikzlibrary{fit,backgrounds,arrows.meta,calc,patterns,patterns.meta}

\usepackage{standalone}
\usepackage{pgfplots}
\pgfplotsset{compat=1.18}
\title{Quantum oblique eigenprojection}
\author{Alexander M.\ Dalzell}
\author{Yuan Su}
\affil{AWS Center for Quantum Computing, Pasadena, CA 91106, USA}

\date{}

\begin{document}
\maketitle

%%%%%%%%%%%%%%%%%%%%%%%%%%%%%%%%%%%%%%%%%%%%%%%%%%%%%%%%%%%%%%%%%%%%%%%%%%%%%%
% % Short abstract
% Every square matrix decomposes its underlying Hilbert space into generalized, nonorthogonal eigensubspaces. We show that a quantum computer can perform such an oblique eigenprojection $\Pi$ given block encoding access to the input matrix. Our approach has a query complexity nearly linear in the inverse gap and a normalization factor close to $\lVert\Pi\rVert$ under a spectral-set condition on the input. This covers common assumptions on the numerical range or diagonalizability and matches known results for orthogonal eigenprojections. We achieve this with a two-sided block preconditioning that uses a discrete Fourier transform of the matrix resolvent. We describe applications to: (i)~preparing eigenstates of matrices with complex eigenvalues, extending the quantum eigenvalue transformation algorithm of Low and Su beyond real spectra; (ii)~solving continuous-time algebraic Riccati equations, cubically speeding up a prior solver of Rodenas-Ruiz, Zhao, and Lee; and (iii)~solving ordinary Sylvester equations, quadratically improving a direct augmented method of Wang and Liu. Our result suggests a promising route to applying nonanalytic matrix functions on quantum computers.

%%%%%%%%%%%%%%%%%%%%%%%%%%%%%%%%%%%%%%%%%%%%%%%%%%%%%%%%%%%%%%%%%%%%%%%%%%%%%%
\begin{abstract}
Every square matrix decomposes its underlying Hilbert space into generalized eigensubspaces. These subspaces are typically nonorthogonal, and performing their projections amounts to applying a discontinuous, and hence nonanalytic, function of a complex variable to the input matrix.

We show that a quantum computer can perform such an oblique eigenprojection $\Pi$ given block encoding access to the input matrix. Our approach has a query complexity nearly linear in the inverse gap and a normalization factor close to $\lVert\Pi\rVert$ under a spectral-set condition on the input, satisfied in particular under common assumptions on its numerical range or diagonalizability.

Our main technical contribution is a two-sided block preconditioning that employs a discrete Fourier transform of the matrix resolvent. This achieves the desired normalization factor while maintaining the same query complexity as the standard block encoding approach based on resolvent integration, matching known results for orthogonal eigenprojections.

We describe applications to: (i)~preparing eigenstates of matrices with complex eigenvalues, extending the quantum eigenvalue transformation algorithm of Low and Su beyond real spectra; (ii)~solving continuous-time algebraic Riccati equations, cubically speeding up a prior solver of Rodenas-Ruiz, Zhao, and Lee; and (iii)~solving ordinary Sylvester equations, quadratically improving a direct augmented method of Wang and Liu. 

Our result suggests a promising route to applying nonanalytic matrix functions on quantum computers.
\end{abstract}

%%%%%%%%%%%%%%%%%%%%%%%%%%%%%%%%%%%%%%%%%%%%%%%%%%%%%%%%%%%%%%%%%%%%%%%%%%%%%%
\newpage
{
	\thispagestyle{empty}
	\clearpage\tableofcontents
	\thispagestyle{empty}
}
\newpage

%%%%%%%%%%%%%%%%%%%%%%%%%%%%%%%%%%%%%%%%%%%%%%%%%%%%%%%%%%%%%%%%%%%%%%%%%%%%%% Main text

%%%%%%%%%%%%%%%%%%%%%%%%%%%%%%%%%%%%%%%%%%%%%%%%%%%%%%%%%%%%%%%%%%%%%%%%%%%%%%
\section{Introduction}
\label{sec:intro}
Quantum computers hold the promise of solving numerous computational problems faster than their classical counterparts~\cite{shor1997, grover1996}. A prominent example is the ground state projection problem, whose solution gives access to the static properties of many-body quantum systems relevant to application areas like chemistry, nuclear physics, and materials science. In this problem, we are given a description of a many-body Hamiltonian $A$ along with an assumption on the gap $\delta$ between the ground energy and the first excited energy, and the goal is to efficiently perform the eigenprojection $\Pi_{\lambda_0}$ onto the eigenstate associated with the ground energy $\lambda_0$. One solution to this problem---known already in the early days of quantum algorithms research---applies the phase estimation algorithm to the time evolution unitary $e^{iAt}$, effecting the projection $\Pi_{\lambda_0}$ \cite{kitaev1995,cleve1997,abrams1999}. To resolve the ground and excited states this way, the evolution must be performed out to time $t=\operatorname{\pmb{\Theta}}(1/\delta)$---a fundamental lower bound related to the energy--time uncertainty principle.

Modern approaches to eigenprojection for Hermitian matrices indeed achieve optimal total query cost $\operatorname{\pmb{\Theta}}(1/\delta)$ \cite{ge2019,lin2020}, but they are better viewed through the lens of linear algebra: one starts with a description of the input matrix $A$ and aims to perform a particular matrix function $f(A)$ on a quantum computer.  When $A$ is Hermitian, its singular values agree with eigenvalues in magnitude, whereas its eigenstates form an orthonormal basis. Then the task can be accomplished by the Quantum Singular Value Transformation (QSVT)~\cite{gilyen2019}. Crucially, QSVT can perform polynomial transformations to Hermitian input matrices with a query complexity linear in the degree of the polynomial, and thus can remain efficient even when the underlying Hilbert space has an exponentially large dimension. For example, QSVT optimally solves the orthogonal eigenprojection problem by using a degree-$\operatorname{\mathbf{O}}(1/\delta)$ polynomial $f$ that approximates the indicator function on the target part of the spectrum, so that $f(A) \approx \Pi_{\lambda_0}$ \cite[Lemma 29]{gilyen2019}. But QSVT applies far more widely, providing a unified framework for developing many quantum algorithms~\cite{martyn2021} with nearly optimal query complexity~\cite{Montanaro2024quantum}, including Hamiltonian simulation~\cite{LC17}, Gibbs state preparation, and solving linear systems \cite{gilyen2019}, which correspond to different choices of the function $f$. 

Performing $f(A)$ for a general, non-normal matrix $A$ is considerably harder, as the eigenvalues and singular values of $A$ no longer relate and the associated generalized eigensubspaces are typically oblique. Nonetheless, the ability to implement Quantum EigenValue Transformation (QEVT) for non-normal operators is broadly useful: it arises in simulating non-Hermitian physics~\cite{Ashida20}, in transcorrelated approaches to quantum chemistry~\cite{McArdle20}, in solving differential equations~\cite{Berry2017Differential,BerryCosta22,Krovi2023improvedquantum}, and in studying irreversible Markov chains~\cite{banerjee2026}. When the target function $f$ is analytic, several approaches have been developed to transform non-normal matrices on a quantum computer, which can be broadly grouped into three methodologies: those based on resolvent integration~\cite{Takahira2020QuantumCauchy,Takahira21, jiangan2026,wang2026psf,2021Yupreconditioned}, those based on solving a system of linear equations built from a matrix generating function~\cite{QEVP}, and those based on a linear combination of Hamiltonian simulations or a linear combination of Hermitian matrices~\cite{niying2026, wang2026lchm,laplace_qevt}.

An even more challenging task is to perform \emph{nonanalytic} functions of general matrices. A simple yet fundamental example is the eigenprojection $\Pi$ associated with a target region of the spectrum in the complex plane, corresponding to the indicator function of that region---a natural generalization of the influential ground state projection problem discussed above. As illustrated in \fig{indicator}, a polynomial can approximate this indicator function when the spectrum is real, but doing so is fundamentally obstructed when the spectrum is complex~\cite[Corollary B.24]{knapp2016basic}\cite[Theorem 11.5.5]{humpherys2017}, where it is unclear how to polynomially approximate the indicator function in every direction~\cite{cook2020}. Such oblique eigenprojections are nonetheless useful in a variety of settings, including preparing eigenstates of non-normal matrices, solving Sylvester equations, and solving Riccati equations arising in quantum chemistry~\cite{WL2026, RZL2026}.

\begin{figure}[t]
\centering
\begin{subfigure}[b]{0.5\textwidth}
  \centering
  \includegraphics{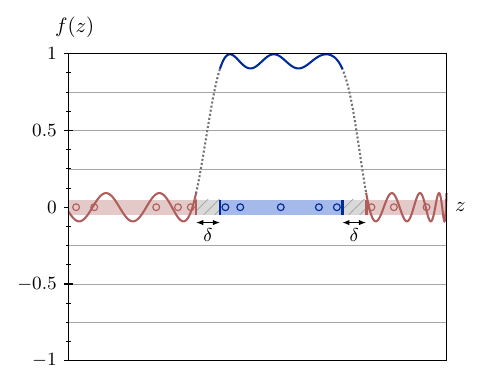}
  \caption{Real spectrum}
  \label{fig:indicator-poly}
\end{subfigure}%
\hfill
\begin{subfigure}[b]{0.5\textwidth}
  \centering
  \includegraphics{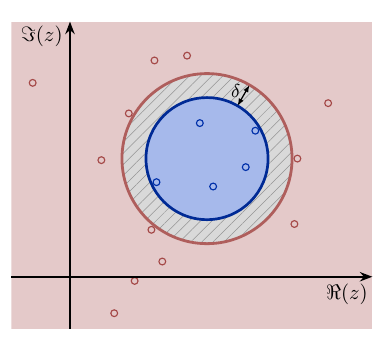}
  \caption{Complex spectrum}
  \label{fig:target-region}
\end{subfigure}
\caption{The eigenprojection problem asks to project onto generalized eigensubspaces associated with eigenvalues (open circles) lying within a particular target region (blue region), promised that all other eigenvalues lie at least $\delta$-far (red region) from the target region. We illustrate the target region as a disk, but we also consider more general regions. (a) If all eigenvalues are real, there exist suitable polynomials $f$ with degree linear in $1/\delta$ which approximate the indicator function, that is, a function which is 1 on the blue region, 0 on the red region, and arbitrary in the gray region where there are no eigenvalues. This enables optimal (in $1/\delta$) eigenprojection via QSVT or QEVT. (b) The same does not hold when the spectrum is complex. It is unclear how to construct a polynomial (or even analytic) function which approximates the analogous indicator function on the blue and red regions.  }
\label{fig:indicator}
\end{figure}

Resolvent integrals offer a way around this obstruction of nonanalyticity. The strategy is to select a contour enclosing the target spectrum and integrate the resolvent along it, which recovers the oblique eigenprojection through Cauchy's integral formula for matrices. This resolvent-integration approach was first used to estimate the eigenvalue density in a target region~\cite{Futamura2021}, and was subsequently developed within the block encoding framework and analyzed in detail in~\cite{RZL2026}. In the special case where the target region is a half-plane, one can instead integrate the resolvent along the axis~\cite{WL2026}. However, a drawback with this approach is that it introduces a normalization factor $\alpha=\operatorname{\pmb{\Theta}}(\norm{\Pi}/\delta)$ with $\delta$ the gap between the spectrum of $A$ and boundary of the target region, which is much larger than the desired normalization factor $\norm{\Pi}$. This is in stark contrast to the QSVT and QEVT frameworks for analytic functions, where $f(A)$ can be implemented with a normalization factor as small as $\norm{f(A)}$. The inflated normalization factor significantly increases the cost of the resulting quantum algorithms when the eigenprojection is used as a subroutine within a larger computation---as in the solution of Riccati equations that we discuss in this paper.

In this work, we show that a quantum computer can perform the oblique eigenprojection $\Pi$ given block encoding access to the input matrix $A$. Our approach has a query complexity nearly in $1/\delta$, and it produces a block encoding with a normalization factor close to $\alpha=\operatorname{\pmb{\Theta}}(\norm{\Pi})$ under a spectral-set condition on $A$, covering common assumptions on its numerical range or diagonalizability. As $\norm{\Pi}$ is the smallest factor achievable by any block encoding of $\Pi$, this normalization is necessarily nearly optimal. Compared with the standard resolvent integration approach, our result quadratically improves the cost of oblique eigenprojection as a subroutine in a larger computation, and it matches analogous results previously established only for the orthogonal case~\cite{lin2020} and real spectrum~\cite{QEVP}.

Our main technical contribution is a two-sided block preconditioning that uses a discrete Fourier transform of the matrix resolvent. Whereas the standard resolvent integral extracts only the top-left corner carrying the projector, we characterize every block through the discrete Fourier transform of the resolvent. This exposes a circulant structure of the block encoding, and preconditioning against it suppresses the inverse-gap growth of the normalization factor down to nearly $\norm{\Pi}$, while maintaining the same query complexity as the standard approach.

As the oblique eigenprojection serves as a fundamental primitive in the design of quantum algorithm, our result can be used to extend and improve a variety of prior results. We describe three such applications. First, we prepare eigenstates of matrices with complex eigenvalues, extending the QEVT algorithm of Low and Su~\cite{QEVP} beyond real spectra. Second, we solve continuous-time algebraic Riccati equations, cubically improving a prior solver of Rodenas-Ruiz, Zhao, and Lee~\cite{RZL2026}. Third, we solve ordinary Sylvester equations, quadratically improving a direct augmented method of Wang and Liu~\cite{WL2026}. These applications suggest a promising route toward applying more general nonanalytic matrix functions on quantum computers.

%%%%%%%%%%%%%%%%%%%%%%%%%%%%%%%%%%%%%%%%%%%%%%%%%%%%%%%%%%%%%%%%%%%%%%%%%%%%%%
\section{Resolvent integral and its block encoding}
\label{sec:riesz}
A common framework for developing and analyzing quantum algorithms in linear algebra is the \emph{block encoding} framework~\cite{gilyen2019}, which provides a unitary model of access to an input matrix $A$. To describe it, consider a typical quantum computation consisting of three steps: we first initialize an ancilla register in a fixed state $\ket{0}$, then apply a joint unitary $O_A$ to the ancilla and input register, and finally measure the ancilla with the desired measurement outcome flagged by $\ket{0}$. Conditioned on this outcome, the computation enacts a transformation on the input register. Bundling these steps together, we may abstract away the ancilla and refocus on how the overall procedure transforms the input state, which is precisely the action of the matrix $A$ we wish to access.

\begin{figure}[t]
\centering
\qquad\qquad\qquad\qquad\includegraphics{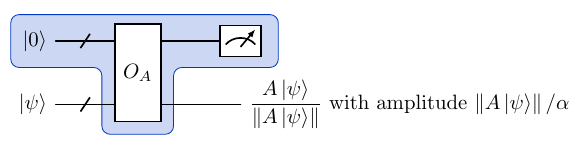}
\caption{Block encoding as a unitary access model: initialize the ancilla register in $\ket{0}$, apply the joint unitary $O_A$, and measure the ancilla. Suppose that $A/\alpha = \left( \bra{0} \otimes I \right) O_A \left( \ket{0} \otimes I \right)$. Conditioned on the desired outcome $\ket{0}$, the input register $\ket{\psi}$ is transformed by the matrix $A$ with success amplitude $\norm{A\ket{\psi}}/\alpha$.}
\label{fig:block-encoding}
\end{figure}

Formally, a unitary $O_A$ is said to be a block encoding of a matrix $A$ if
\begin{equation}
  A = \left( \bra{0} \otimes I \right) O_A \left( \ket{0} \otimes I \right),
  \label{eq:block-encoding-def}
\end{equation}
where $\ket{0}$ denotes the fixed initial state of the ancilla register. Equivalently, using the matrix representation $\ket{0}=\left[\begin{smallmatrix}
    1\\
    0
\end{smallmatrix}\right]$ and $\bra{0}=\left[\begin{smallmatrix}
    1 & 0
\end{smallmatrix}\right]$, $A$ appears as the top-left block of $O_A$,
\begin{equation}
  O_A =
  \begin{bmatrix}
    A & \cdot \\
    \cdot & \cdot
  \end{bmatrix},
  \label{eq:block-encoding-matrix}
\end{equation}
with the remaining blocks left unspecified, subject only to the unitarity of $O_A$. A key feature of this framework is that common matrix arithmetic operations can be performed directly at the level of block encodings: given block encodings of matrices $A$ and $B$, one can construct block encodings of their linear combination, product $AB$, Hermitian conjugate $A^\dagger$, and tensor product $A \otimes B$, as well as the inverse $A^{-1}$~\cite{gilyen2019}. More generally, the quantum singular value transformation allows one to block encode a polynomial transformation of the singular values of $A$, with query complexity linear in the polynomial degree.

For a general matrix $A$, the above definition must be relaxed to allow for a normalization factor $\alpha$, so that the unitary $O_A$ block encodes the rescaled matrix $A/\alpha$ rather than $A$ itself:
\begin{equation}
  A/\alpha = \left( \bra{0} \otimes I \right) O_A \left( \ket{0} \otimes I \right).
  \label{eq:block-encoding-normalized}
\end{equation}
Since the left-hand side is a submatrix of a unitary, its spectral norm is at most one, and hence $\alpha \ge \norm{A}$ is necessary for such a block encoding to exist. This lower bound is purely a property of $A$; in practice, the normalization factor achieved by a specific construction depends on the algorithm used and can be larger than $\norm{A}$. The normalization factor directly governs the cost of the resulting quantum algorithm. Indeed, when $O_A$ is applied to an input state $\ket{\psi}$, the desired outcome is heralded with success amplitude $\norm{A\ket{\psi}}/\alpha$, so a larger $\alpha$ means a smaller success probability and a correspondingly higher cost to prepare the output near deterministically. Reducing the normalization factor of a block encoding is therefore essential to reducing the overall cost of quantum algorithms. See \fig{block-encoding} for a circuit illustration of this point.

We now introduce the central problem of this work. Given an arbitrary square matrix $A$, the underlying Hilbert space decomposes into the generalized eigensubspaces of $A$, associated with its complex eigenvalues $\operatorname{\mathbf{Spec}}(A)$. Any partition of the spectrum into two disjoint groups $\operatorname{\mathbf{Spec}}_{\mathrm{in}}(A)$ and $\operatorname{\mathbf{Spec}}_{\mathrm{out}}(A)$ then induces a corresponding decomposition of the space into a pair of complementary invariant subspaces, together with an associated eigenprojection $\Pi_{\mathrm{in}}$, $\Pi_{\mathrm{out}}$ onto each of them. For presentational purpose, we first consider the case in which the eigenvalues are separated by the unit circle, so that the spectrum splits into those lying inside the circle and those lying outside. The problem is then to perform the oblique eigenprojection $\Pi_{\mathrm{in}}$ onto the generalized eigensubspaces whose eigenvalues $\operatorname{\mathbf{Spec}}_{\mathrm{in}}(A)$ lie inside the unit circle.

The eigenprojection $\Pi_{\mathrm{in}}$ admits a classical representation as a contour integral of the resolvent $(\zeta I - A)^{-1}$, known as the Riesz projector~\cite{humpherys2017}. Taking the target region to be the unit disk, the contour is the unit circle, which we parametrize as $\zeta = \mathrm{e}^{\mathrm{i}\theta}$ with $\mathrm{d}\zeta = \mathrm{i}\,\mathrm{e}^{\mathrm{i}\theta}\,\mathrm{d}\theta$. Factoring $\mathrm{e}^{\mathrm{i}\theta}$ out of the resolvent then yields the simplified expression
\begin{equation}
  \Pi_{\mathrm{in}}
  = \frac{1}{2\pi\mathrm{i}} \int_{\abs{\zeta}=1} (\zeta I - A)^{-1}\, \mathrm{d}\zeta
  = \frac{1}{2\pi} \int_0^{2\pi} \left( I - \mathrm{e}^{-\mathrm{i}\theta} A \right)^{-1} \mathrm{d}\theta,
  \label{eq:riesz-circle}
\end{equation}
where the integrand depends on $A$ only through the shifted resolvent $(I - \mathrm{e}^{-\mathrm{i}\theta} A)^{-1}$ along the unit circle.

We now review the standard approach to implementing this oblique eigenprojection on a quantum computer~\cite{RZL2026}. We first discretize the integral  using the $n$-point trapezoidal rule at the equally spaced nodes $\theta_j = 2\pi j/n$, giving $\Pi_{\mathrm{in}} \approx \frac{1}{n} \sum_{j=0}^{n-1} \left( I - \omega_n^{-j} A \right)^{-1}$, $\omega_n = \mathrm{e}^{2\pi\mathrm{i}/n}$, which converges rapidly as $n$ increases. To implement the right-hand side, we assemble the block-diagonal matrix $M = \sum_{j=0}^{n-1} \ketbra{j}{j}\otimes\left( I - \omega_n^{-j} A \right)$, each of whose blocks is a shifted copy of $A$ and hence admits a block encoding through linear combination. Applying QSVT to invert $M$ yields a block encoding of
\begin{equation}
M^{-1}
\;=\;
\begin{bmatrix}
  \left( I - A \right)^{-1} & & & \\[4pt]
   & \left( I - \omega_n^{-1} A \right)^{-1} & & \\[4pt]
   & & \ddots & \\[4pt]
   & & & \left( I - \omega_n^{-(n-1)} A \right)^{-1}
\end{bmatrix}.
\end{equation}
Preparing the uniform superposition $\operatorname{\mathbf{Had}} \ket{0} = \tfrac{1}{\sqrt{n}} \sum_{j=0}^{n-1} \ket{j}$ by a Hadamard operation on each side and extracting the average of the diagonal blocks, we obtain
\begin{equation}
  \left( \bra{0}\operatorname{\mathbf{Had}} \otimes I \right) M^{-1} \left( \operatorname{\mathbf{Had}}\ket{0} \otimes I \right)
  = \frac{1}{n} \sum_{j=0}^{n-1} \left( I - \omega_n^{-j} A \right)^{-1}
  \approx \Pi_{\mathrm{in}},
  \label{eq:hadamard-sandwich}
\end{equation}
so that the composite operation block encodes the Riesz projector. Equivalently, in the matrix representation, the Hadamard-conjugated inverse is an $n \times n$ array of blocks,
\begin{equation}
  \left( \operatorname{\mathbf{Had}} \otimes I \right) M^{-1} \left( \operatorname{\mathbf{Had}} \otimes I \right)
  \approx \begin{bmatrix}
    \Pi_{\mathrm{in}} & \cdot & \cdots & \cdot \\
    \cdot & \cdot & \cdots & \cdot \\
    \vdots & \vdots & \ddots & \vdots \\
    \cdot & \cdot & \cdots & \cdot
  \end{bmatrix},
  \label{eq:corner}
\end{equation}
with the eigenprojection $\Pi_{\mathrm{in}}$ appearing in the top-left block and all remaining blocks left unspecified.

Let us examine the normalization factor incurred by this approach. Using QSVT, each shifted resolvent can be block encoded with normalization factor $\operatorname{\pmb{\Theta}}\!\left( \norm{(I - \omega_n^{-j} A)^{-1}} \right)$ to accuracy $\epsilon$, using $\operatorname{\pmb{\Theta}}\!\left( \norm{(I - \omega_n^{-j} A)^{-1}} \log(1/\epsilon) \right)$ queries to the input block encoding of $A$. Taking the linear combination then yields a block encoding of $\Pi_{\mathrm{in}}$ with normalization factor $\operatorname{\pmb{\Theta}}(\alpha_{\mathrm{res}})$ at the same query complexity, where 
\begin{equation}
    \alpha_{\mathrm{res}}=\max_{\theta \in [0, 2\pi)} \norm{\left(I - \mathrm{e}^{-\mathrm{i}\theta} A\right)^{-1}}.
\end{equation}
This can be far larger than the desired value $\norm{\Pi_{\mathrm{in}}}$: already when $A$ is normal, $\alpha_{\mathrm{res}}=1/\delta$ scales as the inverse distance between the spectrum of $A$ and the unit circle, whereas $\norm{\Pi_{\mathrm{in}}} = 1$. See~\append{resolvent} for details. As noted earlier, this inflated normalization factor increases the cost of the resulting quantum algorithm, especially when the eigenprojection is used as a subroutine within a larger computation, such as in solving the Riccati equations discussed in this paper.

%%%%%%%%%%%%%%%%%%%%%%%%%%%%%%%%%%%%%%%%%%%%%%%%%%%%%%%%%%%%%%%%%%%%%%%%%%%%%%
\section{Two-sided block preconditioning of resolvent integral}
\label{sec:precond}
The inefficiency of the standard approach can be traced to how little it uses of the matrix $M^{-1}$. As shown above, the Hadamard conjugation exposes only a single block, the eigenprojection $\Pi_{\mathrm{in}}$, while every remaining block is left undetermined by the construction. A valid block encoding must nonetheless accommodate all of these blocks, and normalizing the entire matrix forces the large factor $\alpha_{\mathrm{res}}$, even though only the top-left block is the oblique eigenprojection of interest.

Our key insight is that these blocks are not, in fact, arbitrary. Rather than conjugating $M^{-1}$ by the Hadamard operation, which extracts only the average of the diagonal blocks, we conjugate by the quantum Fourier transform. This resolves $M^{-1}$ into a complete
array of blocks,
\begin{equation}
  \left( \operatorname{\mathbf{QFT}} \otimes I \right) M^{-1} \left( \operatorname{\mathbf{QFT}}^\dagger \otimes I \right)
  \;=\;
\begin{bmatrix}
  S_0     & S_{n-1}  & S_{n-2}  & \cdots & S_1      \\
  S_1     & S_0      & S_{n-1}  & \cdots & S_2      \\
  S_2     & S_1      & S_0      & \cdots & S_3      \\
  \vdots  & \vdots   & \vdots   & \ddots & \vdots   \\
  S_{n-1} & S_{n-2}  & S_{n-3}  & \cdots & S_0
\end{bmatrix}
  \label{eq:qft-identity}
\end{equation}
in which every block is explicitly determined: the $(k,l)$ block depends only on the difference $k - l$ and is given by the discrete Fourier transform of the shifted resolvent 
\begin{equation}
  S_m \;=\; \frac{1}{n} \sum_{j=0}^{n-1} \omega_n^{jm} \left( I - \omega_n^{-j} A \right)^{-1}.
  \label{eq:fourier-mode}
\end{equation}
In particular, the top-left block $S_0=\frac{1}{n} \sum_{j=0}^{n-1} \left( I - \omega_n^{-j} A \right)^{-1}$ approximately recovers the eigenprojection $\Pi_{\mathrm{in}}$, exactly as in the Hadamard construction, but now the surrounding blocks carry structure we can exploit rather than merely accommodate.

To make this structure explicit, we split $A$ across the two complementary invariant subspaces. Writing $\Pi_{\mathrm{out}} = I - \Pi_{\mathrm{in}}$ for the complementary eigenprojection, we define
\begin{equation}
  A_{\mathrm{in}} \;=\; A\,\Pi_{\mathrm{in}}, \qquad A_{\mathrm{out}} \;=\; A\,\Pi_{\mathrm{out}},
  \label{eq:compressions}
\end{equation}
so that $A = A_{\mathrm{in}} + A_{\mathrm{out}}$, with $A_{\mathrm{in}}$ supported on the image of eigenprojection $\mathcal{H}_{\mathrm{in}} = \operatorname{\mathbf{Im}}(\Pi_{\mathrm{in}})$ and $A_{\mathrm{out}}$ on $\mathcal{H}_{\mathrm{out}} = \operatorname{\mathbf{Im}}(\Pi_{\mathrm{out}})$. By construction, $A_{\mathrm{in}}$ has all its eigenvalues inside the unit circle, hence spectral radius less than one. The restriction of $A$ to $\mathcal{H}_{\mathrm{out}}$ has all its eigenvalues outside the unit circle and is thus invertible on that subspace; we write $A_{\mathrm{out}}^{-1}$ for the inverse of this restriction, so that $A_{\mathrm{out}}^{-1}$ has spectral radius less than one on $\mathcal{H}_{\mathrm{out}}$, extended by zero on $\mathcal{H}_{\mathrm{in}}$.

We can now evaluate the discrete Fourier transform in closed form. On each subspace, the shifted resolvent expands as a geometric series: in nonnegative powers of $A_{\mathrm{in}}$ on $\mathcal{H}_{\mathrm{in}}$, and in negative powers of $A_{\mathrm{out}}$ on $\mathcal{H}_{\mathrm{out}}$. Carrying out the discrete Fourier sum then yields, for $0 \le m \le n-1$,
\begin{equation}
  S_m \;=\;
  \begin{cases}
    A_{\mathrm{in}}^{\,m}\left(I - A_{\mathrm{in}}^{\,n}\right)^{-1}
      & \text{on } \mathcal{H}_{\mathrm{in}}, \\[4pt]
    -\,A_{\mathrm{out}}^{\,m-n}\left(I - A_{\mathrm{out}}^{-n}\right)^{-1}
      & \text{on } \mathcal{H}_{\mathrm{out}}.
  \end{cases}
  \label{eq:mode-closed-form}
\end{equation}
Since $\Pi_{\mathrm{in}}$ acts as the identity on $\mathcal{H}_{\mathrm{in}}$ and as zero on $\mathcal{H}_{\mathrm{out}}$, the deviation of the zeroth mode from the oblique eigenprojection is
\begin{equation}
  S_0 - \Pi_{\mathrm{in}} \;=\;
  \begin{cases}
    A_{\mathrm{in}}^{\,n}\left(I - A_{\mathrm{in}}^{\,n}\right)^{-1}
      & \text{on } \mathcal{H}_{\mathrm{in}}, \\[4pt]
    -\,A_{\mathrm{out}}^{-n}\left(I - A_{\mathrm{out}}^{-n}\right)^{-1}
      & \text{on } \mathcal{H}_{\mathrm{out}}.
  \end{cases}
  \label{eq:mode-zero-error}
\end{equation}
The correction terms $A_{\mathrm{in}}^{\,n}$ and $A_{\mathrm{out}}^{-n}$ are exponentially small in $n$, since both $A_{\mathrm{in}}$ and $A_{\mathrm{out}}^{-1}$ have spectral radius below one. Thus $S_0 = \Pi_{\mathrm{in}} + \operatorname{\mathbf{O}}(\norm{A_{\mathrm{in}}^{\,n}}\norm{\Pi_{\mathrm{in}}} + \norm{A_{\mathrm{out}}^{-n}}\norm{\Pi_{\mathrm{out}}})$ recovering $S_0 \approx \Pi_{\mathrm{in}}$, while the remaining are uniformly bounded as $S_m = \operatorname{\mathbf{O}}\!\left(\alpha_{\mathrm{in}}\norm{\Pi_{\mathrm{in}}} + \alpha_{\mathrm{out}}\norm{\Pi_{\mathrm{out}}}\right)$, where
$\alpha_{\mathrm{in}} = \max_{0 \le j \le n-1} \norm{A_{\mathrm{in}}^{\,j}}$ and
$\alpha_{\mathrm{out}} = \max_{1 \le k \le n} \norm{A_{\mathrm{out}}^{-k}}$ collect the
largest powers of $A_{\mathrm{in}}$ and $A_{\mathrm{out}}^{-1}$. We formally state and prove this discrete Fourier transform formula for matrix resolvent in~\thm{app-dft-resolvent}.

The block preconditioning technique we now describe was introduced in recent work on quantum linear system solvers, where it was employed to obtain an optimal query complexity of block encoding and improved complexity of initial state preparation~\cite{Low2026quantumlinearsystem}, and subsequently to develop solvers whose cost is governed by instance-dependent quantities beyond the worst-case condition number~\cite{li2025new, DLS2026}. At a high level, the idea is to block encode a \emph{scaling operator} and bundle it together with the input matrix, so that the composite block encoding represents a rescaled version of the original. A judicious choice of scaling operator leaves the condition number of the matrix unchanged, while amplifying selected blocks---equivalently, reducing their normalization factors. It is precisely this selective reduction of the normalization factor that we exploit below: when applied to the Fourier modes of the resolvent, it drives the normalization of the $\Pi_{\mathrm{in}}$ block encoding down from $\alpha_{\mathrm{res}}$ toward its optimal value $\norm{\Pi_{\mathrm{in}}}$.

We use a two-sided variant of the block preconditioning. Let $T_1$ and $T_2$ be scaling operators, diagonal in the computational basis, that act nontrivially only on the top-left entry,
\begin{equation}
  T_1 =
  \begin{bmatrix}
    t_1 & & & \\
        & 1 & & \\
        & & \ddots & \\
        & & & 1
  \end{bmatrix},
  \qquad
  T_2 =
  \begin{bmatrix}
    t_2 & & & \\
        & 1 & & \\
        & & \ddots & \\
        & & & 1
  \end{bmatrix},
  \qquad 0 < t_1, t_2 < 1.
  \label{eq:scaling-operators}
\end{equation}
Rather than inverting $M$ directly, we assemble a block encoding of the preconditioned matrix $\left( T_1 \otimes I \right)
  \left( \operatorname{\mathbf{QFT}} \otimes I \right)
  M
  \left( \operatorname{\mathbf{QFT}}^\dagger \otimes I \right)
  \left( T_2 \otimes I \right)$,
which is available at no additional query cost, since the quantum Fourier transform $\operatorname{\mathbf{QFT}}$ and the diagonal scalings $T_1, T_2$ can all be block encoded with normalization factor one. Applying QSVT to invert it then yields a block encoding of
\begin{equation}
  \left( T_2^{-1} \otimes I \right)
  \left( \operatorname{\mathbf{QFT}} \otimes I \right)
  M^{-1}
  \left( \operatorname{\mathbf{QFT}}^\dagger \otimes I \right)
  \left( T_1^{-1} \otimes I \right)
  \;=\;
\begin{bmatrix}
  \dfrac{1}{t_1 t_2}\, S_0 & \dfrac{1}{t_2}\, S_{n-1} & \dfrac{1}{t_2}\, S_{n-2} & \cdots & \dfrac{1}{t_2}\, S_1 \\[10pt]
  \dfrac{1}{t_1}\, S_1     & S_0                      & S_{n-1}                  & \cdots & S_2 \\[10pt]
  \dfrac{1}{t_1}\, S_2     & S_1                      & S_0                      & \cdots & S_3 \\[10pt]
  \vdots                   & \vdots                   & \vdots                   & \ddots & \vdots \\[10pt]
  \dfrac{1}{t_1}\, S_{n-1} & S_{n-2}                  & S_{n-3}                  & \cdots & S_0
\end{bmatrix}.
  \label{eq:preconditioned-inverse}
\end{equation}
This implies
\begin{equation}
\begin{aligned}
\alpha_{\mathrm{precond}}&=\norm{\left( T_2^{-1} \otimes I \right)
  \left( \operatorname{\mathbf{QFT}} \otimes I \right)
  M^{-1}
  \left( \operatorname{\mathbf{QFT}}^\dagger \otimes I \right)
  \left( T_1^{-1} \otimes I \right)} \\
&\le\;
\frac{1}{t_1 t_2}\,\norm{S_0}
\;+\; \frac{\sqrt{n-1}}{t_2}\,\max_{m}\norm{S_m}
\;+\; \frac{\sqrt{n-1}}{t_1}\,\max_{m}\norm{S_m}
\;+\; \norm{M^{-1}}\\
&=\operatorname{\mathbf{O}}\;
\left(\frac{\norm{\Pi_{\mathrm{in}}}}{t_1 t_2}
    \;+\; \left( \frac{\sqrt{n}}{t_1} + \frac{\sqrt{n}}{t_2} \right)
      \left( \alpha_{\mathrm{in}}\norm{\Pi_{\mathrm{in}}} + \alpha_{\mathrm{out}}\norm{\Pi_{\mathrm{out}}} \right)
    \;+\; \alpha_{\mathrm{res}}\right).
\end{aligned}
\label{eq:precond-norm-bound}
\end{equation}
Here, the four terms bound the four structural parts of the inverse matrix: the top-left entry $\tfrac{1}{t_1 t_2} S_0$, the remainder of the top row (scaled by $\tfrac{1}{t_2}$), the remainder of the left column (scaled by $\tfrac{1}{t_1}$), and the bottom-right $(n-1) \times (n-1)$ block of unscaled modes. Continuing, the bottom-right block has norm at most $\norm{M^{-1}} = \max_{0 \le j \le n-1} \norm{(I - \omega_n^{-j} A)^{-1}} \le \alpha_{\mathrm{res}}$ due to the resolvent bound. The top row and left column are controlled by the individual modes, which satisfy $\max_{m} \norm{S_m} = \operatorname{\mathbf{O}}(\alpha_{\mathrm{in}}\norm{\Pi_{\mathrm{in}}} + \alpha_{\mathrm{out}}\norm{\Pi_{\mathrm{out}}})$, while the top-left entry is exponentially close to $\norm{\Pi_{\mathrm{in}}}$. 

Quantities such as $\alpha_{\mathrm{in}}, \alpha_{\mathrm{out}}, \alpha_{\mathrm{res}}$ can all be controlled under a natural geometric assumption on the spectrum. Recall that the \emph{numerical range} of a matrix $B$ is the set
\begin{equation}
  \mathcal{W}(B) \;=\; \left\{\, \bra{\psi} B \ket{\psi} \;:\; \norm{\ket{\psi}} = 1 \,\right\},
  \label{eq:numerical-range}
\end{equation}
a compact convex subset of the complex plane. It may be regarded as a relaxed notion of the spectrum: it always contains the eigenvalues of $B$, and coincides with their convex hull when $B$ is normal, but is generally larger for a non-normal matrix. The numerical range is a standard tool for studying non-normal matrices~\cite{TrefethenEmbree05}, and has been widely used in previous quantum algorithms for non-normal inputs~\cite{QEVP, niying2026, wang2026lchm,Krovi2023improvedquantum}. Its usefulness here stems from the Crouzeix--Palencia bound~\cite{CrouzeixPalencia17}: for any function $f$ analytic on a neighborhood of $\mathcal{W}(B)$,
\begin{equation}
  \norm{f(B)} \;\le\; (1 + \sqrt{2}) \max_{z \in \mathcal{W}(B)} \abs{f(z)},
  \label{eq:crouzeix-palencia}
\end{equation}
which bounds the norm of a matrix function purely by the behavior of the scalar function $f$ over $\mathcal{W}(B)$. Very recently, it has been claimed that the above constant $1+\sqrt{2}$ can be improved to the optimal value $2$, resolving Crouzeix's conjecture~\cite{lorist2026crouzeix, numrange2spectral}. Using this improved constant would reduce the constant prefactor of our algorithm, but leave the asymptotic analysis unchanged.

Let us analyze the block preconditioning under the numerical-range assumption. Suppose the per-block numerical ranges are separated from the unit circle by a gap $\delta$,
\begin{equation}
  \mathcal{W}(A_{\mathrm{in}}) \subseteq \{\,z \colon |z| \le 1 - \delta \,\}, \qquad
  \mathcal{W}(A_{\mathrm{out}}^{-1}) \subseteq \{\, z \colon |z| \le 1 - \delta \,\}.
  \label{eq:nr-assumption}
\end{equation}
By the Crouzeix--Palencia bound applied on each subspace~\cite{CrouzeixPalencia17}, the powers are uniformly bounded: $\norm{A_{\mathrm{in}}^{\,j}} \le (1+\sqrt{2})(1-\delta)^{j}$, $  \norm{A_{\mathrm{out}}^{-k}} \le (1+\sqrt{2})(1-\delta)^{k}$, so that $\alpha_{\mathrm{in}}, \alpha_{\mathrm{out}} = \operatorname{\mathbf{O}}(1)$. Summing the geometric series and applying~\cite{CrouzeixPalencia17}, we bound the resolvent likewise:
\begin{equation}
  \alpha_{\mathrm{res}} \;=\; \operatorname{\mathbf{O}}\!\left( \frac{\norm{\Pi_{\mathrm{in}}}}{\delta} \right),
  \label{eq:alpha-res-nr}
\end{equation}
Finally, applying the Crouzeix--Palencia bound again, the discretization error $\norm{S_0 - \Pi_{\mathrm{in}}} = \operatorname{\mathbf{O}}((1-\delta)^{n}\norm{\Pi_{\mathrm{in}}})$ falls below $\epsilon$ as long as $n = \operatorname{\mathbf{O}}\!\left( \tfrac{1}{\delta}\log\tfrac{\norm{\Pi_{\mathrm{in}}}}{\epsilon} \right)$.
Substituting $\alpha_{\mathrm{in}}, \alpha_{\mathrm{out}} = \operatorname{\mathbf{O}}(1)$, $\alpha_{\mathrm{res}} = \operatorname{\mathbf{O}}(\norm{\Pi_{\mathrm{in}}}/\delta)$ and $n = \operatorname{\mathbf{O}}\!\left( \tfrac{1}{\delta}\log\tfrac{\norm{\Pi_{\mathrm{in}}}}{\epsilon} \right)$ with the balanced choice $t_1 = t_2 
\;=\;\operatorname{\pmb{\Theta}}\!\left( \sqrt{\delta} \right)$, the preconditioned inverse has norm at most 
\begin{equation}
  \alpha_{\mathrm{precond}}
  \;=\; \operatorname{\mathbf{O}}\!\left( \frac{\norm{\Pi_{\mathrm{in}}}}{\delta}
  \sqrt{\log\tfrac{\norm{\Pi_{\mathrm{in}}}}{\epsilon}} \right),
  \label{eq:alpha-precond}
\end{equation}
which sets the condition number for the QSVT inversion. Meanwhile, extracting the top-left entry and rescaling, the resulting block encoding of
$\Pi_{\mathrm{in}}$ has reduced normalization factor
\begin{equation}
  \alpha_{\Pi} 
  \;=\; t_1t_2\alpha_{\mathrm{precond}}
  \;=\; \operatorname{\mathbf{O}}\!\left( \norm{\Pi_{\mathrm{in}}} \sqrt{\log\tfrac{\norm{\Pi_{\mathrm{in}}}}{\epsilon}} \right).
  \label{eq:alpha-Pi-nr}
\end{equation}
Compared with the unpreconditioned algorithm, whose condition number is $\operatorname{\mathbf{O}}(\alpha_{\mathrm{res}}) = \operatorname{\mathbf{O}}(\norm{\Pi_{\mathrm{in}}}/\delta)$, the preconditioned inversion has condition number $\alpha_{\mathrm{precond}}=\operatorname{\mathbf{O}}\!\left( \tfrac{\norm{\Pi_{\mathrm{in}}}}{\delta}\sqrt{\log\tfrac{\norm{\Pi_{\mathrm{in}}}}{\epsilon}} \right)$, larger only by a factor $\sqrt{\log\tfrac{\norm{\Pi_{\mathrm{in}}}}{\epsilon}}$. The query complexity of the eigenprojection therefore remains almost the same as the standard approach, while the normalization factor is significantly reduced to nearly its optimal value $\norm{\Pi_{\mathrm{in}}}$. 

The numerical range enters only through the Crouzeix--Palencia bound, which makes it a specific instance of \emph{spectral set}. We say a compact set $\mathcal{K}$ is a spectral set with constant $c$ if it contains the spectrum of $B$ such that
\begin{equation}
    \norm{g(B)} \leq c \max_{z \in \mathcal{K}} \abs{g(z)}
\end{equation}
for every $g$ analytic near $\mathcal{K}$. For a normal matrix $B$, the spectrum is a spectral set with $c = 1$. For non-normal $B$, one may enlarge the set $\mathcal{K}$ or pay a larger constant $c$. Taking $g(z) = z^{j}$ on a spectral set inside $\{z\colon \abs{z} \leq 1-\delta\}$ gives the geometric decay $\norm{B^{j}} \leq c(1-\delta)^{j}$ that our analysis needs. So any such spectral sets for $A_{\mathrm{in}}$ and $A_{\mathrm{out}}^{-1}$ suffice, with $c$ entering the normalization factor linearly. For example, when $A$ is diagonalizable, the spectrum is a spectral set with $c$ equal to the eigenbasis condition number, and no numerical-range assumption is needed. We defer this general treatment to~\append{precond}, with the quantum oblique eigenprojection algorithm established in~\thm{app-disk-main} for the unit disk.

%%%%%%%%%%%%%%%%%%%%%%%%%%%%%%%%%%%%%%%%%%%%%%%%%%%%%%%%%%%%%%%%%%%%%%%%%%%%%%
\section{Quantum oblique eigenprojection for more general regions}
\label{sec:regions}
In the previous section, we described a quantum algorithm for the oblique eigenprojection onto the generalized eigensubspaces of $A$ associated with eigenvalues inside the unit disk. Our approach is based on a two-sided block preconditioning of the matrix resolvent, which is easiest to explain for the unit disk, where the discrete Fourier transform of the resolvent admits a closed-form expression. However, the same approach can be adapted to more general regions of the complex plane, allowing us to project onto the eigensubspace associated with any region whose boundary separates the spectrum of $A$ into $\operatorname{\mathbf{Spec}}_{\mathrm{in}}(A)$ and $\operatorname{\mathbf{Spec}}_{\mathrm{out}}(A)$. We describe this extension in the present section.

Let $\mathcal{C}$ be a closed curve enclosing $\operatorname{\mathbf{Spec}}_{\mathrm{in}}(A)$ but not $\operatorname{\mathbf{Spec}}_{\mathrm{out}}(A)$, and disjoint from $\operatorname{\mathbf{Spec}}(A)$. The oblique eigenprojection onto $\operatorname{\mathbf{Spec}}_{\mathrm{in}}(A)$ is the Riesz projector $\Pi_{\mathrm{in}} = \frac{1}{2\pi\mathrm{i}} \int_{\mathcal{C}} (w I - A)^{-1}\, \mathrm{d}w$. Let $\mathcal{S}_{\mathrm{in}}$ and $\mathcal{S}_{\mathrm{out}}$ be subsets of the complex plane satisfying: 
\begin{enumerate}
  \item[(i)] $\operatorname{\mathbf{Spec}}_{\mathrm{in}}(A) \subseteq \mathcal{S}_{\mathrm{in}}$ and $\operatorname{\mathbf{Spec}}_{\mathrm{out}}(A) \subseteq \mathcal{S}_{\mathrm{out}}$;
  \item[(ii)] $\mathcal{S}_{\mathrm{in}} \cap \mathcal{S}_{\mathrm{out}} = \varnothing$;
  \item[(iii)] $\mathcal{S}_{\mathrm{in}}$ and $\mathcal{S}_{\mathrm{out}}$ are open.
\end{enumerate}
Now let $f$ be analytic on $\mathcal{S}_{\mathrm{in}} \cup \mathcal{S}_{\mathrm{out}}$ and separate the two components in modulus,
\begin{equation}
  |f(w)| < 1 \ \ (w \in \mathcal{S}_{\mathrm{in}}), \qquad
  |f(w)| > 1 \ \ (w \in \mathcal{S}_{\mathrm{out}}).
  \label{eq:separation}
\end{equation}
By the holomorphic functional calculus, the eigenprojection of $A$ for the general region then coincides with the eigenprojection of $f(A)$ for the unit disk:
\begin{equation}
  \Pi_{\mathrm{in}}
  \;=\; \frac{1}{2\pi\mathrm{i}} \int_{\mathcal{C}} (w I - A)^{-1}\, \mathrm{d}w
  \;=\; \frac{1}{2\pi\mathrm{i}} \int_{|z|=1} \bigl(z I - f(A)\bigr)^{-1}\, \mathrm{d}z.
  \label{eq:cov}
\end{equation}
One can view this as a change-of-variable formula for resolvent integration, though it is an identity of projectors rather than a literal substitution. Specifically, both integrals compute the Riesz projector, the left in the spectral plane of $A$ and the right in that of $f(A)$, and these agree because $A$ and $f(A)$ share the same invariant subspaces while $f$ carries $\operatorname{\mathbf{Spec}}_{\mathrm{in}}(A)$ inside the unit circle and $\operatorname{\mathbf{Spec}}_{\mathrm{out}}(A)$ outside by the spectral mapping theorem~\cite[Theorem 12.7.1]{humpherys2017}\cite[Theorem 8.3]{roman2008advanced}. The general-region problem thus reduces to the unit-disk construction applied to $f(A)$ in place of $A$. 

Crucially, $f$ only needs to separate $\mathcal{S}_{\mathrm{in}}$ and $\mathcal{S}_{\mathrm{out}}$ by modulus---it need not be injective, conformal, or boundary-preserving---so we are free to choose it to be a low-degree rational function, making $f(A)$ inexpensive to block encode. This suggests separating the design of the algorithm from the construction of $f$: we first treat $f(A)$ as an oracle, assuming a block encoding of it is given, and apply the oblique eigenprojection to the unit disk. In this oracle model, the construction of \sec{precond} applies verbatim with $A$ replaced by $f(A)$, yielding a block encoding of $\Pi_{\mathrm{in}}$ with normalization factor close to $\norm{\Pi_{\mathrm{in}}}$ and query complexity governed by the gap of $f(A)$ to the unit circle. The oracle model requires only a relatively \emph{weak} analyticity hypothesis. Because $f(A)$ is given as input rather than approximated, $f$ need only be analytic on neighborhoods of the two block numerical ranges $\mathcal{S}_{\mathrm{in}}=\mathcal{W}(A_{\mathrm{in}})$ and $\mathcal{S}_{\mathrm{out}}=\mathcal{W}(A_{\mathrm{out}})$ \emph{separately}, and not on the full numerical range $\mathcal{W}(A)$. This distinction is important: $\mathcal{W}(A)$ is convex and generally strictly larger than the two block ranges, so a singularity of $f$ may lie inside $\mathcal{W}(A)$ while remaining outside of each block range. This makes the oracle model more flexible, as it admits any separating map whose singularities fall between the blocks, including maps that an explicit polynomial approximation of $f(A)$ over the whole numerical range could not accommodate. We formally present this algorithm in~\thm{app-region-main}.

We now turn to constructing a block encoding of $f(A)$ from a block encoding of $A$. Let $\mathcal{K}$ be a spectral set for $A$, meaning that $\norm{g(A)}\le c \sup_{z \in \mathcal{K}} |g(z)|$ for every function $g$ analytic on a neighborhood of $\mathcal{K}$, where $c$ is a constant depending only on $\mathcal{K}$ and $A$. Let $f$ be analytic on $\mathcal{K}$ and separate the spectrum,
\begin{equation}
  |f(w)| < 1 \ \ (w \in \operatorname{\mathbf{Spec}}_{\mathrm{in}}(A)), \qquad
  |f(w)| > 1 \ \ (w \in \operatorname{\mathbf{Spec}}_{\mathrm{out}}(A)).
  \label{eq:separation-spectrum}
\end{equation}
Approximating $f$ by a polynomial $p$ and applying the definition of spectral set to $g = f - p$, we obtain the error bound $\norm{f(A) - p(A)} \;\le\; c \sup_{z \in \mathcal{K}} |f(z) - p(z)|$, so a polynomial that approximates $f$ uniformly on $\mathcal{K}$ yields a block encoding of $f(A)$ to matching accuracy. As mentioned previously, a common choice is the numerical range $\mathcal{S} = \mathcal{W}(A)$, which is a spectral set for every $A$ with the dimension-independent constant $c = 1 + \sqrt{2}$ by the Crouzeix--Palencia theorem~\cite{CrouzeixPalencia17}. With this choice, $f$ need only be analytic on a neighborhood of $\mathcal{W}(A)$. The polynomial $p$ is then applied by the QEVT algorithm~\cite[Section 8 and Appendix B]{QEVP} at a cost of $\operatorname{\mathbf{O}}(d)$ queries to the block encoding of $A$, where $d$ is the degree of $p$. The required degree can be obtained from the Bernstein--Walsh theorem~\cite[Section III.3]{suetin1998}. In particular, if $f$ is analytic on a neighborhood of $\mathcal{W}(A)$, the best degree-$d$ polynomial approximation on $\mathcal{W}(A)$ decays geometrically, so that $d = \operatorname{\mathbf{O}}(\log(1/\varepsilon))$ suffices for accuracy $\varepsilon$, with the rate set by how far the nearest singularity of $f$ lies beyond $\mathcal{W}(A)$. This generalizes the approximation of a function on $[-1,1]$ through its analyticity within a Bernstein ellipse~\cite[Theorem 8.1 and 8.2]{trefethen2019approximation}. Finally, the separating map partitions the spectral set into $\mathcal{S}_{\mathrm{in}} = \{z : |f(z)| < 1\}$ and $\mathcal{S}_{\mathrm{out}} = \{z : |f(z)| > 1\}$, which are open, disjoint, and cover $\operatorname{\mathbf{Spec}}_{\mathrm{in}}(A)$ and $\operatorname{\mathbf{Spec}}_{\mathrm{out}}(A)$ respectively. Hence, the block encoding of $f(A)$ can be supplied to the unit-disk construction to produce $\Pi_{\mathrm{in}}$.

It remains to construct a separating map $f$ for a given target region. For a simply connected region, the Riemann mapping theorem guarantees a conformal map onto the unit disk, which separates the region from its complement and extends across an analytic boundary; more generally, one may relax the requirement and seek only a rational function whose modulus separates the two spectral parts, with poles placed outside the numerical range so that analyticity on $\mathcal{W}(A)$ is preserved. In practice, the regions arising in applications are bounded by lines and circular arcs---half-planes, disks, strips, and sectors---for which a separating map is built directly from elementary M\"obius factors, with no approximation required. Here, we work out the half-plane case in detail, as it underlies our applications to solving the Riccati and Sylvester equations, where the spectrum is split across a line.

To this end, suppose the spectrum is separated by the imaginary axis, with the target part $\operatorname{\mathbf{Spec}}_{\mathrm{in}}(A)$ in the open right half-plane $\Re(z) > 0$ and $\operatorname{\mathbf{Spec}}_{\mathrm{out}}(A)$ in the open left half-plane $\Re(z) < 0$. A separating map is then given by the degree-two rational function
\begin{equation}
  f(z) = \frac{(z-1)^2}{(z+1)^2},
  \label{eq:halfplane-map}
\end{equation}
with a double zero at $z = 1$ and a double pole at $z = -1$. $f$ sends the right half-plane into the unit disk, the imaginary axis onto the unit circle, and the left half-plane outside, separating the spectrum as required.

Assume for simplicity that a block encoding of $A$ is given with $\norm{A} \le \tfrac{1}{2}$. Then $A + I$ is well-conditioned: its singular values lie in $[\tfrac{1}{2}, \tfrac{3}{2}]$, since $\norm{A + I} \le 1 + \norm{A} \le \tfrac{3}{2}$ and $\norm{(A + I)^{-1}} \le (1 - \norm{A})^{-1} \le 2$, giving a condition number at most $3$. QSVT-based matrix inversion therefore block encodes $(A + I)^{-1}$ to error $\varepsilon$ with normalization $\operatorname{\mathbf{O}}(1)$ using $\operatorname{\mathbf{O}}(\log(1/\varepsilon))$ queries to the block encoding of $A$. Applying $(A + I)^{-1}$ twice and multiplying by $(A - I)^2$ yields a block encoding of $f(A)=\frac{(A-I)^2}{(A+I)^2}$ with normalization $\norm{f(A)} = \operatorname{\mathbf{O}}(1)$, again at $\operatorname{\mathbf{O}}(\log(1/\varepsilon))$ query cost. Thus, under $\norm{A} \le \tfrac{1}{2}$, the map $f$ is applied directly with only a constant-factor overhead in normalization and a logarithmic-factor query cost.

Additionally, for the reduction to be efficient, the map $f$ must not crush the gap. Suppose the spectrum is separated from the imaginary axis by $\delta$, in the sense that $|\Re(\lambda)| \ge \delta$ for every $\lambda \in \operatorname{\mathbf{Spec}}(A)$. Under the normalization $\norm{A} \le \tfrac{1}{2}$, the image spectrum $\operatorname{\mathbf{Spec}}(f(A))$ is then separated from the unit circle by a gap $\delta_f = \operatorname{\pmb{\Theta}}(\delta)$, so the map preserves the spectral gap up to a constant factor. The same holds at the level of the numerical range: if the per-block numerical ranges $\mathcal{W}(A_{\mathrm{in}})$ and $\mathcal{W}(A_{\mathrm{out}})$ are separated from the imaginary axis by $\delta$, then $\abs{f}$ is bounded by $1 - \operatorname{\pmb{\Theta}}(\delta)$ on $\mathcal{W}(A_{\mathrm{in}})$ and by $1 + \operatorname{\pmb{\Theta}}(\delta)$ from below on $\mathcal{W}(A_{\mathrm{out}})$, so applying the Crouzeix--Palencia bound to the composite function over each per-block numerical range yields the desired estimates with a comparable gap. More generally, the same argument applies to any spectral sets for $A_{\mathrm{in}}$ and $A_{\mathrm{out}}$ separated from the imaginary axis by $\delta$, with the Crouzeix--Palencia constant replaced by the spectral-set constant $c$. Consequently the corresponding cost bounds apply to $f(A)$ with $\delta$ replaced by $\delta_f = \operatorname{\pmb{\Theta}}(\delta)$. See~\fig{halfplane-separation} for an illustration. We defer a detailed treatment to~\append{region}.

\begin{figure}[t]
\centering
\includegraphics[scale=0.95]{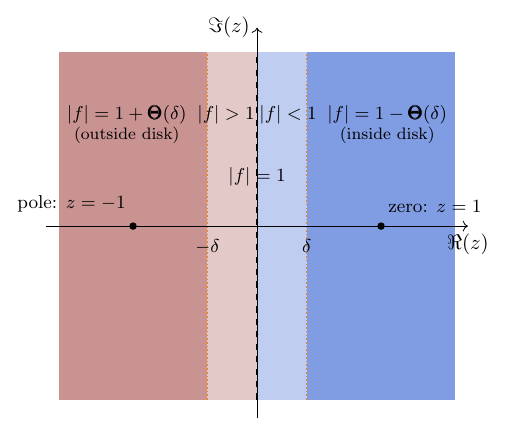}
\caption{The rational map $f(z) = (z-1)^2/(z+1)^2$ separates the complex plane by the imaginary axis. The right half-plane (blue) is mapped inside the unit circle ($|f| < 1$) and the left half-plane (red) outside ($|f| > 1$), with the imaginary axis ($|f| = 1$) mapped onto the unit circle. The double zero at $z = 1$ and double pole at $z = -1$ lie in the right and left half-planes respectively. Darker shading indicates the gapped regions $|\Re(z)| \geq \delta$, on which $|f| = 1 - \operatorname{\pmb{\Theta}}(\delta)$ (right) and $|f| = 1 + \operatorname{\pmb{\Theta}}(\delta)$ (left), so the gap to the unit circle is preserved, $\delta_f = \operatorname{\pmb{\Theta}}(\delta)$.}
\label{fig:halfplane-separation}
\end{figure}

%%%%%%%%%%%%%%%%%%%%%%%%%%%%%%%%%%%%%%%%%%%%%%%%%%%%%%%%%%%%%%%%%%%%%%%%%%%%%%
\section{Applications}
\label{sec:app}
The oblique eigenprojection developed above is a general-purpose primitive, which we now instantiate in three settings. In each, our two-sided block preconditioning reduces the normalization factor of the block-encoded projector to near its optimal value, and this reduction carries through to a lower overall cost, sharpening a recent quantum algorithm.

\subsection{Eigenstate preparation for complex spectra}
\label{sec:app-eigenstate}

Preparing an eigenstate of a given operator is among the most fundamental tasks in quantum computation, which is relevant for studying static properties of the underlying physical system. For a Hermitian Hamiltonian the eigenstates are orthogonal, and the task is well understood. Indeed, given an initial state with nonnegligible overlap on the target eigenspace and a spectral gap $\delta$ separating it from the rest of the spectrum, one can prepare the target eigenstate efficiently~\cite{lin2020}. The same task is natural for an arbitrary, non-normal matrix $A$, whose eigenvalues may lie anywhere in the complex plane and whose eigenspaces are oblique. Such matrices arise in simulating non-Hermitian physics~\cite{Ashida20}, in transcorrelated approaches to quantum chemistry~\cite{McArdle20}, and in studying irreversible Markov chains~\cite{banerjee2026}, among other settings. Here the goal is similar: given a target eigenvalue (or cluster) separated from the rest of the spectrum, and an initial state with nonnegligible overlap on the corresponding invariant subspace, one aims to prepare the associated eigenstate (or project the initial state onto the associated eigensubspaces).

When the spectrum of $A$ lies on the real line, this can be accomplished by the QEVT algorithm of Low and Su~\cite{QEVP}, which applies a polynomial transformation to the eigenvalues of $A$ through the Chebyshev and Faber history states. Choosing the polynomial to be close to one near the target eigenvalue and close to zero elsewhere, as in \fig{indicator-poly}, prepares the corresponding eigenstate at a cost scaling as $1/\delta$ in the spectral gap, recovering the nearly optimal scaling of ground-state preparation for Hermitian Hamiltonians. This filtering is effective because the target eigenvalues lie on a line: a polynomial can approximate an indicator along the real axis. However, for eigenvalues distributed in the complex plane, as depicted in \fig{target-region}, there is an intrinsic obstruction. A polynomial---indeed any analytic function---cannot approximate the indicator of a region uniformly in all directions: by the maximum modulus principle its magnitude attains its maximum on the boundary, so it cannot be close to one inside a region and close to zero just outside it~\cite{cook2020}. 
Preparing an eigenstate of a matrix with complex eigenvalues therefore calls for a genuinely different mechanism. To this end, one may instead form the Riesz projector through the resolvent integral of \sec{riesz}, as done by Rodenas-Ruiz, Zhao, and Lee~\cite{RZL2026}; this bypasses the analyticity requirement and applies to complex spectra. Its cost, however, scales as $1/\delta^2$: the block-encoded projector carries a normalization factor $\alpha_{\mathrm{res}} = \operatorname{\mathbf{O}}(\norm{\Pi_{\mathrm{in}}}/\delta)$ set by the resolvent norm on the contour, and this leads the $1/\delta$ factor to enter the eigenstate-preparation cost twice---once in the $\operatorname{\mathbf{O}}(1/\delta)$ queries needed to block encode the projector, and again in the $\operatorname{\mathbf{O}}(1/\delta)$ rounds of amplitude amplification needed to project the initial state onto the target subspace.

Our algorithm applies oblique eigenprojection at a cost matching the result of Low and Su while handling complex spectra. For simplicity, suppose the target eigenvalue or cluster is enclosed by a disk $\{z:\, |z - c| < \rho \,\}$ that separates it from the rest of the spectrum with a gap $\delta$, where $c$ and $\rho$ are here considered constants, independent of $\delta$. The separating map is the linear function $f(z) = (z - c)/\rho$, so $f(A) = (A - cI)/\rho$ is obtained from the block encoding of $A$ by a shift and a rescaling, at only a constant-factor overhead in normalization for a center $|c| = \operatorname{\mathbf{O}}(1)$. Feeding $f(A)$ into the unit-disk construction of \sec{precond} block encodes the eigenprojection $\Pi_{\mathrm{in}}$ onto the enclosed invariant subspace with a query complexity scaling as $1/\delta$ and, after two-sided block preconditioning, a normalization factor close to its optimal value $\norm{\Pi_{\mathrm{in}}}$. Because the normalization no longer grows with the inverse gap, the amplitude amplification that projects the initial state onto the target subspace needs only $\widetilde{\operatorname{\mathbf{O}}}(1)$ rounds, and the eigenstate is prepared in $\widetilde{\operatorname{\mathbf{O}}}(1/\delta)$ total queries. This quadratically improves the resolvent-integral approach and matches the scaling in $1/\delta$ of Low and Su for nonnormal matrices with real spectra---in other words, preparing an eigenstate associated with a complex eigenvalue is no more expensive, in $\delta$, than filtering a real one. See~\append{app-eigenstate} for details.

\subsection{Continuous-time algebraic Riccati equations}
\label{sec:app-care}
As a second application we consider the continuous-time algebraic Riccati equation,
\begin{equation}
  X Q X - X P - P^\dagger X - R = 0,
\end{equation}
where $P, Q, R$ are square matrices of the same size with $Q, R$ Hermitian, and the matrix $X$ is the sought solution. Riccati equations are central to optimal control, filtering, and stability theory, and in quantum chemistry they arise directly in ring coupled-cluster doubles and the random-phase approximation~\cite{RZL2026}. The stabilizing solution of the Riccati equation is characterized through an invariant subspace of the associated Hamiltonian matrix
\begin{equation}
  J = \begin{bmatrix} P & -Q \\ -R & -P^\dagger \end{bmatrix},
\end{equation}
whose spectrum is symmetric about the imaginary axis. Splitting the spectrum along the imaginary axis into a stable part ($\Re (\lambda) < 0$) and an antistable part ($\Re (\lambda) > 0$), the stable invariant subspace is spanned by the columns of $\left[\begin{smallmatrix} U \\ V \end{smallmatrix}\right]$, and the stabilizing solution is recovered as $X = V U^{-1}$. Solving the Riccati equation thus reduces to projecting onto an invariant subspace of a non-normal matrix split by the imaginary axis---an oblique eigenprojection of exactly the kind our primitive provides.

Recently, Rodenas-Ruiz, Zhao, and Lee~\cite{RZL2026} solve the Riccati equation on a quantum computer along these lines. The stable invariant subspace spanned by $\left[\begin{smallmatrix} I \\ X \end{smallmatrix}\right]$ is the kernel of the projector $\Pi_{\mathrm{in}}$ onto the antistable subspace, so writing $\Pi_{\mathrm{in}} = \left[\begin{smallmatrix} \Pi_1 & \Pi_2 \end{smallmatrix}\right]$ gives $\Pi_1 + \Pi_2 X = 0$ and hence $X = -\Pi_2^{+}\Pi_1$ where $(\cdot)^+$ denotes the pseudoinverse. They therefore block encode the antistable projector $\Pi_{\mathrm{in}}$ through the resolvent integral of \sec{riesz}, and then extract $X$ from its sub-blocks by implementing $\Pi_2^{+}$ through QSVT. Writing $\delta = \min_\lambda \abs{\Re(\lambda)}$ for the gap from the imaginary axis, the block-encoded projector carries a normalization factor $\alpha_{\mathrm{res}} = \operatorname{\mathbf{O}}(\norm{\Pi_{\mathrm{in}}}/\delta)$ set by the resolvent norm on the contour.  Beyond the $\operatorname{\mathbf{O}}(1/\delta)$ queries needed to block encode the projector, this normalization enters the extraction twice more---in the condition number of the pseudoinverse that forms $X = -\Pi_2^{+}\Pi_1$, and in the normalization of the resulting solution---so that preparing the Riccati solution costs $\widetilde{\operatorname{\mathbf{O}}}(1/\delta^3)$ queries in the gap ignoring polylogarithmic factors.

Applying our oblique eigenprojection removes the two normalization-driven factors. Since the spectrum is split by the imaginary axis, we use the half-plane construction of \sec{regions}, with a degree-two rational map sending the antistable half-plane into the unit disk and the imaginary axis onto the unit circle, and the per-block numerical-range hypotheses of \sec{precond} inherited with a transformed gap $\delta_f = \operatorname{\pmb{\Theta}}(\delta)$. Two-sided block preconditioning then block encodes the projector with a normalization factor close to its optimal value $\norm{\Pi_{\mathrm{in}}}$, in place of $\alpha_{\mathrm{res}} = \operatorname{\mathbf{O}}(\norm{\Pi_{\mathrm{in}}}/\delta)$. The pseudoinverse condition number and the solution normalization then no longer grow with the inverse gap, so the two extra factors collapse to $\widetilde{\operatorname{\mathbf{O}}}(1)$ and the Riccati solution is prepared in $\widetilde{\operatorname{\mathbf{O}}}(1/\delta)$ queries---a cubic improvement over the standard resolvent-integral solver of~\cite{RZL2026}. We describe this improvement in detail in~\append{app-riccati}.

\subsection{Ordinary Sylvester equations}
\label{sec:app-sylvester}
As a final application we consider the Sylvester equation
\begin{equation}
  A X + X B = C,
\end{equation}
where $A, B, C$ are given matrices of the same size and one seeks to construct a block-encoding of the solution $X$ with normalization within a constant factor of the optimal value $\norm{X}$~\cite{somma2025}. Sylvester equations arise throughout control theory, model reduction, and the numerical solution of matrix and differential equations. As with the Riccati equation, the solution is encoded in an invariant subspace of an associated block matrix:
\begin{equation}
  M = \begin{bmatrix} A & C \\ 0 & -B \end{bmatrix}.
\end{equation}
Its spectrum is the union of the spectra of $A$ and $-B$, and it has two invariant subspaces: the $A$ branch is spanned by $\left[\begin{smallmatrix} I \\ 0 \end{smallmatrix}\right]$, while the $-B$ branch is spanned by $\left[\begin{smallmatrix} -X \\ I \end{smallmatrix}\right]$, since the invariance condition $M \left[\begin{smallmatrix} -X \\ I \end{smallmatrix}\right] = \left[\begin{smallmatrix} -X \\ I \end{smallmatrix}\right](-B)$ is precisely the Sylvester equation $AX + XB = C$. Stacking these bases into $S = \left[\begin{smallmatrix} I & -X \\ 0 & I \end{smallmatrix}\right]$, the oblique projector onto the $A$ branch along the $B$ branch is $\Pi_{\mathrm{in}} = S \left[\begin{smallmatrix} I & 0 \\ 0 & 0 \end{smallmatrix}\right] S^{-1} = \left[\begin{smallmatrix} I & X \\ 0 & 0 \end{smallmatrix}\right]$, so the solution $X$ appears directly in its off-diagonal block. When the spectra of $A$ and $-B$ lie in opposite open half-planes---separated, after a shift and rotation, by the imaginary axis---this is again an oblique eigenprojection handled by our algorithm, now with $X$ read off from a single block of the projector. Unlike the Riccati equation, the extraction requires no further pseudoinverse.

Recently, Wang and Liu~\cite{WL2026} give a quantum algorithm for the Sylvester equation along these lines.   We compare against their direct augmented method~\cite[Remark 6.4]{WL2026}, which block encodes the projector through the resolvent integral of \sec{riesz} applied to the augmented matrix $M$, so that---exactly as for the eigenstate-preparation and Riccati applications---the block-encoded projector carries a normalization factor $\alpha_{\mathrm{res}} = \operatorname{\mathbf{O}}(\norm{\Pi_{\mathrm{in}}}/\delta)$ set by the resolvent norm on the contour with $\delta$ the gap from the imaginary axis. Since the solution is the off-diagonal block of the projector, no pseudoinverse is needed, and this normalization enters the cost only once (for amplifying the block-encoding), on top of the $\operatorname{\mathbf{O}}(1/\delta)$ queries needed to block encode the projector, so that preparing the solution costs $\widetilde{\operatorname{\mathbf{O}}}(1/\delta^2)$ queries in the gap. 

Our oblique eigenprojection applies directly and improves on this complexity. Using the half-plane construction of \sec{regions} to project onto the invariant subspace of the $A$ branch, two-sided block preconditioning replaces the normalization factor $\alpha_{\mathrm{res}} = \operatorname{\mathbf{O}}(\norm{\Pi_{\mathrm{in}}}/\delta)$ with one close to its optimal value $\norm{\Pi_{\mathrm{in}}} = \sqrt{\norm{X}^2
+1}$ and the extra inverse-gap factor collapses to $\widetilde{\operatorname{\mathbf{O}}}(1)$. Thus, one directly obtains a near-normalized block-encoding of $X$, and the $\operatorname{\mathbf{O}}(1/\delta)$ cost of amplifying the resulting block-encoding is avoided. The total query complexity to obtain the solution is $\widetilde{\operatorname{\mathbf{O}}}(1/\delta)$ queries---a quadratic improvement over the direct augmented method of~\cite[Remark 6.4]{WL2026}. We discuss this in detail in~\append{app-sylvester}.

%%%%%%%%%%%%%%%%%%%%%%%%%%%%%%%%%%%%%%%%%%%%%%%%%%%%%%%%%%%%%%%%%%%%%%%%%%%%%%
\section{Discussion}
\label{sec:discuss}
We have shown that a quantum computer can perform an oblique eigenprojection $\Pi_{\mathrm{in}}$ given block encoding access to a non-normal input matrix. Our method has a query complexity near linearly in the inverse gap and a normalization factor close to its optimal value $\norm{\Pi_{\mathrm{in}}}$. This improves the standard resolvent-integral approach and matches known results for the orthogonal case. The improvement carries through to concrete applications: preparing eigenstates of matrices with complex eigenvalues, and solving the continuous-time algebraic Riccati and Sylvester equations, where it sharpens the gap dependence of recent quantum algorithms by a quadratic to cubic factor.

Underlying the result is a complete characterization of the block encoding produced by the standard resolvent integral. Rather than tracking only the top left corner that carries the projector, we express every block through the discrete Fourier transform of the resolvent, and it is this full description of the otherwise uninteresting entries that makes the two-sided block preconditioning possible. A similar attention to the entire structure of a block encoding, rather than a single distinguished entry, appears in several works, for reducing the ancilla overhead of block encodings through uncomputing~\cite{vasconcelos2025}, for maintaining matrix encoding in the off-diagonal blocks of a Hamiltonian to compose arithmetic operations~\cite{kangsu2025}, and for constraining the powers of a block encoding to reproduce those of the encoded matrix~\cite{LW18,gutierrez2026}. We expect that a further understanding of this kind will find use in the design of quantum algorithms.

Our preconditioning rests on the closed-form expression for the discrete Fourier transform of the matrix resolvent, and it does not directly apply to every construction. In particular, the main approach of Wang and Liu~\cite{WL2026} represents the matrix sign through a product of two resolvents rather than a single one, for which we do not have an analogous closed form for the discrete Fourier transform, so the two-sided block preconditioning does not carry over. Adapting it to reduce the normalization factor in that setting is an interesting open direction.

It would also be valuable to better understand the case of more general spectral regions. Our construction there first block encodes $f(A)$ for a separating map $f$ and then applies the unit-disk algorithm, so the cost and the spectral-set hypotheses are inherited through $f$. It would be desirable to avoid the explicit block encoding of $f(A)$, and instead work with the resolvent of $A$ itself, integrated over a general curve enclosing the target spectrum. Such a construction could considerably simplify the algorithm, but would require preconditioning directly against the geometry of the region.

A different route to the eigenprojection is through quantum eigenvalue estimation: one estimates the eigenvalues of the input matrix in an ancillary register and then projects by selecting those in the target region. Beyond the QEVT technique from~\cite{QEVP}, a number of algorithms estimate eigenvalues of non-normal or non-Hermitian matrices~\cite{shao2022computing, shao2022solving, zhang2024, alase2024}. This approach, however, requires resolving individual eigenvalues to high precision and it typically imposes additional structural assumptions on spectrum of the input matrix and properties of the initial state. Our construction instead projects by location in the complex plane, requiring only separation of the numerical range from the region boundary, and never estimates an eigenvalue.

A further contrast is worth noting. For a Hermitian operator, one can transform it by QSVT and its cost is solely determined by the function alone: implementing a window function, for instance, costs inversely in its transition width independently of the input matrix. Our oblique eigenprojection algorithm does not admit such a feature. Because the indicator of a spectral region is nonanalytic, it cannot be approximated uniformly and applied blindly; our construction instead controls the resolvent on the boundary of the region, which requires the spectrum of the input to be separated from that boundary by a gap. Our cost is therefore governed by a property of the input matrix---its gap---rather than by a property of the target function. It would be interesting to know whether an eigenprojection can be implemented at a cost depending only on the geometry of the target region, as in the Hermitian case, rather than on the spectral gap of the input.

More broadly, our result suggests a route to applying nonanalytic matrix functions on quantum computers. The oblique eigenprojection corresponds to the indicator of a spectral region, a discontinuous and hence nonanalytic function; our resolvent-based construction sidesteps the analytic obstruction that rules out a direct polynomial or singular value implementation. Whether the same discrete Fourier transform and preconditioning ideas extend to other nonanalytic functions of a non-normal matrix is an intriguing direction for future work.

%%%%%%%%%%%%%%%%%%%%%%%%%%%%%%%%%%%%%%%%%%%%%%%%%%%%%%%%%%%%%%%%%%%%%%%%%%%%%%
\section*{Acknowledgements}
A.M.D.~and Y.S.~thank Fernando Brand\~ao, Oskar Painter, James Hamilton, Nafea Bshara, Peter DeSantis, and Andy Jassy for their involvement and support of the research activities at the AWS Center for Quantum Computing, and Lin Lin for reviewing an earlier draft. 

%%%%%%%%%%%%%%%%%%%%%%%%%%%%%%%%%%%%%%%%%%%%%%%%%%%%%%%%%%%%%%%%%%%%%%%%%%%%%%
\newpage
\appendix
\section{Standard block encoding of resolvent integral}
\label{append:resolvent}
In this appendix, we provide details on the technical results from~\sec{riesz}.

\subsection{Resolvent integral and generalized eigensubspaces}
\label{append:resolvent-integral}

Let $A$ be a square matrix acting on a Hilbert space $\mathcal{H}$. By the primary decomposition theorem~\cite[Theorem 12.2.14]{humpherys2017}\cite[Theorem 7.6]{roman2008advanced}, the entire space splits as a direct sum of generalized eigensubspaces
\begin{equation}
  \label{eq:app-primary}
  \mathcal{H} = \bigoplus_{\lambda \in \operatorname{\mathbf{Spec}}(A)} \mathcal{H}_\lambda,
  \qquad
  \mathcal{H}_\lambda = \operatorname{\mathbf{Ker}}\!\left((A - \lambda I)^{m_\lambda}\right),
\end{equation}
for positive integers $m_\lambda\leq\dim(\mathcal{H})$. Each $\mathcal{H}_\lambda$ is invariant under $A$, and the restriction $N_\lambda := (A - \lambda I)|_{\mathcal{H}_\lambda}$ is nilpotent, with $N_\lambda^{m_\lambda} = 0$.

Let $\mathcal{C}$ be a closed curve that encloses part of the spectrum $\operatorname{\mathbf{Spec}}_{\mathrm{in}}(A)$ and excludes the rest $\operatorname{\mathbf{Spec}}_{\mathrm{out}}(A)$, so that $\operatorname{\mathbf{Spec}}(A) = \operatorname{\mathbf{Spec}}_{\mathrm{in}}(A) \cup \operatorname{\mathbf{Spec}}_{\mathrm{out}}(A)$ with $\operatorname{\mathbf{Spec}}_{\mathrm{in}}(A)$ inside $\mathcal{C}$ and $\operatorname{\mathbf{Spec}}_{\mathrm{out}}(A)$ outside. The Riesz projector associated with $\operatorname{\mathbf{Spec}}_{\mathrm{in}}(A)$ is defined by
\begin{equation}
  \label{eq:app-riesz-contour}
  \Pi_{\mathrm{in}} = \frac{1}{2\pi \mathrm{i}} \int_{\mathcal{C}} (z I - A)^{-1} \, \mathrm{d}z,
\end{equation}
the curve traversed counterclockwise. The following proposition shows that $\Pi_{\mathrm{in}}$ is precisely the oblique projection onto the enclosed generalized eigensubspaces $\bigoplus_{\lambda \in \operatorname{\mathbf{Spec}}_{\mathrm{in}}(A)} \mathcal{H}_\lambda$ along the excluded ones $\bigoplus_{\lambda \in \operatorname{\mathbf{Spec}}_{\mathrm{out}}(A)} \mathcal{H}_\lambda$.

\begin{lemma}
\label{lem:app-riesz-image}
Let $A$ be a square matrix on $\mathcal{H}$ with the generalized eigensubspace decomposition $\mathcal{H} = \bigoplus_{\lambda \in \operatorname{\mathbf{Spec}}(A)} \mathcal{H}_\lambda$ as in \eqref{eq:app-primary}, and let $\Pi_{\mathrm{in}}$ be the Riesz projector \eqref{eq:app-riesz-contour} for a curve $\mathcal{C}$ enclosing $\operatorname{\mathbf{Spec}}_{\mathrm{in}}(A) \subseteq \operatorname{\mathbf{Spec}}(A)$ and excluding $\operatorname{\mathbf{Spec}}_{\mathrm{out}}(A) = \operatorname{\mathbf{Spec}}(A) \setminus \operatorname{\mathbf{Spec}}_{\mathrm{in}}(A)$. Then
\begin{equation}
  \Pi_{\mathrm{in}}^2 = \Pi_{\mathrm{in}},
  \qquad
  \operatorname{\mathbf{Im}}(\Pi_{\mathrm{in}}) = \bigoplus_{\lambda \in \operatorname{\mathbf{Spec}}_{\mathrm{in}}(A)} \mathcal{H}_\lambda,
  \qquad
  \operatorname{\mathbf{Ker}}(\Pi_{\mathrm{in}}) = \bigoplus_{\lambda \in \operatorname{\mathbf{Spec}}_{\mathrm{out}}(A)} \mathcal{H}_\lambda.
\end{equation}
\end{lemma}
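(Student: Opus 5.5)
The plan is to compute the resolvent integral separately on each generalized eigensubspace, using the primary decomposition \eqref{eq:app-primary}. Since each $\mathcal{H}_\lambda$ is invariant under $A$, it is also invariant under $(zI-A)^{-1}$ for every $z\notin\operatorname{\mathbf{Spec}}(A)$. Therefore $\Pi_{\mathrm{in}}$ preserves every $\mathcal{H}_\lambda$. It then suffices to show that $\Pi_{\mathrm{in}}$ restricted to $\mathcal{H}_\lambda$ is the identity when $\lambda\in\operatorname{\mathbf{Spec}}_{\mathrm{in}}(A)$ and zero when $\lambda\in\operatorname{\mathbf{Spec}}_{\mathrm{out}}(A)$. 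All three claims follow at once from this:
\begin{itemize}
\item idempotence holds blockwise;
\item the image is the direct sum of the $\mathcal{H}_\lambda$ on which $\Pi_{\mathrm{in}}$ acts as the identity;
\item the kernel is the direct sum of the remaining $\mathcal{H}_\lambda$, because the sum is direct.
\end{itemize}

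On $\mathcal{H}_\lambda$ write $A=\lambda I+N_\lambda$ with $N_\lambda^{m_\lambda}=0$. For $z\neq\lambda$, the Neumann series terminates and gives the exact finite expansion
\begin{equation}
(zI-A)^{-1}\big|_{\mathcal{H}_\lambda}=\sum_{k=0}^{m_\lambda-1}\frac{N_\lambda^{k}}{(z-\lambda)^{k+1}}.
\end{equation}
One checks this by multiplying by $(z-\lambda)I-N_\lambda$, which telescopes because $N_\lambda^{m_\lambda}=0$. Integrating term by term over $\mathcal{C}$ uses the scalar formula $\frac{1}{2\pi\mathrm{i}}\int_{\mathcal{C}}(z-\lambda)^{-(k+1)}\,\mathrm{d}z$. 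This equals $\delta_{k0}$ times the winding number of $\mathcal{C}$ about $\lambda$. The reason is that for $k\ge1$ the integrand has an antiderivative on $\mathbb{C}\setminus\{\lambda\}$, so its contour integral vanishes.

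The winding number is $1$ for enclosed eigenvalues and $0$ for excluded ones, so the restriction is $I$ or $0$ as required. Here "enclose" means winding number one for a counterclockwise simple closed curve, and "exclude" means winding number zero. Cauchy's theorem gives the zero case directly, since $(z-\lambda)^{-1}$ is then analytic inside $\mathcal{C}$.

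I do not expect a real obstacle. The only points needing care are justifying invariance of $\mathcal{H}_\lambda$ under the resolvent and the finite-series formula. For invariance, $(zI-A)^{-1}$ is a polynomial in $A$ by Cayley--Hamilton, or one can read it off the explicit formula above. A second point is that the blockwise description determines $\Pi_{\mathrm{in}}$ globally. This holds because the $\mathcal{H}_\lambda$ span $\mathcal{H}$ and linear maps are determined by their restrictions to the summands of a direct sum.
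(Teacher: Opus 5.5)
Your proposal is correct and takes essentially the same route as the paper. Both restrict to each generalized eigensubspace $\mathcal{H}_\lambda$, expand the resolvent as the terminating Neumann series in $N_\lambda$, and integrate term by term, so that only the $k=0$ term survives, weighted by the winding number; your added justifications (resolvent invariance of $\mathcal{H}_\lambda$ via Cayley--Hamilton, the antiderivative argument for $k\ge 1$, and the direct-sum argument for the kernel) fill in details the paper states only briefly.
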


\begin{proof}
The resolvent $(zI - A)^{-1}$ preserves the decomposition \eqref{eq:app-primary}, so it suffices to evaluate $\Pi_{\mathrm{in}}$ on each summand $\mathcal{H}_\lambda$. 

Writing $A = \lambda I + N_\lambda$ on $\mathcal{H}_\lambda$, for $z \notin \operatorname{\mathbf{Spec}}(A)$ the restricted resolvent is the finite Neumann series
\begin{equation}
  \label{eq:resolvent-neumann}
  (z I - A)^{-1}\big|_{\mathcal{H}_\lambda}
  = \left((z - \lambda) I - N_\lambda\right)^{-1}
  = \sum_{k=0}^{m_\lambda - 1} \frac{N_\lambda^{\,k}}{(z - \lambda)^{k+1}},
\end{equation}
terminating because $N_\lambda^{m_\lambda} = 0$. Integrating term by term over $\mathcal{C}$,
\begin{equation}
  \label{eq:app-winding}
  \frac{1}{2\pi \mathrm{i}} \int_{\mathcal{C}} \frac{\mathrm{d}z}{(z - \lambda)^{k+1}}
  = \begin{cases} 1 & k = 0 \text{ and } \lambda \in \operatorname{\mathbf{Spec}}_{\mathrm{in}}(A), \\
  0 & \text{otherwise},\end{cases}
\end{equation}
since the integrand has vanishing residue for $k \ge 1$, and for $k = 0$ the integral is the winding number of $\mathcal{C}$ about $\lambda$. 

Thus $\Pi_{\mathrm{in}}$ restricts to the identity on $\mathcal{H}_\lambda$ for $\lambda \in \operatorname{\mathbf{Spec}}_{\mathrm{in}}(A)$ and to $0$ for $\lambda \in \operatorname{\mathbf{Spec}}_{\mathrm{out}}(A)$. The first of these gives $\mathcal{H}_\lambda \subseteq \operatorname{\mathbf{Im}}(\Pi_{\mathrm{in}})$ and $\Pi_{\mathrm{in}}^2 = \Pi_{\mathrm{in}}$ on $\mathcal{H}_\lambda$; the second gives $\mathcal{H}_\lambda \subseteq \operatorname{\mathbf{Ker}}(\Pi_{\mathrm{in}})$ and again $\Pi_{\mathrm{in}}^2 = \Pi_{\mathrm{in}}$. Summing over the decomposition \eqref{eq:app-primary} yields $\Pi_{\mathrm{in}}^2 = \Pi_{\mathrm{in}}$, $\operatorname{\mathbf{Im}}(\Pi_{\mathrm{in}}) = \bigoplus_{\lambda \in \operatorname{\mathbf{Spec}}_{\mathrm{in}}(A)} \mathcal{H}_\lambda$, and $\operatorname{\mathbf{Ker}}(\Pi_{\mathrm{in}}) = \bigoplus_{\lambda \in \operatorname{\mathbf{Spec}}_{\mathrm{out}}(A)} \mathcal{H}_\lambda$.
\end{proof}

\subsection{Block encoding of resolvent integral}
\label{append:resolvent-blockencoding}
\append{resolvent-integral} treated a general curve $\mathcal{C}$. We now specialize to the unit circle. Parametrizing $z = \mathrm{e}^{\mathrm{i}\theta}$ in \eqref{eq:app-riesz-contour}, with $\mathrm{d}z = \mathrm{i} \mathrm{e}^{\mathrm{i}\theta}\, \mathrm{d}\theta$ and $\mathrm{e}^{\mathrm{i}\theta}(\mathrm{e}^{\mathrm{i}\theta} I - A)^{-1} = (I - \mathrm{e}^{-\mathrm{i}\theta}A)^{-1}$, the Riesz projector \eqref{eq:app-riesz-contour} becomes
\begin{equation}
  \label{eq:app-riesz-unitcircle}
  \Pi_{\mathrm{in}} 
  = \frac{1}{2\pi \mathrm{i}} \int_{\abs{z}=1} (z I - A)^{-1} \, \mathrm{d}z
  = \frac{1}{2\pi} \int_0^{2\pi} (I - \mathrm{e}^{-\mathrm{i}\theta}A)^{-1} \, \mathrm{d}\theta.
\end{equation}
This can be seen as the zeroth Fourier coefficient of the periodic function $\theta \mapsto (I - \mathrm{e}^{-\mathrm{i}\theta}A)^{-1}$.

We discretize the integral \eqref{eq:app-riesz-unitcircle} by the $n$-point trapezoidal rule at the roots of unity $\omega_n^j$, where $\omega_n = \mathrm{e}^{2\pi \mathrm{i} / n}$ and $j = 0, \dots, n-1$,
\begin{equation}
  \label{eq:app-trap}
  \Pi_{\mathrm{in}} \approx S_0 := \frac{1}{n} \sum_{j=0}^{n-1} (I - \omega_n^{-j} A)^{-1}.
\end{equation}
Because the integrand is analytic on an annulus containing the unit circle, this approximation converges geometrically in $n$, with a rate we quantify in \append{precond}. The node count $n$ enters only through the size $\lceil \log_2 n \rceil$ of the ancilla register indexing the roots of unity, and does not affect the query complexity of the construction below, which depends only on the normalization factor of the block encoding and the target precision.

The discretization \eqref{eq:app-trap} expresses $\Pi_{\mathrm{in}}$ through resolvents of $A$, which can be realized on a quantum computer via block encodings. We recall standard block-encoding operations used throughout, referring to~\cite{gilyen2019, QEVP} for details. In all complexity statements, the parameters are known bounds on the underlying quantities, such as upper bounds on normalization factors, norms, and spectral-set constants, and lower bounds on gaps, overlaps, and smallest nonzero singular values. The algorithms use these bounds to choose their internal parameters, and the stated costs hold with the bounds in place of the exact values.

\begin{lemma}[Linear combination]
\label{lem:app-be-lincomb}
Given block encodings of matrices $A_1/\alpha_1, \dots, A_m/\alpha_m$ with normalization factors $\alpha_1, \dots, \alpha_m>0$, and coefficients $c_1, \dots, c_m$, one can construct a block encoding of
\begin{equation}
    \frac{1}{\sum_{k=1}^m \abs{c_k}\alpha_k}\sum_{k=1}^m c_kA_k
\end{equation}
with normalization factor $\sum_{k=1}^m \abs{c_k}\alpha_k$, using a single query to a controlled unitary that selects among the input block encodings, whose cost is the maximum of their individual query costs.
\end{lemma}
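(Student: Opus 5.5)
The plan is the standard LCU ("prepare--select--unprepare") construction. Let $U_k$ denote the given block encoding of $A_k/\alpha_k$, using an ancilla register $a$ of common size; pad with extra ancilla qubits in $\ket{0}$ if the sizes differ. Introduce a control register of $\lceil\log_2 m\rceil$ qubits and the controlled unitary
\begin{equation*}
\mathrm{SEL}=\sum_{k=1}^m \ketbra{k}{k}\otimes U_k
\end{equation*}
(plus identity on unused basis states). One query to this unitary costs the maximum of the individual query costs, since only one branch acts on each basis state.

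Next, absorb the phases of the coefficients. Write $c_k=\abs{c_k}\mathrm{e}^{\mathrm{i}\phi_k}$ and set $\beta=\sum_k\abs{c_k}\alpha_k$. Define two state-preparation unitaries on the control register:
\begin{equation*}
P_R\ket{0}=\frac{1}{\sqrt{\beta}}\sum_k\sqrt{\abs{c_k}\alpha_k}\,\ket{k},\qquad
P_L\ket{0}=\frac{1}{\sqrt{\beta}}\sum_k\mathrm{e}^{-\mathrm{i}\phi_k}\sqrt{\abs{c_k}\alpha_k}\,\ket{k}.
\end{equation*}
Both exist because the vectors have unit norm and can be completed to unitaries. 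They use no queries.

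The final unitary is $W=(P_L^\dagger\otimes I)\,\mathrm{SEL}\,(P_R\otimes I)$. Its top-left block is obtained by projecting the control register and the ancilla $a$ onto $\ket{0}$:
\begin{equation*}
(\bra{0}\bra{0}\otimes I)W(\ket{0}\ket{0}\otimes I)=\sum_k \frac{\overline{\mathrm{e}^{-\mathrm{i}\phi_k}}\,\abs{c_k}\alpha_k}{\beta}\,\frac{A_k}{\alpha_k}=\frac{1}{\beta}\sum_k c_kA_k,
\end{equation*}
where the middle expression uses $(\bra{0}\otimes I)U_k(\ket{0}\otimes I)=A_k/\alpha_k$. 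This is the claimed block encoding with normalization factor $\beta$.

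The argument is routine. The only points needing care are bookkeeping: the complex conjugation in $P_L^\dagger$ must produce exactly $\mathrm{e}^{\mathrm{i}\phi_k}$, and the ancilla registers must be padded to a common size so that $\mathrm{SEL}$ is well defined. I expect no substantive obstacle.
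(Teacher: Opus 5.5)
Your proposal is correct. The paper gives no proof of its own here: it defers to the standard block-encoding toolkit of Gily\'en et al.\ and Low--Su, and your prepare--select--unprepare LCU construction, with the state-preparation pair $(P_L,P_R)$ absorbing the phases, is exactly that standard argument. Your phase bookkeeping also checks out, since $\bra{0}P_L^\dagger$ carries amplitudes $\mathrm{e}^{\mathrm{i}\phi_k}\sqrt{\abs{c_k}\alpha_k/\beta}$, which yields $\frac{1}{\beta}\sum_k c_kA_k$.
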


\begin{lemma}[Product]
\label{lem:app-be-product}
Given block encodings of matrices $A_1/\alpha_1$ and $A_2/\alpha_2$ with normalization factors $\alpha_1, \alpha_2 > 0$, one can construct a block encoding of
\begin{equation}
    \frac{1}{\alpha_1 \alpha_2} A_1 A_2
\end{equation}
with normalization factor $\alpha_1 \alpha_2$, using a single query to each of the two input block encodings.
\end{lemma}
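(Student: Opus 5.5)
The statement to establish is the product rule for block encodings. Given unitaries $U_1$ and $U_2$ with $(\bra{0}\otimes I)U_i(\ket{0}\otimes I)=A_i/\alpha_i$, I will build one unitary whose top-left block is $A_1A_2/(\alpha_1\alpha_2)$. The plan is to give each input block encoding its own ancilla register and compose them. Concretely, let $U_1$ act on ancilla register $a_1$ and the system, and $U_2$ act on a separate ancilla register $a_2$ and the system. Extend each trivially to the joint space $a_1\otimes a_2\otimes\mathcal{H}$, as $\widetilde U_1=U_1\otimes I_{a_2}$ and $\widetilde U_2=I_{a_1}\otimes U_2$ (reordering tensor factors as needed), and set $W=\widetilde U_1\widetilde U_2$. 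Then $W$ is unitary and uses exactly one query to each of $U_1$ and $U_2$.

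Next I compute the top-left block with respect to the fresh state $\ket{0}_{a_1}\ket{0}_{a_2}$. Insert the resolution of identity $I_{a_2}=\sum_b\ketbra{b}{b}$ between the two factors. Since $\widetilde U_1$ acts trivially on $a_2$, we have $(\bra{0}_{a_2}\otimes I)\widetilde U_1=U_1(\bra{0}_{a_2}\otimes I)$, so only the $b=0$ term survives. The remaining expression is
\begin{equation}
(\bra{0}_{a_1}\otimes I)U_1(\ket{0}_{a_1}\otimes I)\,(\bra{0}_{a_2}\otimes I)U_2(\ket{0}_{a_2}\otimes I)=\frac{A_1}{\alpha_1}\frac{A_2}{\alpha_2}.
\end{equation}
The analogous statement on $a_1$, namely that $\widetilde U_2$ acts trivially on $a_1$, handles the other side. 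This gives the claimed block with normalization factor $\alpha_1\alpha_2$.

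The main (mild) obstacle is bookkeeping: one must avoid reusing a single ancilla register. With a shared ancilla, the product would pick up the cross terms $\sum_{b\neq0}$ from the off-diagonal blocks of $U_1$ and $U_2$. The separate-register construction is exactly what kills these cross terms, and I will verify that cancellation explicitly. No approximation enters, so the encoding is exact.
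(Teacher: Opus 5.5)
Your proof is correct and exact. The paper states this lemma without proof and defers to \cite{gilyen2019}, whose standard argument is the same separate-ancilla composition $(U_1\otimes I_{a_2})(I_{a_1}\otimes U_2)$, with the block $A_1A_2/(\alpha_1\alpha_2)$ read off on $\ket{0}_{a_1}\ket{0}_{a_2}$. Using fresh ancilla registers is indeed what removes the cross terms.
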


\begin{lemma}[Inversion]
\label{lem:app-be-inversion}
Given a block encoding of $A/\alpha_A$ with normalization factor $\alpha_A > 0$, and an upper bound $\alpha_{A^{-1}} \geq \norm{A^{-1}}$ on the inverse, for any $\varepsilon > 0$ one can construct a block encoding of
\begin{equation}
    \frac{1}{2 \alpha_{A^{-1}}} A^{-1}
\end{equation}
with normalization factor $2 \alpha_{A^{-1}}$ and accuracy $\varepsilon$, using  
\begin{equation}
    \operatorname{\mathbf{O}}\!\left(\alpha_A \alpha_{A^{-1}} \log\!\frac{1}{\varepsilon}\right)
\end{equation}
queries to the block encoding of $A/\alpha_A$.
\end{lemma}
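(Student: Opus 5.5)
Since this is cited from \cite{gilyen2019, QEVP}, the plan is to follow the standard QSVT route: reduce inversion of a general (possibly non-Hermitian) $A$ to an odd polynomial approximation of $1/x$ applied to the singular values.

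\textbf{Step 1: condition number of the encoded matrix.} Let $U$ be the block encoding of $A/\alpha_A$. Its singular values lie in $[1/(\alpha_A\alpha_{A^{-1}}),\,1]$, since $\norm{(A/\alpha_A)^{-1}} = \alpha_A\norm{A^{-1}} \le \alpha_A\alpha_{A^{-1}}$. Set
\begin{equation}
  \kappa := \alpha_A\alpha_{A^{-1}} \ge 1 .
\end{equation}
Then every singular value of $A/\alpha_A$ lies in $[1/\kappa,1]$.

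\textbf{Step 2: polynomial approximation.} I would take the standard odd polynomial $p$ with $\abs{p(x)}\le 1$ on $[-1,1]$ and
\begin{equation}
  \abs{p(x) - \tfrac{1}{2\kappa x}} \le \varepsilon' \quad\text{for } x \in [1/\kappa, 1] .
\end{equation}
One construction is $\tfrac{1-(1-x^2)^b}{x}$ with $b=\operatorname{\mathbf{O}}(\kappa^2\log(\kappa/\varepsilon'))$. This is then truncated in the Chebyshev basis to degree $d=\operatorname{\mathbf{O}}(\kappa\log(\kappa/\varepsilon'))$ and rescaled so that it is bounded by $1$. This is \cite[Corollary 69]{gilyen2019}, which I would quote rather than rederive.

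\textbf{Step 3: singular value transformation.} Because $p$ is odd, QSVT with $d$ queries to $U$ and $U^\dagger$ block encodes
\begin{equation}
  p^{(SV)}\!\left((A/\alpha_A)^\dagger\right) = \sum_i p(\sigma_i)\, \ket{v_i}\!\bra{u_i},
\end{equation}
where $A = \sum_i \sigma_i \ket{u_i}\!\bra{v_i}$. The reversed singular vectors are exactly what the inverse requires, since
\begin{equation}
  A^{-1} = \sum_i \sigma_i^{-1}\ket{v_i}\!\bra{u_i}.
\end{equation}
On the singular values $\sigma_i/\alpha_A \in [1/\kappa,1]$ we have $p(\sigma_i/\alpha_A) \approx \alpha_A/(2\kappa\sigma_i) = 1/(2\alpha_{A^{-1}}\sigma_i)$. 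Hence the encoded operator is within $\varepsilon'$ in spectral norm of $A^{-1}/(2\alpha_{A^{-1}})$. Taking $\varepsilon'=\varepsilon$ gives a block encoding of $A^{-1}/(2\alpha_{A^{-1}})$ with normalization factor $2\alpha_{A^{-1}}$ and accuracy $\varepsilon$.

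\textbf{Step 4: accuracy convention and cost.} The main subtlety is which error the statement measures. If accuracy refers to the normalized block, the cost is $\operatorname{\mathbf{O}}(\kappa\log(\kappa/\varepsilon))$ queries. Removing the $\log\kappa$ to reach the stated $\operatorname{\mathbf{O}}(\kappa\log(1/\varepsilon))$ is the one nontrivial point. For this I would invoke the improved inversion polynomial from \cite{QEVP} or the Chebyshev-based optimal approximation, which achieves degree $\operatorname{\mathbf{O}}(\kappa\log(1/\varepsilon))$ for relative error $\varepsilon$ on $[1/\kappa,1]$. Failing that, I would absorb the $\log\kappa$ by assuming $\varepsilon\le 1/\kappa$ without loss of generality.
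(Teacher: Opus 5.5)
The paper gives no proof of this lemma; it defers to the QSVT literature. Your Steps 1--3 are the intended argument and are correct, except for a slip: your display should read $p(\sigma_i/\alpha_A)$, not $p(\sigma_i)$.

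The gap is in Step 4, which is exactly where the statement's $\log(1/\varepsilon)$, rather than $\log(\kappa/\varepsilon)$, has to be earned. Your fallback of assuming $\varepsilon\le 1/\kappa$ ``without loss of generality'' is not WLOG. For constant $\varepsilon$ it costs $\Theta(\kappa\log\kappa)$ queries, while the lemma promises $O(\kappa)$.

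Your diagnosis is also off. The $\log\kappa$ in Step 2 appears only because you inserted the \emph{normalized} error $\varepsilon'$ into $b=O(\kappa^2\log(\kappa/\varepsilon'))$, a choice calibrated for error on $1/x$ itself. Approximating $\frac{1}{2\kappa x}$ to accuracy $\varepsilon$ needs only accuracy $2\kappa\varepsilon$ on $1/x$.
\begin{itemize}
\item Since $(1-x^2)^b/x\le\kappa e^{-b/\kappa^2}$ on $[1/\kappa,1]$, the choice $b=\kappa^2\log(O(1/\varepsilon))$ suffices.
\item The Chebyshev coefficients $4\Pr[\mathrm{Bin}(2b,\tfrac12)>b+j]\le 4e^{-j^2/b}$ give a truncation tail of at most $\frac{4b}{J}e^{-J^2/b}$.
\item This tail is $O(\kappa\varepsilon)$ once $J=\sqrt{b\log(O(1/\varepsilon))}=O(\kappa\log(1/\varepsilon))$, because then $b/J=O(\kappa)$.
\end{itemize}
So no $\log\kappa$ appears anywhere when accuracy refers to the normalized block.

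Second, ``rescaled so that it is bounded by $1$'' does not preserve the normalization $2\alpha_{A^{-1}}$. The polynomial $\frac{1-(1-x^2)^b}{x}$ peaks at about $0.64\sqrt{b}=\Theta(\kappa\sqrt{\log(1/\varepsilon)})$ near $x\approx 1/\sqrt{b}$, deep inside the gap $(0,1/\kappa)$. The Chebyshev residual polynomial $(1-r(x))/x$ that you invoke for relative accuracy behaves the same way. Relative accuracy on $[1/\kappa,1]$ says nothing about the gap. Dividing by the sup norm would therefore give normalization $\Theta(\alpha_{A^{-1}}\sqrt{\log(1/\varepsilon)})$, not $2\alpha_{A^{-1}}$.

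The standard remedy is to multiply by an even polynomial $R$ of degree $O(\kappa\log(1/\varepsilon))$ satisfying:
\begin{itemize}
\item $\abs{R}\le 1$ on $[-1,1]$;
\item $\abs{R-1}\le\varepsilon$ on $[1/\kappa,1]$;
\item $\abs{R}\le c/\sqrt{\log(1/\varepsilon)}$ on $[0,\tfrac{1}{2\kappa}]$, for a suitable constant $c$.
\end{itemize}
On the transition interval $[\tfrac{1}{2\kappa},\tfrac{1}{\kappa}]$ you have $0\le\frac{1-(1-x^2)^b}{x}\le\frac{1}{x}\le 2\kappa$. Hence the product divided by $2\kappa(1+O(\varepsilon))$ is bounded by $1$ on $[-1,1]$ and $O(\varepsilon)$-approximates $\frac{1}{2\kappa x}$ on $[1/\kappa,1]$. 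Feeding this polynomial into your Step 3 yields the lemma as stated.
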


\begin{lemma}[Scaling]
\label{lem:app-be-scaling}
Given a block encoding of $A/\alpha_A$ with normalization factor $\alpha_A > 0$, and a value $\alpha_A' \geq \norm{A}$ upper bounding the norm:
\begin{enumerate}
  \item if $\alpha_A' \geq \alpha_A$, one can construct a block encoding of
    \begin{equation}
        \frac{1}{\alpha_A'} A
    \end{equation}
    with normalization factor $\alpha_A'$, using a single query to the block encoding of $A/\alpha_A$;
  \item if $\alpha_A' < \alpha_A$, then for any $\varepsilon > 0$ one can construct a block encoding of
    \begin{equation}
        \frac{1}{2\alpha_A'} A
    \end{equation}
    with normalization factor $2\alpha_A'$ and accuracy $\varepsilon$, using
    \begin{equation}
        \operatorname{\mathbf{O}}\!\left(\frac{\alpha_A}{\alpha_A'} \log\!\frac{1}{\varepsilon}\right)
    \end{equation}
    queries to the controlled block encoding of $A/\alpha_A$ and its inverse.
\end{enumerate}
\end{lemma}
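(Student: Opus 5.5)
The two cases call for different tools. The first is a trivial rescaling by a number at most one. The second is uniform singular value amplification via QSVT.

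\emph{Case $\alpha_A' \ge \alpha_A$.} Here $A/\alpha_A' = (\alpha_A/\alpha_A')\cdot A/\alpha_A$ with $0<\alpha_A/\alpha_A'\le 1$. I would block encode the scalar $\alpha_A/\alpha_A'$ with a single-qubit rotation $R_y(\phi)$ on a fresh ancilla, choosing $\cos(\phi/2)=\alpha_A/\alpha_A'$, so that $\bra{0}R_y(\phi)\ket{0}=\alpha_A/\alpha_A'$. Combining this with the given block encoding via \lem{app-be-product}, the tensor product $R_y(\phi)\otimes O_A$ with both ancillas projected onto $\ket{0}$ block encodes $A/\alpha_A'$. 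This uses exactly one query and gives normalization factor $\alpha_A'$.

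\emph{Case $\alpha_A' < \alpha_A$.} Set $\gamma = \alpha_A/(2\alpha_A') > 1/2$. Because $\norm{A}\le\alpha_A'$, every singular value of $A/\alpha_A$ lies in $[0,\alpha_A'/\alpha_A] = [0,1/(2\gamma)]$. The goal is to map each such singular value $x$ to $\gamma x \in [0,1/2]$.

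I would invoke the uniform singular value amplification result of \cite{gilyen2019}. It supplies an odd real polynomial $p$ with $\abs{p(x)}\le 1$ on $[-1,1]$ and $\abs{p(x)-\gamma x}\le\varepsilon$ on $[-1/(2\gamma),1/(2\gamma)]$. Since the interval is only half of the region $\abs{x}\le 1/\gamma$ on which $\gamma x$ stays bounded by one, the margin is a constant fraction. The degree is therefore $\operatorname{\mathbf{O}}(\gamma\log(\gamma/\varepsilon))$.

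Applying QSVT with this odd polynomial to the block encoding of $A/\alpha_A$ produces a block encoding of $p^{(\mathrm{SV})}(A/\alpha_A)$. This operator agrees with $\gamma A/\alpha_A = A/(2\alpha_A')$ to within $\varepsilon$ in spectral norm, because the error is controlled singular value by singular value. The construction uses $\deg p$ alternating queries to the controlled block encoding of $A/\alpha_A$ and its inverse, which is the stated resource, and the normalization factor is $2\alpha_A'$.

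The main obstacle is the degree bound. The standard construction gives $\operatorname{\mathbf{O}}(\gamma\log(\gamma/\varepsilon))$, whereas the statement claims $\operatorname{\mathbf{O}}(\gamma\log(1/\varepsilon))$. I would first try to obtain the sharper form by writing the target as $\gamma x\cdot g(x)$, where $g$ is a smooth window equal to one on $[-1/(2\gamma),1/(2\gamma)]$ and decaying for $\abs{x}\ge 1/\gamma$. Approximating $g$ by a degree-$\operatorname{\mathbf{O}}(\gamma\log(1/\varepsilon))$ polynomial then needs accuracy only $\varepsilon$, not $\varepsilon/\gamma$, on the interval where the factor $\gamma x$ is at most one. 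The remaining checks are that outside this interval the product stays bounded by one on all of $[-1,1]$ after a constant rescaling, and that the product remains odd. If this refinement fails, the weaker $\log(\gamma/\varepsilon)$ factor still suffices for every later use in the paper, since there $\gamma$ is polynomial in the problem parameters.
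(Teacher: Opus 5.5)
Your first case is correct. In the second case, using the uniform singular value amplification of \cite{gilyen2019} is the standard route; the paper gives no proof of this lemma and only recalls it from \cite{gilyen2019, QEVP}.

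The gap is the one you flag yourself. What you establish is degree $\operatorname{\mathbf{O}}(\gamma\log(\gamma/\varepsilon))$. This implies the stated $\operatorname{\mathbf{O}}(\gamma\log(1/\varepsilon))$ only when $\varepsilon\le\gamma^{-\Omega(1)}$; for instance, if $\varepsilon\le 1/\gamma$ then $\log(\gamma/\varepsilon)\le 2\log(1/\varepsilon)$. Your window refinement does not cover the remaining regime. If $\tilde g$ approximates the window uniformly to accuracy $\varepsilon$, then near $\abs{x}=1$, where $g\approx 0$, you only know $\abs{\gamma x\,\tilde g(x)}\lesssim\gamma\varepsilon$. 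Keeping $\abs{p}\le 1$ on $[-1,1]$, as QSVT requires, therefore forces $\tilde g$ to be $\operatorname{\mathbf{O}}(1/\gamma)$-accurate there. For a window of transition width $\Theta(1/\gamma)$, the standard uniform approximations achieve that at degree $\Theta(\gamma\log\gamma)$, which is exactly the factor you were trying to remove. A constant rescaling cannot absorb an overshoot of size $\gamma\varepsilon$. Moreover, rescaling $p$ changes its slope on the interval, so it no longer approximates $\gamma x$. So for $1/\gamma\ll\varepsilon$, the only regime where the two bounds differ, your argument does not reach the statement. Your remark that the weaker bound would suffice downstream does not prove the lemma as stated.

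The missing idea is to let an exact Chebyshev polynomial carry the $\Theta(\gamma)$ degree, and to approximate only an analytic function whose difficulty does not depend on $\gamma$.

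\begin{itemize}
\item \emph{Setup.} For $\gamma\ge 4$ (smaller $\gamma$ is covered by your bound), let $d$ be the smallest odd integer $\ge 2\gamma$. Then $(-1)^{(d-1)/2}T_d(x)=\sin(d\arcsin x)$.
\item \emph{Reduction to $F$.} For $\abs{x}\le 1/(2\gamma)$, use $d\le\tfrac52\gamma$ and $\arcsin t\le 1.003\,t$ on $[0,\tfrac18]$ to get $\abs{d\arcsin x}\le 1.26<\pi/2$. Hence $y=\sin(d\arcsin x)$ lies in $[-r,r]$ with $r=\sin(1.26)<1$. On this range $\gamma x=F(y)$, where $F(y)=\gamma\sin(\arcsin(y)/d)$.
\item \emph{Bounding $F$.} The Taylor coefficients of $\arcsin$ are nonnegative and sum to $\pi/2$. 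So the Taylor coefficients $c_l$ of the odd function $F$ satisfy $\sum_l\abs{c_l}\le\gamma\sinh(\pi/(2d))\le\gamma\sinh(\pi/(4\gamma))<0.8$.
\item \emph{Truncation.} Let $F_m$ be the degree-$m$ Taylor truncation of $F$. It is odd and bounded by $0.8$ on $[-1,1]$. Since each $\abs{c_l}<1$, it lies within $r^{m+1}/(1-r)$ of $F$ on $[-r,r]$, which is at most $\varepsilon$ for $m=\operatorname{\mathbf{O}}(\log(1/\varepsilon))$.
\item \emph{Composition.} Set $p=F_m\circ\big((-1)^{(d-1)/2}T_d\big)$. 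This is a real odd polynomial with $\abs{p}\le 1$ on $[-1,1]$ and $\abs{p(x)-\gamma x}\le\varepsilon$ for $\abs{x}\le 1/(2\gamma)$. Its degree is $dm=\operatorname{\mathbf{O}}(\gamma\log(1/\varepsilon))$.
\end{itemize}

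Using this $p$ in your QSVT step gives the lemma with the stated query count.
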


When applied to implementing the resolvent integral, the standard block encoding approach has a normalization factor governed by the maximum size of the resolvent on the unit circle
\begin{equation}
    \alpha_{\mathrm{res}}=\max_{\theta \in [0, 2\pi)} \norm{\left(I - \mathrm{e}^{-\mathrm{i}\theta} A\right)^{-1}},
\end{equation}
which can be bounded under two standard assumptions on $A$. We denote the compressions onto the inner and outer invariant subspaces \eqref{eq:app-primary} by $A_{\mathrm{in}} = A\Pi_{\mathrm{in}}$ and $A_{\mathrm{out}} = A\Pi_{\mathrm{out}}$.

The first assumption controls the resolvent through the numerical range. Recall that for a matrix $B$, $\mathcal{W}(B) = \{\, \bra{\psi} B \ket{\psi} : \norm{\psi} = 1 \,\}$ is its numerical range. We assume the per-block numerical ranges are separated from the unit circle,
\begin{equation}
  \label{eq:app-nr-assumption}
  \mathcal{W}(A_{\mathrm{in}}) \subseteq \{\, z \colon \abs{z} \leq 1 - \delta_{\mathrm{num}} \,\},
  \qquad
  \mathcal{W}(A_{\mathrm{out}}^{-1}) \subseteq \{\, z \colon \abs{z} \leq 1 - \delta_{\mathrm{num}} \,\},
\end{equation}
for some $0 < \delta_{\mathrm{num}} < 1$. The Crouzeix--Palencia bound~\cite{CrouzeixPalencia17} then applies to each block, giving $\norm{g(A_{\mathrm{in}})} \leq (1 + \sqrt{2}) \max_{z \in \mathcal{W}(A_{\mathrm{in}})} \abs{g(z)}$ for any $g$ analytic on a neighborhood of $\mathcal{W}(A_{\mathrm{in}})$, and likewise for
$A_{\mathrm{out}}^{-1}$.

The second assumption is diagonalizability. We assume the compressions $A_{\mathrm{in}} = V_{\mathrm{in}} \Lambda_{\mathrm{in}} V_{\mathrm{in}}^{-1}$ and $A_{\mathrm{out}} = V_{\mathrm{out}} \Lambda_{\mathrm{out}} V_{\mathrm{out}}^{-1}$ are respectively diagonalizable, so that $\norm{g(A_{\mathrm{in}})} \leq \kappa_V \max_{\lambda \in \operatorname{\mathbf{Spec}}(A_{\mathrm{in}})} \abs{g(\lambda)}$ for any $g$ analytic on a neighborhood of $\operatorname{\mathbf{Spec}}(A_{\mathrm{in}})$, and likewise for $A_{\mathrm{out}}$, where 
\begin{equation}
  \label{eq:app-kappa-V}
  \kappa_V = \max\!\left(\norm{V_{\mathrm{in}}}\norm{V_{\mathrm{in}}^{-1}},\;
  \norm{V_{\mathrm{out}}}\norm{V_{\mathrm{out}}^{-1}}\right),\qquad
  \delta_{\mathrm{eig}} = \min_{\lambda \in \operatorname{\mathbf{Spec}}(A)} \big|\, \abs{\lambda} - 1 \,\big|
\end{equation}
is the per-block eigenbasis condition number replacing the Crouzeix--Palencia constant. This analysis extends to generic matrices through the Jordan condition number~\cite[Page 444]{TrefethenEmbree05}.

Both assumptions on the input matrices are instances of a single condition, which we now formalize.

\begin{definition}[Spectral set]
\label{def:app-spectral-set}
Let $B$ be a square matrix and let $\mathcal{K} \subseteq \mathbb{C}$ be a compact set containing $\operatorname{\mathbf{Spec}}(B)$. We call $\mathcal{K}$ a spectral set for $B$ with constant $c \geq 1$ if
\begin{equation}
  \label{eq:app-spectral-set}
  \norm{g(B)} \leq c \sup_{z \in \mathcal{K}} \abs{g(z)}
\end{equation}
for every function $g$ analytic on a neighborhood of $\mathcal{K}$.
\end{definition}

\begin{lemma}[Instances of spectral sets]
\label{lem:app-spectral-instances}
Let $B$ be a square matrix.
\begin{enumerate}
  \item The numerical range $\mathcal{W}(B)$ is a spectral set for $B$ with constant
    $c = 1 + \sqrt{2}$.
  \item If $B = V \Lambda V^{-1}$ with $\Lambda$ diagonal, then $\operatorname{\mathbf{Spec}}(B)$ is a spectral set for
    $B$ with constant $c = \norm{V}\norm{V^{-1}}$.
\end{enumerate}
\end{lemma}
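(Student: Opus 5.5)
The two items have different proofs: item~1 is an external theorem, while item~2 is an elementary computation with the holomorphic functional calculus. In both cases I would first check the hypotheses of \defn{app-spectral-set}: the set must be compact and contain $\operatorname{\mathbf{Spec}}(B)$, and the constant must satisfy $c\ge 1$.

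For item~1, I would first record the basic facts about the numerical range.
\begin{itemize}
  \item $\mathcal{W}(B)$ is the continuous image of the unit sphere under $\psi\mapsto\bra{\psi}B\ket{\psi}$, so it is compact.
  \item It is convex by the Toeplitz--Hausdorff theorem.
  \item If $B\ket{\psi}=\lambda\ket{\psi}$ with $\norm{\ket{\psi}}=1$, then $\bra{\psi}B\ket{\psi}=\lambda$, so $\operatorname{\mathbf{Spec}}(B)\subseteq\mathcal{W}(B)$.
\end{itemize}
The required bound $\norm{g(B)}\le(1+\sqrt{2})\sup_{\mathcal{W}(B)}|g|$, for $g$ analytic near $\mathcal{W}(B)$, is exactly the Crouzeix--Palencia theorem~\cite{CrouzeixPalencia17} already quoted in \eqref{eq:crouzeix-palencia}. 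I would cite it rather than reprove it. If a sketch were wanted, it goes roughly as follows. Choose a Cauchy-transform-type partner $h$ with $\sup_{\mathcal{W}(B)}|h|\le\sup_{\mathcal{W}(B)}|g|$. Show that $\norm{g(B)+h(B)^\dagger}\le 2\sup|g|$, using the positivity of the double-layer kernel on the convex boundary. Then close the argument with the quadratic inequality $\norm{g(B)}^2\le\norm{g(B)+h(B)^\dagger}\,\norm{g(B)}+\sup|g|^2$, whose root gives the constant $1+\sqrt{2}$. This is the deep step, and I take it as given.

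For item~2, the set $\operatorname{\mathbf{Spec}}(B)$ is finite, hence compact, and contains the spectrum trivially. Let $g$ be analytic on a neighborhood $U$ of $\operatorname{\mathbf{Spec}}(B)$. Define $g(B)$ by the Cauchy integral over a contour in $U$ surrounding the spectrum, as in the Riesz construction \eqref{eq:app-riesz-contour}. Since $B=V\Lambda V^{-1}$, the resolvent factors as $(zI-B)^{-1}=V(zI-\Lambda)^{-1}V^{-1}$. Integrating entrywise, the diagonal entries give $\frac{1}{2\pi\mathrm{i}}\int g(z)(z-\lambda_k)^{-1}\,\mathrm{d}z=g(\lambda_k)$ by Cauchy's formula, exactly as in the proof of \lem{app-riesz-image} with nilpotent parts zero. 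Hence
\begin{equation*}
g(B)=V\,g(\Lambda)\,V^{-1},
\end{equation*}
and submultiplicativity gives
\begin{equation*}
\norm{g(B)}\le\norm{V}\,\norm{V^{-1}}\max_{\lambda\in\operatorname{\mathbf{Spec}}(B)}|g(\lambda)|,
\end{equation*}
because a diagonal matrix has norm equal to its largest entry in modulus. Finally, $c=\norm{V}\norm{V^{-1}}\ge\norm{VV^{-1}}=1$, so the constant is admissible. The only obstacle is item~1, which I would handle by citation.
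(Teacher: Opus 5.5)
Your proposal is correct and follows the paper's proof: item~1 is the Crouzeix--Palencia theorem taken by citation, and item~2 is exactly $g(B)=V g(\Lambda) V^{-1}$ followed by submultiplicativity and $\norm{g(\Lambda)}=\max_{\lambda}\abs{g(\lambda)}$. The extra checks you include (compactness, $\operatorname{\mathbf{Spec}}(B)\subseteq\mathcal{W}(B)$, $c\ge 1$, and the Cauchy-integral justification of $g(B)=Vg(\Lambda)V^{-1}$) fill in details the paper leaves implicit, and your optional sketch of Crouzeix--Palencia plays no role in the argument.
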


\begin{proof}
Part (1) is the Crouzeix--Palencia bound~\cite{CrouzeixPalencia17}. For part (2), any $g$ analytic on a neighborhood of $\operatorname{\mathbf{Spec}}(B)$ satisfies $g(B) = V g(\Lambda) V^{-1}$. Since $g(\Lambda)$ is diagonal with entries $g(\lambda)$, $\lambda \in \operatorname{\mathbf{Spec}}(B)$,
\begin{equation}
  \norm{g(B)} \leq \norm{V}\norm{V^{-1}} \max_{\lambda \in \operatorname{\mathbf{Spec}}(B)} \abs{g(\lambda)}.
\end{equation}
\end{proof}

We say that the compressions $A_{\mathrm{in}}$ and $A_{\mathrm{out}}^{-1}$ admit spectral sets $\mathcal{K}_{\mathrm{in}}$ and $\mathcal{K}_{\mathrm{out}}$ \eqref{eq:app-spectral-set} with constant $c$ and gap $\delta$ if both sets have the common constant $c$ and satisfy
\begin{equation}
  \label{eq:app-spectral-hyp}
  \mathcal{K}_{\mathrm{in}} \subseteq \{\, z \colon \abs{z} \leq 1 - \delta \,\},
  \qquad
  \mathcal{K}_{\mathrm{out}} \subseteq \{\, z \colon \abs{z} \leq 1 - \delta \,\}
\end{equation}
for some $0 < \delta < 1$. Here, for a compression such as $A_{\mathrm{in}} = A\Pi_{\mathrm{in}}$, spectra, numerical ranges, and spectral sets are taken for its restriction to the invariant subspace $\mathcal{H}_{\mathrm{in}}$, and similarly for $A_{\mathrm{out}}$ and $A_{\mathrm{out}}^{-1}$ on $\mathcal{H}_{\mathrm{out}}$. By \lem{app-spectral-instances}, this condition holds in two cases. Under the numerical-range assumption \eqref{eq:app-nr-assumption}, take $\mathcal{K}_{\mathrm{in}} = \mathcal{W}(A_{\mathrm{in}})$ and $\mathcal{K}_{\mathrm{out}} = \mathcal{W}(A_{\mathrm{out}}^{-1})$, with $c = 1+\sqrt{2}$ and $\delta = \delta_{\mathrm{num}}$. Under the diagonalizability assumption, take $\mathcal{K}_{\mathrm{in}} = \operatorname{\mathbf{Spec}}(A_{\mathrm{in}})$ and $\mathcal{K}_{\mathrm{out}} = \operatorname{\mathbf{Spec}}(A_{\mathrm{out}}^{-1})$, with $c = \kappa_V$ \eqref{eq:app-kappa-V}. Every eigenvalue of $A_{\mathrm{out}}$ has modulus at least $1 + \delta_{\mathrm{eig}}$. So the eigenvalues of $A_{\mathrm{out}}^{-1}$ have modulus at most $1/(1+\delta_{\mathrm{eig}}) \leq 1 - \delta_{\mathrm{eig}}/2$, and the condition holds with $\delta = \delta_{\mathrm{eig}}/2$.

\begin{proposition}[Resolvent norm bound]
\label{prop:app-resolvent-bound}
Let $A$ be a square matrix with Riesz projector $\Pi_{\mathrm{in}}$ \eqref{eq:app-riesz-unitcircle} and compressions $A_{\mathrm{in}} = A\Pi_{\mathrm{in}}$, $A_{\mathrm{out}} = A\Pi_{\mathrm{out}}$. If $A_{\mathrm{in}}$ and $A_{\mathrm{out}}^{-1}$ admit spectral sets \eqref{eq:app-spectral-set} with constant $c$ and gap $\delta$ satisfying \eqref{eq:app-spectral-hyp}, then the maximum resolvent norm satisfies
\begin{equation}
  \label{eq:app-alpha-res}
  \alpha_{\mathrm{res}} = \max_{\theta \in [0,2\pi)} \norm{(I - \mathrm{e}^{-\mathrm{i}\theta}A)^{-1}}
  \leq \frac{2c\,\norm{\Pi_{\mathrm{in}}}}{\delta}.
\end{equation}
Moreover:
\begin{enumerate}
  \item under the numerical-range assumption \eqref{eq:app-nr-assumption} with gap
    $\delta_{\mathrm{num}}$,
    \begin{equation}
      \label{eq:app-alpha-res-num}
      \alpha_{\mathrm{res}} \leq \frac{2(1 + \sqrt{2})\,\norm{\Pi_{\mathrm{in}}}}{\delta_{\mathrm{num}}}.
    \end{equation}
    \item if $A_{\mathrm{in}}$ and $A_{\mathrm{out}}$ are diagonalizable with per-block eigenbasis
    condition number $\kappa_V$ and spectrum gap $\delta_{\mathrm{eig}}$ \eqref{eq:app-kappa-V},
    \begin{equation}
      \label{eq:app-alpha-res-eig}
      \alpha_{\mathrm{res}} \leq \frac{2\,\kappa_V\,\norm{\Pi_{\mathrm{in}}}}{\delta_{\mathrm{eig}}}.
    \end{equation}
\end{enumerate}
\end{proposition}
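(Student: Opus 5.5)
The plan is to split the resolvent along the two complementary invariant subspaces and bound each piece with the spectral-set inequality \eqref{eq:app-spectral-set}. Fix $\theta$ and write $\Pi_{\mathrm{out}} = I - \Pi_{\mathrm{in}}$. Both projectors commute with $A$, so
\begin{equation*}
  (I - \mathrm{e}^{-\mathrm{i}\theta}A)^{-1}
  = (I - \mathrm{e}^{-\mathrm{i}\theta}A_{\mathrm{in}})^{-1}\big|_{\mathcal{H}_{\mathrm{in}}}\Pi_{\mathrm{in}}
  + (I - \mathrm{e}^{-\mathrm{i}\theta}A_{\mathrm{out}})^{-1}\big|_{\mathcal{H}_{\mathrm{out}}}\Pi_{\mathrm{out}}.
\end{equation*}
Each restricted operator is measured in the ambient norm. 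This gives
\[
\norm{(I-\mathrm{e}^{-\mathrm{i}\theta}A)^{-1}}\le \norm{h_{\mathrm{in}}}\,\norm{\Pi_{\mathrm{in}}}+\norm{h_{\mathrm{out}}}\,\norm{\Pi_{\mathrm{out}}},
\]
where $h_{\mathrm{in}}, h_{\mathrm{out}}$ denote the two restricted resolvents.

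For the inner block, apply the spectral set $\mathcal{K}_{\mathrm{in}}$ of $A_{\mathrm{in}}$ to $g(z) = 1/(1 - \mathrm{e}^{-\mathrm{i}\theta}z)$. This $g$ is analytic on a neighborhood of the disk $\{|z|\le 1-\delta\}\supseteq\mathcal{K}_{\mathrm{in}}$, and there $|g(z)| \le 1/(1-(1-\delta)) = 1/\delta$. Hence $\norm{h_{\mathrm{in}}}\le c/\delta$.

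For the outer block, rewrite on $\mathcal{H}_{\mathrm{out}}$:
\[
(I - \mathrm{e}^{-\mathrm{i}\theta}A_{\mathrm{out}})^{-1} = -\mathrm{e}^{\mathrm{i}\theta}A_{\mathrm{out}}^{-1}\bigl(I - \mathrm{e}^{\mathrm{i}\theta}A_{\mathrm{out}}^{-1}\bigr)^{-1}.
\]
This is $g(A_{\mathrm{out}}^{-1})$ with $g(z) = -\mathrm{e}^{\mathrm{i}\theta}z/(1-\mathrm{e}^{\mathrm{i}\theta}z)$. On $\mathcal{K}_{\mathrm{out}}$ we have $|g|\le (1-\delta)/\delta$, so $\norm{h_{\mathrm{out}}}\le c(1-\delta)/\delta\le c/\delta$.

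To combine the two pieces, I will use the standard fact that $\norm{\Pi_{\mathrm{out}}} = \norm{I-\Pi_{\mathrm{in}}} = \norm{\Pi_{\mathrm{in}}}$ for any idempotent other than $0$ and $I$. The degenerate cases are trivial, since then only one block is present. This yields $\alpha_{\mathrm{res}}\le 2c\norm{\Pi_{\mathrm{in}}}/\delta$. The step I expect to need the most care is this norm equality for oblique projectors, rather than the cruder bound $\norm{\Pi_{\mathrm{out}}}\le 1+\norm{\Pi_{\mathrm{in}}}$. The crude bound would only give the constant $3$. The equality can be cited from the standard literature on oblique projectors, or proved by the usual singular-value argument on a $2\times 2$ block form of an idempotent.

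Item 1 follows by substituting $c = 1+\sqrt2$ and $\delta=\delta_{\mathrm{num}}$, via \lem{app-spectral-instances}(1).

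For item 2, plugging in $\delta = \delta_{\mathrm{eig}}/2$ would lose a factor of $2$, so I will instead redo the two scalar bounds directly on the spectra. The constant is $\kappa_V$ by \lem{app-spectral-instances}(2), and $A_{\mathrm{out}}^{-1} = V_{\mathrm{out}}\Lambda_{\mathrm{out}}^{-1}V_{\mathrm{out}}^{-1}$ shares the eigenbasis of $A_{\mathrm{out}}$. Eigenvalues of $A_{\mathrm{in}}$ have modulus at most $1-\delta_{\mathrm{eig}}$, which gives $|g|\le 1/\delta_{\mathrm{eig}}$ for the inner block. Eigenvalues of $A_{\mathrm{out}}^{-1}$ have modulus at most $r = 1/(1+\delta_{\mathrm{eig}})$, which gives $|g|\le r/(1-r) = 1/\delta_{\mathrm{eig}}$ for the outer block. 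Summing as above gives $2\kappa_V\norm{\Pi_{\mathrm{in}}}/\delta_{\mathrm{eig}}$.
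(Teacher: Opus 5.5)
Your proposal is correct and follows the paper's proof essentially step for step. It uses the same splitting along $\Pi_{\mathrm{in}}$ and $\Pi_{\mathrm{out}}$ and the same scalar functions; your outer $g$ coincides with the paper's $h(w)=w(w-\mathrm{e}^{-\mathrm{i}\theta})^{-1}$. It also uses the same identity $\norm{\Pi_{\mathrm{out}}}=\norm{\Pi_{\mathrm{in}}}$ from \lem{app-projector-norm}. For item 2 you make the same choice to redo the bound rather than substitute $\delta_{\mathrm{eig}}/2$. The paper bounds $\abs{1-\mathrm{e}^{-\mathrm{i}\theta}\lambda}^{-1}\le 1/(\abs{\lambda}-1)$ on the eigenvalues of $A_{\mathrm{out}}$ directly, which gives the same $1/\delta_{\mathrm{eig}}$ as your $r/(1-r)$.

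One small correction: the degenerate case $\Pi_{\mathrm{in}}=0$ is not trivial. There the outer block alone gives only $\alpha_{\mathrm{res}}\le c/\delta$, while the right-hand side of the claimed bound is $0$. So the inequality genuinely requires $\Pi_{\mathrm{in}}\neq 0$, an assumption the paper also makes tacitly through \lem{app-projector-norm}.
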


\begin{proof}
The resolvent splits along the invariant subspaces as
\begin{equation}
  \label{eq:app-resolvent-split}
  (I - \mathrm{e}^{-\mathrm{i}\theta}A)^{-1}
  = (I - \mathrm{e}^{-\mathrm{i}\theta}A_{\mathrm{in}})^{-1}\Pi_{\mathrm{in}}
  + (I - \mathrm{e}^{-\mathrm{i}\theta}A_{\mathrm{out}})^{-1}\Pi_{\mathrm{out}},
\end{equation}
so that
\begin{equation}
  \label{eq:app-resolvent-split-norm}
  \norm{(I - \mathrm{e}^{-\mathrm{i}\theta}A)^{-1}}
  \leq \norm{(I - \mathrm{e}^{-\mathrm{i}\theta}A_{\mathrm{in}})^{-1}}\,\norm{\Pi_{\mathrm{in}}}
  + \norm{(I - \mathrm{e}^{-\mathrm{i}\theta}A_{\mathrm{out}})^{-1}}\,\norm{\Pi_{\mathrm{out}}}.
\end{equation}
For an oblique projector and its complement, we have the tight bound $\norm{\Pi_{\mathrm{out}}} = \norm{I - \Pi_{\mathrm{in}}} = \norm{\Pi_{\mathrm{in}}}$ by \lem{app-projector-norm} below, so it remains to bound the two restricted resolvents. For the purpose of deriving asymptotic bounds, the looser estimate $\norm{\Pi_{\mathrm{out}}} \leq 1 + \norm{\Pi_{\mathrm{in}}}$ from triangle inequality would also suffice.

For the inner block, take $g(z) = (1 - \mathrm{e}^{-\mathrm{i}\theta}z)^{-1}$. Its only pole, $z = \mathrm{e}^{\mathrm{i}\theta}$, lies outside $\mathcal{K}_{\mathrm{in}}$, and $\abs{1 - \mathrm{e}^{-\mathrm{i}\theta}z} \geq \delta$ on $\mathcal{K}_{\mathrm{in}}$. By \eqref{eq:app-spectral-set}, $\norm{(I - \mathrm{e}^{-\mathrm{i}\theta}A_{\mathrm{in}})^{-1}} \leq c/\delta$.

For the outer block, write $(I - \mathrm{e}^{-\mathrm{i}\theta}A_{\mathrm{out}})^{-1} = h(A_{\mathrm{out}}^{-1})$ with $h(w) = w(w - \mathrm{e}^{-\mathrm{i}\theta})^{-1}$. Its only pole, $w = \mathrm{e}^{-\mathrm{i}\theta}$, lies outside $\mathcal{K}_{\mathrm{out}}$, and $\abs{h(w)} \leq (1-\delta)/\delta \leq 1/\delta$ on $\mathcal{K}_{\mathrm{out}}$. By \eqref{eq:app-spectral-set}, $\norm{(I - \mathrm{e}^{-\mathrm{i}\theta}A_{\mathrm{out}})^{-1}} \leq c/\delta$.

Combining the two blocks gives \eqref{eq:app-alpha-res}. By \lem{app-spectral-instances}, \eqref{eq:app-spectral-hyp} holds with $c = 1+\sqrt{2}$ and $\delta = \delta_{\mathrm{num}}$ under the numerical-range assumption, which gives \eqref{eq:app-alpha-res-num}. Under the diagonalizability assumption, it holds with $c = \kappa_V$ and $\delta = \delta_{\mathrm{eig}}/2$, which gives \eqref{eq:app-alpha-res-eig} with prefactor $4\kappa_V$.

In the diagonalizable case, the prefactor can be improved to $2\kappa_V$ by bounding the outer block directly rather than through $A_{\mathrm{out}}^{-1}$. For every $\lambda \in \operatorname{\mathbf{Spec}}(A)$,
\begin{equation}
  \abs{1 - \mathrm{e}^{-\mathrm{i}\theta}\lambda} = \abs{\mathrm{e}^{\mathrm{i}\theta} - \lambda}
  \geq \big|\abs{\lambda} - 1\big| \geq \delta_{\mathrm{eig}},
\end{equation}
whether $\lambda$ lies inside or outside the unit circle. Applying the eigenbasis bound of \lem{app-spectral-instances} to $g(z) = (1 - \mathrm{e}^{-\mathrm{i}\theta}z)^{-1}$ on each block gives $\norm{(I - \mathrm{e}^{-\mathrm{i}\theta}A_{\mathrm{in}})^{-1}}, \norm{(I - \mathrm{e}^{-\mathrm{i}\theta}A_{\mathrm{out}})^{-1}} \leq \kappa_V/\delta_{\mathrm{eig}}$. Hence $\alpha_{\mathrm{res}} \leq 2\kappa_V\,\norm{\Pi_{\mathrm{in}}}/\delta_{\mathrm{eig}}$.
\end{proof}

\begin{lemma}[Norm of an oblique projection~\cite{szyld2006}]
\label{lem:app-projector-norm}
Let $\Pi^2 = \Pi$ be an oblique projection, with $\Pi \neq 0$ and $\Pi \neq I$. Then
$\norm{\Pi} = \norm{I - \Pi}$.
\end{lemma}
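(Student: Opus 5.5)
The plan is to prove that $\norm{\Pi}=\norm{I-\Pi}$ through a direct singular-value argument. I would reduce the claim to a two-dimensional (or block $2\times 2$) computation by choosing an orthonormal basis adapted to the range of $\Pi$.

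\textbf{Normal form.} Let $\mathcal{R}=\operatorname{\mathbf{Im}}(\Pi)$, which is nontrivial and proper because $\Pi\neq 0,I$. With respect to the orthogonal decomposition $\mathcal{H}=\mathcal{R}\oplus\mathcal{R}^\perp$, idempotence and $\Pi\mathcal{H}=\mathcal{R}$ force
\begin{equation*}
\Pi=\begin{bmatrix} I & K\\ 0 & 0\end{bmatrix},\qquad I-\Pi=\begin{bmatrix} 0 & -K\\ 0 & I\end{bmatrix},
\end{equation*}
for some operator $K:\mathcal{R}^\perp\to\mathcal{R}$. Indeed, $\Pi$ acts as the identity on $\mathcal{R}$ and maps everything into $\mathcal{R}$, so the first block column is $[I;0]$, the second block row vanishes, and $K$ is arbitrary.

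\textbf{Computing both norms.} I would evaluate each norm through the largest eigenvalue of $XX^\dagger$. For the projector,
\begin{equation*}
\Pi\Pi^\dagger=\begin{bmatrix} I+KK^\dagger & 0\\ 0&0\end{bmatrix},
\end{equation*}
so $\norm{\Pi}^2=1+\norm{K}^2$. For the complement, $(I-\Pi)^\dagger(I-\Pi)$ has the single nonzero block $K^\dagger K+I$ on $\mathcal{R}^\perp$, so $\norm{I-\Pi}^2=1+\norm{K^\dagger K}=1+\norm{K}^2$. The hypotheses are used exactly here: $\Pi\neq 0$ makes $\mathcal{R}\neq\{0\}$, so the $I+KK^\dagger$ block exists, and $\Pi\neq I$ makes $\mathcal{R}^\perp\neq\{0\}$, so the identity block in $I-\Pi$ exists. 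Together these give the ``$1+$'' in each formula, and equating the two expressions proves the lemma.

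\textbf{Main obstacle.} The only delicate step is justifying the block form, in particular that the lower-left block vanishes and the upper-left block equals $I$. Both follow immediately from $\Pi r=r$ for $r\in\mathcal{R}$ together with $\Pi\mathcal{H}\subseteq\mathcal{R}$, so the argument is routine and the equality holds exactly, not merely up to constants.
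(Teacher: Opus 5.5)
Your proposal is correct and follows the paper's proof essentially verbatim. Both use the same block decomposition of $\Pi$ and $I-\Pi$ with respect to $\operatorname{\mathbf{Im}}(\Pi)\oplus\operatorname{\mathbf{Im}}(\Pi)^\perp$, and the same computation of $\Pi\Pi^\dagger$ and $(I-\Pi)^\dagger(I-\Pi)$, which gives $1+\norm{K}^2$ in both cases. Your remark that $\Pi\neq 0$ and $\Pi\neq I$ are needed exactly so that the blocks $I+KK^\dagger$ and $I+K^\dagger K$ act on nonzero spaces is a small clarification the paper leaves implicit.
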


\begin{proof}
The entire space admits the orthogonal decomposition $\mathcal{H} = \operatorname{\mathbf{Im}}(\Pi) \oplus \operatorname{\mathbf{Im}}(\Pi)^\perp$. Every vector in $\operatorname{\mathbf{Im}}(\Pi)$ is fixed by $\Pi$, and $\Pi$ maps $\operatorname{\mathbf{Im}}(\Pi)^\perp$ into $\operatorname{\mathbf{Im}}(\Pi)$, so with respect to this splitting
\begin{equation}
  \Pi = \begin{bmatrix} I & P \\ 0 & 0 \end{bmatrix},
  \qquad
  I - \Pi = \begin{bmatrix} 0 & -P \\ 0 & I \end{bmatrix},
\end{equation}
for some operator $P : \operatorname{\mathbf{Im}}(\Pi)^\perp \to \operatorname{\mathbf{Im}}(\Pi)$. Then
\begin{equation}
  \Pi \Pi^\dagger = \begin{bmatrix} I + P P^\dagger & 0 \\ 0 & 0 \end{bmatrix},
  \qquad
  (I - \Pi)^\dagger (I - \Pi) = \begin{bmatrix} 0 & 0 \\ 0 & I + P^\dagger P \end{bmatrix},
\end{equation}
so $\norm{\Pi}^2 = \norm{\Pi \Pi^\dagger} = 1 + \norm{P P^\dagger} = 1 + \norm{P}^2$ and $\norm{I - \Pi}^2 = \norm{(I - \Pi)^\dagger (I - \Pi)} = 1 + \norm{P^\dagger P} = 1 + \norm{P}^2$. Hence $\norm{\Pi} = \norm{I - \Pi}$.
\end{proof}

We now assemble the standard block encoding. Introduce an ancilla register indexing the roots of unity and form the block-diagonal matrix
\begin{equation}
  \label{eq:app-M-blockdiag}
  M = \sum_{j=0}^{n-1} \ketbra{j}{j} \otimes (I - \omega_n^{-j} A),
\end{equation}
whose blocks are the shifted matrices appearing in the discretization \eqref{eq:app-trap}. Each block is the combination $I - \omega_n^{-j} A$ of $I$ and $A$, so \lem{app-be-lincomb} block encodes $M$ with normalization factor $\alpha_A + 1 = \operatorname{\mathbf{O}}(\alpha_A)$ from the block encoding of $A/\alpha_A$. Applying the inversion of \lem{app-be-inversion} block encodes $M^{-1} = \sum_{j=0}^{n-1} \ketbra{j}{j} \otimes (I - \omega_n^{-j} A)^{-1}$, and preparing the ancilla in the uniform superposition $\operatorname{\mathbf{Had}}\ket{0} = \tfrac{1}{\sqrt{n}} \sum_{j=0}^{n-1} \ket{j}$ extracts the average
\begin{equation}
  \label{eq:app-hadamard-extract}
  \left(\bra{0}\operatorname{\mathbf{Had}} \otimes I\right) M^{-1}
  \left(\operatorname{\mathbf{Had}}\ket{0} \otimes I\right)
  = \frac{1}{n} \sum_{j=0}^{n-1} (I - \omega_n^{-j} A)^{-1} = S_0.
\end{equation}

\begin{proposition}[Standard block encoding of the oblique eigenprojection]
\label{prop:app-standard-be}
Let $A/\alpha_A$ be a block encoding of a square matrix $A$ with Riesz projector $\Pi_{\mathrm{in}} = \frac{1}{2\pi \mathrm{i}} \int_{\abs{z}=1} (z I - A)^{-1} \, \mathrm{d}z$, and let $\alpha_{\mathrm{res}} = \max_{\theta \in [0,2\pi)} \norm{(I - \mathrm{e}^{-\mathrm{i}\theta}A)^{-1}}$ be the maximum resolvent norm. For any $\varepsilon > 0$, $\Pi_{\mathrm{in}}$ can be block encoded with normalization factor $\operatorname{\mathbf{O}}(\alpha_{\mathrm{res}})$ and accuracy $\varepsilon$, using
\begin{equation}
  \label{eq:app-standard-queries}
  \operatorname{\mathbf{O}}\!\left(\alpha_A\, \alpha_{\mathrm{res}}
  \log\frac{1}{\varepsilon}\right)
\end{equation}
queries to the block encoding of $A/\alpha_A$. 

If the compressions $A_{\mathrm{in}}$ and $A_{\mathrm{out}}^{-1}$ admit spectral sets \eqref{eq:app-spectral-set} with constant $c$ and gap $\delta$ satisfying \eqref{eq:app-spectral-hyp}, then $\alpha_{\mathrm{res}} = \operatorname{\mathbf{O}}(c\,\norm{\Pi_{\mathrm{in}}}/\delta)$ by \prop{app-resolvent-bound}. In particular:
\begin{enumerate}
  \item under the numerical-range assumption \eqref{eq:app-nr-assumption},
    $\alpha_{\mathrm{res}} = \operatorname{\mathbf{O}}(\norm{\Pi_{\mathrm{in}}}/\delta_{\mathrm{num}})$;
  \item under the diagonalizability assumption \eqref{eq:app-kappa-V},
    $\alpha_{\mathrm{res}} = \operatorname{\mathbf{O}}(\kappa_V \norm{\Pi_{\mathrm{in}}}/\delta_{\mathrm{eig}})$.
\end{enumerate}
\end{proposition}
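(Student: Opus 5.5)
The construction follows the one described before the statement. First I will block encode the block-diagonal matrix $M = \sum_{j=0}^{n-1}\ketbra{j}{j}\otimes(I-\omega_n^{-j}A)$. I use \lem{app-be-lincomb} with a select over $j$ of the phases $\omega_n^{-j}$; these are diagonal unitaries on the ancilla and cost nothing in queries. This gives normalization factor $1+\alpha_A=\operatorname{\mathbf{O}}(\alpha_A)$ and uses a single query to the block encoding of $A/\alpha_A$. Since $M$ is block diagonal,
\begin{equation}
\norm{M^{-1}}=\max_{j}\norm{(I-\omega_n^{-j}A)^{-1}}\le\alpha_{\mathrm{res}}.
\end{equation}
\lem{app-be-inversion} with the bound $\alpha_{M^{-1}}=\alpha_{\mathrm{res}}$ then block encodes $M^{-1}/(2\alpha_{\mathrm{res}})$ to accuracy $\varepsilon'$. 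This costs $\operatorname{\mathbf{O}}(\alpha_A\alpha_{\mathrm{res}}\log(1/\varepsilon'))$ queries.

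Next I will sandwich this block encoding between $\operatorname{\mathbf{Had}}\ket{0}$ and $\bra{0}\operatorname{\mathbf{Had}}$ on the index register. This is just a relabeling of the ancilla's reference state, so it adds no normalization and no queries. By \eqref{eq:app-hadamard-extract} the result block encodes $S_0/(2\alpha_{\mathrm{res}})$ to accuracy $\varepsilon'$, because the sandwich is a contraction and cannot amplify the error.

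It remains to control the discretization error $\norm{S_0-\Pi_{\mathrm{in}}}$. Here I expect the main subtlety, because the query count must not depend on $n$. I will use the closed form \eqref{eq:mode-zero-error}, obtained by expanding the resolvent geometrically on $\mathcal{H}_{\mathrm{in}}$ and $\mathcal{H}_{\mathrm{out}}$ and summing the roots of unity. Its correction terms $A_{\mathrm{in}}^n(I-A_{\mathrm{in}}^n)^{-1}$ and $A_{\mathrm{out}}^{-n}(I-A_{\mathrm{out}}^{-n})^{-1}$ tend to zero as $n\to\infty$, because both $A_{\mathrm{in}}$ and $A_{\mathrm{out}}^{-1}$ have spectral radius below one; this holds for any fixed $A$ with no eigenvalue on the circle. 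So some finite $n$ makes the error at most $\varepsilon\alpha_{\mathrm{res}}$, or even at most $\varepsilon$. Since $n$ enters only through the ancilla size, the query bound is unaffected. Alternatively, the trapezoidal rule for a periodic analytic integrand converges geometrically, which gives the same conclusion.

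Choosing $\varepsilon'=\varepsilon/(4\alpha_{\mathrm{res}})$ and the matching $n$, the output block encodes $\Pi_{\mathrm{in}}/(2\alpha_{\mathrm{res}})$ to accuracy $\varepsilon/(2\alpha_{\mathrm{res}})$, that is, with normalization $\operatorname{\mathbf{O}}(\alpha_{\mathrm{res}})$ and error $\varepsilon$ in $\Pi_{\mathrm{in}}$. The cost is $\operatorname{\mathbf{O}}(\alpha_A\alpha_{\mathrm{res}}\log(\alpha_{\mathrm{res}}/\varepsilon))$. To reach exactly \eqref{eq:app-standard-queries}, I will interpret $\varepsilon$ as the accuracy of the normalized block, as in \lem{app-be-inversion}, which removes the $\log\alpha_{\mathrm{res}}$. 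Otherwise I note that this logarithm is absorbed when $\varepsilon\le1/\alpha_{\mathrm{res}}$.

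Finally, the spectral-set consequences follow by substituting \prop{app-resolvent-bound}. This gives $\alpha_{\mathrm{res}}\le 2c\norm{\Pi_{\mathrm{in}}}/\delta$ in general, and specializes to \eqref{eq:app-alpha-res-num} and \eqref{eq:app-alpha-res-eig} under the two assumptions.
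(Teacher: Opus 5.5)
Your proposal is correct and follows the paper's proof essentially step for step. You block encode $M$ by linear combination, invert it with \lem{app-be-inversion} using $\norm{M^{-1}}\le\alpha_{\mathrm{res}}$, extract $S_0$ by the Hadamard sandwich, control $\norm{S_0-\Pi_{\mathrm{in}}}$ by taking $n$ large, and read off the spectral-set cases from \prop{app-resolvent-bound}. Two of your steps are slightly more careful than the paper's sketch: your spectral-radius argument shows a suitable $n$ exists without any spectral-set hypothesis (the paper instead defers to the choice of $n$ in \append{precond}), and you explicitly account for the $\log\alpha_{\mathrm{res}}$ term that arises from normalized versus unnormalized accuracy.
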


\begin{proof}
The block encoding of $M$ \eqref{eq:app-M-blockdiag} follows from \lem{app-be-lincomb}, each block being the combination $I - \omega_n^{-j}A$ of $I$ and $A$, with normalization factor $\alpha_A + 1 = \operatorname{\mathbf{O}}(\alpha_A)$. The nonzero singular values of $M$ are bounded below by $1/\alpha_{\mathrm{res}}$, since $\norm{M^{-1}} = \max_j \norm{(I - \omega_n^{-j}A)^{-1}} \leq \alpha_{\mathrm{res}}$, so \lem{app-be-inversion} block encodes $M^{-1}$ with the query count \eqref{eq:app-standard-queries} and normalization factor $\operatorname{\mathbf{O}}(\alpha_{\mathrm{res}})$. The Hadamard sandwich \eqref{eq:app-hadamard-extract} is a linear combination of the diagonal blocks, which preserves the normalization factor and adds only the ancilla preparation. The resulting block encoding is that of $S_0$, which is $\varepsilon$-close to $\Pi_{\mathrm{in}}$ by \eqref{eq:app-trap} once $n$ is chosen as in \append{precond}. The two specializations are the bounds of \prop{app-resolvent-bound}.
\end{proof}

%%%%%%%%%%%%%%%%%%%%%%%%%%%%%%%%%%%%%%%%%%%%%%%%%%%%%%%%%%%%%%%%%%%%%%%%%%%%%%
\section{Discrete Fourier transform of resolvent and preconditioning}
\label{append:precond}

In this appendix, we provide details on the technical results from~\sec{precond}.

\subsection{Block encoding matrix resolvent in the Fourier basis}
\label{append:precond-dft}

Recall the block-diagonal matrix \eqref{eq:app-M-blockdiag} whose inverse collects the shifted resolvents,
\begin{equation}
  M^{-1} = \sum_{j=0}^{n-1} \ketbra{j}{j} \otimes (I - \omega_n^{-j} A)^{-1}.
\end{equation}
Conjugating by the quantum Fourier transform $\operatorname{\mathbf{QFT}}$, acting on the ancilla register as $\operatorname{\mathbf{QFT}} \ket{j} = \tfrac{1}{\sqrt{n}} \sum_{k=0}^{n-1}
\omega_n^{jk} \ket{k}$, its $(k,l)$ block is
\begin{equation}
  \label{eq:app-kl-block}
  \begin{aligned}
    (\bra{k} \otimes I)\, (\operatorname{\mathbf{QFT}} \otimes I)\, M^{-1}\,
      (\operatorname{\mathbf{QFT}}^\dagger \otimes I)\, (\ket{l} \otimes I)
    &= \sum_{j=0}^{n-1} \bra{k}\operatorname{\mathbf{QFT}}\ketbra{j}{j}\operatorname{\mathbf{QFT}}^\dagger\ket{l}
       \, (I - \omega_n^{-j} A)^{-1} \\
    &= \sum_{j=0}^{n-1} \frac{\omega_n^{jk}}{\sqrt{n}} \cdot \frac{\omega_n^{-jl}}{\sqrt{n}}\,
       (I - \omega_n^{-j} A)^{-1} \\
    &= \frac{1}{n} \sum_{j=0}^{n-1} \omega_n^{j(k-l)} (I - \omega_n^{-j} A)^{-1}.
  \end{aligned}
\end{equation}

This depends only on $k - l \bmod n$, so defining the discrete Fourier transform of the resolvent
\begin{equation}
  \label{eq:app-mode-def}
  S_m = \frac{1}{n} \sum_{j=0}^{n-1} \omega_n^{jm} (I - \omega_n^{-j} A)^{-1},
  \qquad m = 0, \dots, n-1,
\end{equation}
the conjugated operator is block circulant,
\begin{equation}
  \label{eq:app-block-circulant}
  \left( \operatorname{\mathbf{QFT}} \otimes I \right) M^{-1}
    \left( \operatorname{\mathbf{QFT}}^\dagger \otimes I \right)
  =
  \begin{bmatrix}
    S_0     & S_{n-1}  & S_{n-2}  & \cdots & S_1      \\
    S_1     & S_0      & S_{n-1}  & \cdots & S_2      \\
    S_2     & S_1      & S_0      & \cdots & S_3      \\
    \vdots  & \vdots   & \vdots   & \ddots & \vdots   \\
    S_{n-1} & S_{n-2}  & S_{n-3}  & \cdots & S_0
  \end{bmatrix}.
\end{equation}
The zeroth mode $S_0$ is the trapezoidal discretization \eqref{eq:app-trap} of the oblique eigenprojection $\Pi_{\mathrm{in}}$. The standard approach of \append{resolvent} extracts $S_0$ alone through the Hadamard conjugation \eqref{eq:app-hadamard-extract}, leaving the remaining blocks undetermined. The Fourier transform \eqref{eq:app-block-circulant} instead characterizes the entire operator with Fourier modes $S_1, \dots, S_{n-1}$, which is what makes the preconditioning below possible.

\subsection{Discrete Fourier transform of matrix resolvent}
\label{append:precond-modes}

The $S_m$ of \eqref{eq:app-mode-def} are related to the continuous Fourier coefficients of the
resolvent,
\begin{equation}
  \label{eq:app-cont-coeff}
  C_m = \frac{1}{2\pi} \int_0^{2\pi} \mathrm{e}^{\mathrm{i}m\theta} (I - \mathrm{e}^{-\mathrm{i}\theta}A)^{-1} \, \mathrm{d}\theta,
  \qquad m \in \mathbb{Z},
\end{equation}
by aliasing. This aliasing relation is a form of the Poisson summation formula; we give an elementary proof using only the orthogonality of roots of unity.

\begin{lemma}[Aliasing identity]
\label{lem:app-aliasing}
Fixing $n$, the discrete Fourier transform \eqref{eq:app-mode-def} of the resolvent and its continuous Fourier coefficients \eqref{eq:app-cont-coeff} satisfy
\begin{equation}
  \label{eq:app-aliasing}
  S_m = \sum_{q \in \mathbb{Z}} C_{m + qn}.
\end{equation}
\end{lemma}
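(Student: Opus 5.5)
The plan is to expand the resolvent, as a function of $\theta$, in its Fourier series and substitute that series into the trapezoidal sum defining $S_m$. Since the unit circle avoids $\operatorname{\mathbf{Spec}}(A)$, the map $\theta \mapsto (I - \mathrm{e}^{-\mathrm{i}\theta}A)^{-1}$ is a real-analytic, $2\pi$-periodic matrix-valued function. Concretely, $z \mapsto (I - z^{-1}A)^{-1} = z(zI-A)^{-1}$ is analytic on an annulus $r_1 < \abs{z} < r_2$ with $r_1 < 1 < r_2$, chosen so that it contains no eigenvalue. The Laurent expansion on this annulus gives $(I - \mathrm{e}^{-\mathrm{i}\theta}A)^{-1} = \sum_{p\in\mathbb{Z}} C_p\, \mathrm{e}^{-\mathrm{i}p\theta}$, with coefficients exactly the $C_p$ of \eqref{eq:app-cont-coeff}. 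The sign convention matches: multiplying by $\mathrm{e}^{\mathrm{i}m\theta}$ and averaging picks out $C_m$. Standard Cauchy estimates give $\norm{C_p} \le K \rho^{\abs{p}}$ for some $\rho<1$. Hence the series converges absolutely and uniformly in $\theta$, and the sum $\sum_q C_{m+qn}$ converges absolutely.

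Next I would evaluate the series at the nodes $\theta_j = 2\pi j/n$ and insert it into \eqref{eq:app-mode-def}:
\begin{equation*}
S_m = \frac{1}{n}\sum_{j=0}^{n-1} \omega_n^{jm} \sum_{p\in\mathbb{Z}} C_p\, \omega_n^{-jp}
= \sum_{p\in\mathbb{Z}} C_p \cdot \frac{1}{n}\sum_{j=0}^{n-1} \omega_n^{j(m-p)}.
\end{equation*}
Swapping the finite $j$-sum with the absolutely convergent $p$-series is justified by the bound on $\norm{C_p}$. Orthogonality of roots of unity then gives $\frac{1}{n}\sum_{j} \omega_n^{j(m-p)} = 1$ if $p \equiv m \pmod n$ and $0$ otherwise. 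Writing $p = m + qn$ with $q\in\mathbb{Z}$ yields \eqref{eq:app-aliasing}.

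The only step requiring care is justifying pointwise convergence of the Fourier series together with absolute summability, which is what permits the interchange. I would handle it through the annulus of analyticity and the resulting geometric decay of $C_p$, as above.

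An alternative that avoids Laurent series is to use the decomposition \eqref{eq:app-resolvent-split}. On $\mathcal{H}_{\mathrm{in}}$ one has $(I-\mathrm{e}^{-\mathrm{i}\theta}A_{\mathrm{in}})^{-1}=\sum_{p\ge0}\mathrm{e}^{-\mathrm{i}p\theta}A_{\mathrm{in}}^p$. On $\mathcal{H}_{\mathrm{out}}$ one has $-\sum_{p\ge1}\mathrm{e}^{\mathrm{i}p\theta}A_{\mathrm{out}}^{-p}$. Both converge geometrically because the spectral radii of $A_{\mathrm{in}}$ and $A_{\mathrm{out}}^{-1}$ are below one. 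This identifies $C_p$ explicitly and the same root-of-unity argument applies. I would mention this route because it also feeds directly into the closed form \eqref{eq:mode-closed-form}.
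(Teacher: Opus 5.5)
Your proposal is correct and follows essentially the same route as the paper. Both expand the resolvent in its absolutely convergent Fourier series, using analyticity on an annulus around the unit circle, then evaluate at the nodes $\theta_j$ and interchange the sums. Root-of-unity orthogonality then keeps exactly the terms $p = m + qn$. Your Cauchy-estimate bound $\norm{C_p} \le K\rho^{\abs{p}}$ just makes explicit the absolute convergence that the paper asserts, and your alternative decomposition route is what the paper does separately in \lem{app-continuous}.
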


\begin{proof}
Since the resolvent $\theta \mapsto (I - \mathrm{e}^{-\mathrm{i}\theta}A)^{-1}$ is analytic on an annulus containing the unit circle, its Fourier series converges absolutely,
\begin{equation}
  \label{eq:app-resolvent-fourier}
  (I - \mathrm{e}^{-\mathrm{i}\theta}A)^{-1} = \sum_{p \in \mathbb{Z}} C_p\, \mathrm{e}^{-\mathrm{i}p\theta},
\end{equation}
with the coefficients $C_p$ of \eqref{eq:app-cont-coeff}. Evaluating \eqref{eq:app-resolvent-fourier} at the nodes $\theta_j = 2\pi j/n$, where $\mathrm{e}^{-\mathrm{i}\theta_j} = \omega_n^{-j}$, and substituting into the definition \eqref{eq:app-mode-def},
\begin{equation}
  S_m
  = \frac{1}{n} \sum_{j=0}^{n-1} \omega_n^{jm} \sum_{p \in \mathbb{Z}} C_p\, \omega_n^{-jp}
  = \sum_{p \in \mathbb{Z}} C_p \left( \frac{1}{n} \sum_{j=0}^{n-1} \omega_n^{j(m-p)} \right),
\end{equation}
where the interchange of the two sums is justified by the absolute convergence of \eqref{eq:app-resolvent-fourier}. The inner sum is the orthogonality relation of the roots of unity,
\begin{equation}
  \label{eq:app-root-orthogonality}
  \frac{1}{n} \sum_{j=0}^{n-1} \omega_n^{j(m-p)}
  = \begin{cases} 1 & p \equiv m \pmod{n}, \\ 0 & \text{otherwise},\end{cases}
\end{equation}
which retains exactly the terms $p = m + qn$, $q \in \mathbb{Z}$, giving $S_m = \sum_{q \in \mathbb{Z}} C_{m+qn}$.
\end{proof}

By \lem{app-aliasing}, the discrete Fourier transform is determined by the continuous coefficients $C_k$ through aliasing, which we now compute explicitly on each invariant subspace.

\begin{lemma}[Continuous Fourier transform of matrix resolvent]
\label{lem:app-continuous}
Let $A$ be a square matrix whose spectrum is disjoint from the unit circle, with Riesz projector $\Pi_{\mathrm{in}}$ \eqref{eq:app-riesz-unitcircle}, complementary projector $\Pi_{\mathrm{out}} = I - \Pi_{\mathrm{in}}$, and compressions $A_{\mathrm{in}} = A\Pi_{\mathrm{in}}$, $A_{\mathrm{out}} = A\Pi_{\mathrm{out}}$. The continuous Fourier coefficients $C_k$ \eqref{eq:app-cont-coeff} of the resolvent are
\begin{equation}
  \label{eq:app-Ck}
  C_k =
  \begin{cases}
    A_{\mathrm{in}}^{k} & k \geq 1, \\
    \Pi_{\mathrm{in}} & k = 0, \\
    -A_{\mathrm{out}}^{k} & k \leq -1,
  \end{cases}
\end{equation}
where $A_{\mathrm{out}}^{k}$ for $k \leq -1$ denotes a power of the inverse of $A_{\mathrm{out}}$ on $\mathcal{H}_{\mathrm{out}}$, extended by zero on $\mathcal{H}_{\mathrm{in}}$.
\end{lemma}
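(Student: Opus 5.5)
The plan is to split the resolvent along the invariant subspaces, expand each piece as a geometric series in $\mathrm{e}^{\pm\mathrm{i}\theta}$, and read off the Fourier coefficients by orthogonality of the exponentials on $[0,2\pi)$. By \eqref{eq:app-resolvent-split}, for every $\theta$,
\begin{equation*}
  (I - \mathrm{e}^{-\mathrm{i}\theta}A)^{-1}
  = (I - \mathrm{e}^{-\mathrm{i}\theta}A_{\mathrm{in}})^{-1}\Pi_{\mathrm{in}}
  + (I - \mathrm{e}^{-\mathrm{i}\theta}A_{\mathrm{out}})^{-1}\Pi_{\mathrm{out}}.
\end{equation*}
This holds because $A$ commutes with $\Pi_{\mathrm{in}}$, and $\Pi_{\mathrm{in}}$ is the oblique projection onto $\mathcal{H}_{\mathrm{in}}$ along $\mathcal{H}_{\mathrm{out}}$ (\lem{app-riesz-image}). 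Consequently $C_k$ splits as the sum of the corresponding coefficients of each piece.

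\textbf{Inner piece.} The restriction of $A_{\mathrm{in}}$ to $\mathcal{H}_{\mathrm{in}}$ has spectral radius $\rho_{\mathrm{in}}<1$, so Gelfand's formula gives $\norm{A_{\mathrm{in}}^{p}} \le C_r r^p$ for any $r\in(\rho_{\mathrm{in}},1)$. The Neumann series
\begin{equation*}
  (I - \mathrm{e}^{-\mathrm{i}\theta}A_{\mathrm{in}})^{-1}\Pi_{\mathrm{in}} = \sum_{p\ge 0} \mathrm{e}^{-\mathrm{i}p\theta} A_{\mathrm{in}}^{p}\Pi_{\mathrm{in}}
\end{equation*}
therefore converges uniformly in $\theta$, and we may integrate term by term against $\mathrm{e}^{\mathrm{i}k\theta}$. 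Since $\frac{1}{2\pi}\int_0^{2\pi}\mathrm{e}^{\mathrm{i}(k-p)\theta}\,\mathrm{d}\theta=\delta_{kp}$, the inner piece contributes $A_{\mathrm{in}}^k\Pi_{\mathrm{in}}$ for $k\ge 0$ and nothing for $k<0$. With the conventions that $A_{\mathrm{in}}^0\Pi_{\mathrm{in}}=\Pi_{\mathrm{in}}$ and $A_{\mathrm{in}}^k\Pi_{\mathrm{in}}=A_{\mathrm{in}}^k$ for $k\ge1$, this yields $\Pi_{\mathrm{in}}$ at $k=0$ and $A_{\mathrm{in}}^k$ for $k\ge1$.

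\textbf{Outer piece.} On $\mathcal{H}_{\mathrm{out}}$, factor out the invertible part:
\begin{equation*}
  I - \mathrm{e}^{-\mathrm{i}\theta}A_{\mathrm{out}} = -\mathrm{e}^{-\mathrm{i}\theta}A_{\mathrm{out}}\bigl(I - \mathrm{e}^{\mathrm{i}\theta}A_{\mathrm{out}}^{-1}\bigr).
\end{equation*}
Inverting gives
\begin{equation*}
  -\mathrm{e}^{\mathrm{i}\theta}A_{\mathrm{out}}^{-1}\sum_{q\ge0}\mathrm{e}^{\mathrm{i}q\theta}A_{\mathrm{out}}^{-q} = -\sum_{p\ge1}\mathrm{e}^{\mathrm{i}p\theta}A_{\mathrm{out}}^{-p},
\end{equation*}
where all operators are extended by zero on $\mathcal{H}_{\mathrm{in}}$. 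This series converges uniformly because $A_{\mathrm{out}}^{-1}$ has spectral radius below one on $\mathcal{H}_{\mathrm{out}}$. In the convention $\sum_p C_p\mathrm{e}^{-\mathrm{i}p\theta}$, the term $\mathrm{e}^{\mathrm{i}p\theta}$ corresponds to index $k=-p$. Orthogonality then gives a contribution $-A_{\mathrm{out}}^{k}$ for $k\le -1$ and zero for $k\ge 0$; in particular the outer piece adds nothing at $k=0$, which is consistent with $\Pi_{\mathrm{out}}$ being annihilated by the contour integral.

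\textbf{Combining.} Adding the two contributions gives exactly the three cases in \eqref{eq:app-Ck}.

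The only delicate point is justifying the interchange of sum and integral, since $A_{\mathrm{in}}$ and $A_{\mathrm{out}}^{-1}$ need not be contractions. I would handle this with the spectral-radius (Gelfand) bound restricted to each invariant subspace, which gives the geometric decay needed for uniform convergence. This also supplies the absolute convergence that was assumed in \lem{app-aliasing}. Keeping track of the sign and the index reversal in the outer expansion is the main bookkeeping hazard.
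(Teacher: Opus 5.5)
Your proposal is correct and follows essentially the same route as the paper: split the resolvent along $\mathcal{H}_{\mathrm{in}}$ and $\mathcal{H}_{\mathrm{out}}$, expand each piece as a Neumann series in $\mathrm{e}^{\mp\mathrm{i}\theta}$ (using the same factorization on the outer subspace), and read off the coefficients by orthogonality of the exponentials. Your Gelfand-formula justification of uniform convergence is in fact slightly more careful than the paper's, which asserts $\norm{A_{\mathrm{out}}^{-1}}<1$ when only the spectral radius of $A_{\mathrm{out}}^{-1}$ is guaranteed to be below one.
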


\begin{proof}
On $\mathcal{H}_{\mathrm{in}}$ all eigenvalues of $A_{\mathrm{in}}$ have modulus below $1$, so the integrand of \eqref{eq:app-cont-coeff} expands as the norm-convergent series
\begin{equation}
    \mathrm{e}^{\mathrm{i}\theta}(\mathrm{e}^{\mathrm{i}\theta}I - A_{\mathrm{in}})^{-1}
    = (I - \mathrm{e}^{-\mathrm{i}\theta}A_{\mathrm{in}})^{-1}
    = \sum_{m \geq 0} A_{\mathrm{in}}^{m}\, \mathrm{e}^{-\mathrm{i}m\theta}.
\end{equation}
Multiplying by $\mathrm{e}^{\mathrm{i}k\theta}$ and integrating over $[0, 2\pi]$ selects the term $m = k$, giving $A_{\mathrm{in}}^{k}$ for $k \geq 0$ and $0$ otherwise. On $\mathcal{H}_{\mathrm{out}}$ all eigenvalues of $A_{\mathrm{out}}$ have modulus above $1$, so $\norm{A_{\mathrm{out}}^{-1}} < 1$ and
\begin{equation}
    \mathrm{e}^{\mathrm{i}\theta}(\mathrm{e}^{\mathrm{i}\theta}I - A_{\mathrm{out}})^{-1}
    = -A_{\mathrm{out}}^{-1} \frac{\mathrm{e}^{\mathrm{i}\theta}}{I - \mathrm{e}^{\mathrm{i}\theta}A_{\mathrm{out}}^{-1}}
    = -\sum_{m \geq 0} A_{\mathrm{out}}^{-(m+1)}\, \mathrm{e}^{\mathrm{i}(m+1)\theta}.
\end{equation}
Multiplying by $\mathrm{e}^{\mathrm{i}k\theta}$ and integrating selects $m + 1 = -k$, that is $m = -k-1 \geq 0$, which requires $k \leq -1$; this yields $-A_{\mathrm{out}}^{k}$ for $k \leq -1$ and $0$ otherwise.
\end{proof}

\begin{theorem}[Discrete Fourier transform of matrix resolvent]
\label{thm:app-dft-resolvent}
Let $A$ be a square matrix whose spectrum is disjoint from the unit circle, with Riesz projector $\Pi_{\mathrm{in}}$ \eqref{eq:app-riesz-unitcircle}, complementary projector $\Pi_{\mathrm{out}} = I - \Pi_{\mathrm{in}}$, and compressions $A_{\mathrm{in}} = A\Pi_{\mathrm{in}}$, $A_{\mathrm{out}} = A\Pi_{\mathrm{out}}$. For $0 \leq m \leq n-1$, the discrete Fourier transform \eqref{eq:app-mode-def} of the resolvent is given by
\begin{equation}
  \label{eq:app-closed-modes}
  S_m
  = \frac{1}{n} \sum_{j=0}^{n-1} \omega_n^{jm} (I - \omega_n^{-j} A)^{-1}
  = A_{\mathrm{in}}^{m} (I - A_{\mathrm{in}}^{n})^{-1} \Pi_{\mathrm{in}}
  - A_{\mathrm{out}}^{m-n} (I - A_{\mathrm{out}}^{-n})^{-1} \Pi_{\mathrm{out}},
\end{equation}
where $A_{\mathrm{out}}^{-1}$ denotes the inverse of $A_{\mathrm{out}}$ on $\mathcal{H}_{\mathrm{out}}$ and $\omega_n = \mathrm{e}^{2\pi\mathrm{i}/n}$.
\end{theorem}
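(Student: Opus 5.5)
The plan is to combine the aliasing identity of \lem{app-aliasing} with the explicit continuous coefficients of \lem{app-continuous}, and then sum the resulting geometric series on each invariant subspace separately.

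Fix $0 \le m \le n-1$. By \lem{app-aliasing}, $S_m = \sum_{q \in \mathbb{Z}} C_{m+qn}$. Since $0 \le m \le n-1$, the index $m+qn$ is nonnegative exactly when $q \ge 0$ and is $\le -1$ exactly when $q \le -1$; the case $m+qn = 0$ occurs only for $m=0,q=0$. By \lem{app-continuous}, the nonnegative indices contribute $A_{\mathrm{in}}^{m+qn}$ (interpreting $A_{\mathrm{in}}^0$ as $\Pi_{\mathrm{in}}$, which is consistent with the $k=0$ case since $A_{\mathrm{in}}^0$ acts as the identity on $\mathcal{H}_{\mathrm{in}}$ extended by zero). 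The negative indices contribute $-A_{\mathrm{out}}^{m+qn}$. This gives the split
\begin{equation*}
  S_m = \sum_{q \ge 0} A_{\mathrm{in}}^{m}\bigl(A_{\mathrm{in}}^{n}\bigr)^{q}\,\Pi_{\mathrm{in}}
  \;-\; \sum_{r \ge 1} A_{\mathrm{out}}^{m-n}\bigl(A_{\mathrm{out}}^{-n}\bigr)^{r-1}\,\Pi_{\mathrm{out}},
\end{equation*}
where I reindex $q = -r$ in the second sum.

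Each series is a Neumann series. On $\mathcal{H}_{\mathrm{in}}$ the spectral radius of $A_{\mathrm{in}}^n$ is below one, so $\sum_{q\ge0}(A_{\mathrm{in}}^n)^q = (I - A_{\mathrm{in}}^n)^{-1}$ on that subspace. Similarly, on $\mathcal{H}_{\mathrm{out}}$ the spectral radius of $A_{\mathrm{out}}^{-n}$ is below one, so $\sum_{r\ge1}(A_{\mathrm{out}}^{-n})^{r-1} = (I-A_{\mathrm{out}}^{-n})^{-1}$. Because powers of $A_{\mathrm{in}}$ commute with each other and with $\Pi_{\mathrm{in}}$, and likewise on the out side, the factors can be ordered as in \eqref{eq:app-closed-modes}.

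The only delicate step is justifying the rearrangement and convergence. Aliasing already requires absolute convergence of $\sum_p C_p$. Absolute summability of the two geometric series follows from the spectral radius condition via Gelfand's formula: there is some $\rho<1$ and a constant $K$ with $\norm{A_{\mathrm{in}}^k\Pi_{\mathrm{in}}}, \norm{A_{\mathrm{out}}^{-k}\Pi_{\mathrm{out}}} \le K\rho^k$. This lets me regroup terms and factor out $A_{\mathrm{in}}^m$ and $A_{\mathrm{out}}^{m-n}$.

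A quick sanity check I will include is the case $m=0$. The first term then becomes $\Pi_{\mathrm{in}} + A_{\mathrm{in}}^n(I-A_{\mathrm{in}}^n)^{-1}$, which matches \eqref{eq:mode-zero-error}.
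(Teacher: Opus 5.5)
Your proposal is correct and follows the paper's proof essentially step for step. Like the paper, you combine the aliasing identity with the explicit continuous coefficients, then sum $\sum_{q\ge 0}A_{\mathrm{in}}^{m+qn}$ and $-\sum_{q\ge 1}A_{\mathrm{out}}^{m-qn}$ as geometric series on the two invariant subspaces. Your Gelfand-formula justification of absolute convergence is a slight improvement in rigor: the paper's proof of the continuous-coefficient lemma asserts $\norm{A_{\mathrm{out}}^{-1}}<1$, whereas only a spectral-radius bound is actually available there.
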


\begin{proof}
By \lem{app-aliasing}, $S_m = \sum_{q \in \mathbb{Z}} C_{m+qn}$, and by \lem{app-continuous} the coefficients $C_k$ are nonzero only when $k \geq 0$ on $\mathcal{H}_{\mathrm{in}}$ and $k \leq -1$ on $\mathcal{H}_{\mathrm{out}}$. On $\mathcal{H}_{\mathrm{in}}$, for $0 \leq m \leq n-1$ the surviving terms are $m, m+n, m+2n, \dots$,
\begin{equation}
  S_m\big|_{\mathcal{H}_{\mathrm{in}}}
  = \sum_{q \geq 0} A_{\mathrm{in}}^{m+qn}
  = A_{\mathrm{in}}^{m} (I - A_{\mathrm{in}}^{n})^{-1}.
\end{equation}
On $\mathcal{H}_{\mathrm{out}}$ the surviving terms are $m-n, m-2n, \dots$,
\begin{equation}
  S_m\big|_{\mathcal{H}_{\mathrm{out}}}
  = -\sum_{q \geq 1} A_{\mathrm{out}}^{m-qn}
  = -A_{\mathrm{out}}^{m-n} (I - A_{\mathrm{out}}^{-n})^{-1}.
\end{equation}
Summing the two contributions gives \eqref{eq:app-closed-modes}.
\end{proof}

The closed form of \thm{app-dft-resolvent} reduces the bounds we need on the Fourier modes $S_m$ to the growth of the powers $A_{\mathrm{in}}^{j}$ and the inverse powers $A_{\mathrm{out}}^{-j}$, which decay geometrically under either of our assumptions on $A$.

\begin{corollary}[Fourier mode bounds]
\label{cor:app-mode-bounds}
Let $A$ be a square matrix with Riesz projector $\Pi_{\mathrm{in}}$ \eqref{eq:app-riesz-unitcircle} and compressions $A_{\mathrm{in}} = A\Pi_{\mathrm{in}}$, $A_{\mathrm{out}} = A\Pi_{\mathrm{out}}$, and let $S_m$ \eqref{eq:app-mode-def} be the discrete Fourier transform of the resolvent
\eqref{eq:app-closed-modes}.
\begin{enumerate}
  \item If the compressions $A_{\mathrm{in}}$ and $A_{\mathrm{out}}^{-1}$ admit spectral sets \eqref{eq:app-spectral-set} with constant $c$ and gap $\delta$ satisfying \eqref{eq:app-spectral-hyp}, then their powers decay geometrically,
    \begin{equation}
      \label{eq:app-power-decay}
      \norm{A_{\mathrm{in}}^{j}} \leq c\,(1 - \delta)^{j},
      \qquad
      \norm{A_{\mathrm{out}}^{-j}} \leq c\,(1 - \delta)^{j},
      \qquad j \geq 0.
    \end{equation}
    In particular, this holds with:
    \begin{enumerate}
      \item $c = 1+\sqrt{2}$ and $\delta = \delta_{\mathrm{num}}$ under the numerical-range
        assumption \eqref{eq:app-nr-assumption};
      \item $c = \kappa_V$ and $\delta = \delta_{\mathrm{eig}}/2$ under the diagonalizability assumption \eqref{eq:app-kappa-V}.
    \end{enumerate}
  \item Whenever \eqref{eq:app-power-decay} holds, once $n \geq \frac{1}{\delta}\log(2c)$:
    \begin{enumerate}
      \item the zeroth mode approximates the eigenprojection,
        \begin{equation}
          \label{eq:app-discretization-error}
          \norm{S_0 - \Pi_{\mathrm{in}}} \leq 4c\,(1 - \delta)^{n}\,\norm{\Pi_{\mathrm{in}}};
        \end{equation}
      \item every Fourier mode is uniformly bounded,
        \begin{equation}
          \label{eq:app-mode-uniform}
          \norm{S_m} \leq 4c\,\norm{\Pi_{\mathrm{in}}},
          \qquad 0 \leq m \leq n-1.
        \end{equation}
    \end{enumerate}
\end{enumerate}
\end{corollary}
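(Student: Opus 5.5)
Part 1 follows directly from the spectral-set definition applied to monomials. For $A_{\mathrm{in}}$ restricted to $\mathcal{H}_{\mathrm{in}}$, take $g(z)=z^{j}$, which is entire. Then \eqref{eq:app-spectral-set} together with $\mathcal{K}_{\mathrm{in}}\subseteq\{\abs{z}\le 1-\delta\}$ gives $\norm{A_{\mathrm{in}}^{j}}\le c(1-\delta)^{j}$. The same argument applies verbatim to $A_{\mathrm{out}}^{-1}$ on $\mathcal{H}_{\mathrm{out}}$. Here the norm is that of the restriction, and I will keep track that the full-space operator is $A_{\mathrm{in}}^{j}\Pi_{\mathrm{in}}$, which costs an extra factor $\norm{\Pi_{\mathrm{in}}}$. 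The two special cases (a) and (b) are then read off from \lem{app-spectral-instances} and the reduction $\delta=\delta_{\mathrm{eig}}/2$ established just before \prop{app-resolvent-bound}.

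For Part 2, I start from the closed form in \thm{app-dft-resolvent}. The first step is to control $(I-A_{\mathrm{in}}^{n})^{-1}$ on $\mathcal{H}_{\mathrm{in}}$. The hypothesis $n\ge\frac1\delta\log(2c)$ gives $c(1-\delta)^n\le c\,\mathrm{e}^{-n\delta}\le \tfrac12$, so $\norm{A_{\mathrm{in}}^{n}}\le\tfrac12$. A Neumann series then gives $\norm{(I-A_{\mathrm{in}}^{n})^{-1}}\le 2$, and likewise $\norm{(I-A_{\mathrm{out}}^{-n})^{-1}}\le 2$. For the uniform bound (b), write $S_m=A_{\mathrm{in}}^{m}(I-A_{\mathrm{in}}^n)^{-1}\Pi_{\mathrm{in}}-A_{\mathrm{out}}^{-(n-m)}(I-A_{\mathrm{out}}^{-n})^{-1}\Pi_{\mathrm{out}}$. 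Since $0\le m\le n-1$, the exponents satisfy $m\ge0$ and $n-m\ge1$, so \eqref{eq:app-power-decay} bounds each power by $c$. This gives $\norm{S_m}\le 2c\norm{\Pi_{\mathrm{in}}}+2c\norm{\Pi_{\mathrm{out}}}$, and \lem{app-projector-norm} turns $\norm{\Pi_{\mathrm{out}}}$ into $\norm{\Pi_{\mathrm{in}}}$, for a total of $4c\norm{\Pi_{\mathrm{in}}}$.

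For the discretization error (a), I use \eqref{eq:mode-zero-error} in the form $S_0-\Pi_{\mathrm{in}}=A_{\mathrm{in}}^{n}(I-A_{\mathrm{in}}^n)^{-1}\Pi_{\mathrm{in}}-A_{\mathrm{out}}^{-n}(I-A_{\mathrm{out}}^{-n})^{-1}\Pi_{\mathrm{out}}$. This follows from \thm{app-dft-resolvent} with $m=0$, using $(I-X)^{-1}=I+X(I-X)^{-1}$ on $\mathcal{H}_{\mathrm{in}}$ and $A_{\mathrm{out}}^{-n}(I-A_{\mathrm{out}}^{-n})^{-1}$ directly on $\mathcal{H}_{\mathrm{out}}$. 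Each term is then bounded by $c(1-\delta)^n\cdot2\cdot\norm{\Pi}$, which gives $4c(1-\delta)^n\norm{\Pi_{\mathrm{in}}}$.

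The only delicate points are bookkeeping. First, the powers must be bounded on the restricted subspaces, with the projector norms appended exactly once. Second, the degenerate cases $\Pi_{\mathrm{in}}\in\{0,I\}$, where \lem{app-projector-norm} does not apply, need separate handling. In those cases one of the two terms simply vanishes, and the bounds hold trivially since $\norm{\Pi_{\mathrm{in}}}\ge1$ whenever $\Pi_{\mathrm{in}}\neq0$. If $\Pi_{\mathrm{in}}=0$, the inner term vanishes but the outer term does not, so the bound must instead be stated with $\norm{\Pi_{\mathrm{out}}}=1$; I would remark that this case is excluded or trivial.
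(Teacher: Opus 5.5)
Your proposal is correct and follows the paper's proof step for step. Part 1 applies the spectral-set bound to monomials $g(z)=z^j$. Part 2 combines the closed form of \thm{app-dft-resolvent}, the Neumann bound $\norm{(I-A_{\mathrm{in}}^n)^{-1}},\norm{(I-A_{\mathrm{out}}^{-n})^{-1}}\le 2$ (from $n\ge\frac1\delta\log(2c)$), and \lem{app-projector-norm}. Your remark on degenerate cases is a genuine refinement the paper omits: the case $\Pi_{\mathrm{in}}=I$ is harmless, but for $\Pi_{\mathrm{in}}=0$ (e.g.\ $A=2I$) the stated bounds actually fail, since $S_0=-A_{\mathrm{out}}^{-n}(I-A_{\mathrm{out}}^{-n})^{-1}\neq 0$ while the right-hand sides vanish, so that case must be excluded or the bound restated with $\norm{\Pi_{\mathrm{out}}}$.
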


\begin{proof}
For part (1), apply \eqref{eq:app-spectral-set} to $g(z) = z^{j}$ on $\mathcal{K}_{\mathrm{in}}$ for $A_{\mathrm{in}}$, and on $\mathcal{K}_{\mathrm{out}}$ for $A_{\mathrm{out}}^{-1}$. Both sets lie in $\{z \colon \abs{z} \leq 1 - \delta\}$, so $\norm{A_{\mathrm{in}}^{j}} \leq c(1-\delta)^{j}$ and $\norm{A_{\mathrm{out}}^{-j}} \leq c(1-\delta)^{j}$. The two special cases follow from \lem{app-spectral-instances}. In case (b), every eigenvalue of $A_{\mathrm{out}}$ has modulus at least $1 + \delta_{\mathrm{eig}}$, so every eigenvalue of $A_{\mathrm{out}}^{-1}$ has modulus at most $1/(1+\delta_{\mathrm{eig}}) \leq 1 - \delta_{\mathrm{eig}}/2$. This gives $\delta = \delta_{\mathrm{eig}}/2$.

For part (2), assume \eqref{eq:app-power-decay}. Since $n \geq \frac{1}{\delta}\log(2c)$ and $(1-\delta)^{n} \leq \mathrm{e}^{-\delta n}$,
\begin{equation}
  \norm{A_{\mathrm{in}}^{n}}, \norm{A_{\mathrm{out}}^{-n}}
  \leq c\,(1-\delta)^{n} \leq c\,\mathrm{e}^{-\delta n} \leq \tfrac{1}{2},
\end{equation}
so $(I - A_{\mathrm{in}}^{n})^{-1}$ and $(I - A_{\mathrm{out}}^{-n})^{-1}$ have norm at most $2$.

For (2a), setting $m = 0$ in the closed form \eqref{eq:app-closed-modes} and subtracting $\Pi_{\mathrm{in}} = (I - A_{\mathrm{in}}^{n})^{-1}(I - A_{\mathrm{in}}^{n})\Pi_{\mathrm{in}}$,
\begin{equation}
  S_0 - \Pi_{\mathrm{in}}
  = A_{\mathrm{in}}^{n}(I - A_{\mathrm{in}}^{n})^{-1}\Pi_{\mathrm{in}}
  - A_{\mathrm{out}}^{-n}(I - A_{\mathrm{out}}^{-n})^{-1}\Pi_{\mathrm{out}}.
\end{equation}
Taking norms with $\norm{\Pi_{\mathrm{out}}} = \norm{\Pi_{\mathrm{in}}}$ (\lem{app-projector-norm}), the inverse factors at most $2$, and $\norm{A_{\mathrm{in}}^{n}}, \norm{A_{\mathrm{out}}^{-n}} \leq c(1-\delta)^{n}$ gives \eqref{eq:app-discretization-error}.

For (2b), the closed form \eqref{eq:app-closed-modes} with $\norm{A_{\mathrm{in}}^{m}}, \norm{A_{\mathrm{out}}^{m-n}} \leq c$ for $0 \leq m \leq n-1$, the inverse factors $(I - A_{\mathrm{in}}^{n})^{-1}$ and $(I - A_{\mathrm{out}}^{-n})^{-1}$ at most $2$ in spectral norm, and $\norm{\Pi_{\mathrm{out}}} = \norm{\Pi_{\mathrm{in}}}$ gives
\begin{equation}
  \norm{S_m}
  \leq \norm{A_{\mathrm{in}}^{m}}\,\norm{(I - A_{\mathrm{in}}^{n})^{-1}}\,\norm{\Pi_{\mathrm{in}}}
  + \norm{A_{\mathrm{out}}^{m-n}}\,\norm{(I - A_{\mathrm{out}}^{-n})^{-1}}\,\norm{\Pi_{\mathrm{out}}}
  \leq 4c\,\norm{\Pi_{\mathrm{in}}}.
\end{equation}
\end{proof}

\subsection{Two-sided block preconditioning of resolvent integral}
\label{append:precond-precondition}

The standard block encoding of \append{resolvent} inverts the block-diagonal matrix $M = \sum_{j=0}^{n-1} \ketbra{j}{j} \otimes (I - \omega_n^{-j} A)$ \eqref{eq:app-M-blockdiag} and extracts the single block $S_0 \approx \Pi_{\mathrm{in}}$. However, this normalizes $\Pi_{\mathrm{in}}$ by the norm of the whole inverse matrix $\norm{M^{-1}} = \alpha_{\mathrm{res}} = \operatorname{\mathbf{O}}(\norm{\Pi_{\mathrm{in}}}/\delta)$, even though the extracted block has norm only $\operatorname{\mathbf{O}}(\norm{\Pi_{\mathrm{in}}})$ by \cor{app-mode-bounds}. The two-sided preconditioning rescales the ancilla registers so that the extracted block is normalized near $\norm{\Pi_{\mathrm{in}}}$, while the condition number of the inversion, and hence its query complexity of quantum implementation, is preserved up to a logarithmic factor.

Fix two scaling parameters $0 < t_1, t_2 \leq 1$ and define diagonal scaling operators on the ancilla
register,
\begin{equation}
  \label{eq:app-scaling-ops}
  T_1 = t_1 \ketbra{0}{0} + \sum_{k=1}^{n-1} \ketbra{k}{k}
  = \begin{bmatrix}
    t_1 & & & \\
        & 1 & & \\
        & & \ddots & \\
        & & & 1
  \end{bmatrix},
  \qquad
  T_2 = t_2 \ketbra{0}{0} + \sum_{k=1}^{n-1} \ketbra{k}{k}
  = \begin{bmatrix}
    t_2 & & & \\
        & 1 & & \\
        & & \ddots & \\
        & & & 1
  \end{bmatrix}.
\end{equation}
A block encoding of the preconditioned matrix
\begin{equation}
  \label{eq:app-precond-matrix}
  (T_1 \otimes I)(\operatorname{\mathbf{QFT}} \otimes I)\, M\, (\operatorname{\mathbf{QFT}}^\dagger \otimes I)(T_2 \otimes I)
\end{equation}
is then obtained from the block encoding of $M$ using the product rule from~\lem{app-be-product}. Note that $T_1$ and $T_2$ are diagonal contractions and the quantum Fourier transform $\mathbf{QFT}$ is a unitary; hence the preconditioned matrix can be block encoded with no normalization overhead.

Applying the inversion rule of~\lem{app-be-inversion} block encodes the inverse of the preconditioned matrix \eqref{eq:app-precond-matrix}:
\begin{equation}
  \label{eq:app-precond-inverse}
  \left[ (T_1 \otimes I)(\operatorname{\mathbf{QFT}} \otimes I)\, M\, (\operatorname{\mathbf{QFT}}^\dagger \otimes I)(T_2 \otimes I) \right]^{-1}
  = (T_2^{-1} \otimes I)(\operatorname{\mathbf{QFT}} \otimes I)\, M^{-1}\, (\operatorname{\mathbf{QFT}}^\dagger \otimes I)(T_1^{-1} \otimes I).
\end{equation}
Here, the middle factor $(\operatorname{\mathbf{QFT}} \otimes I)\, M^{-1}\, (\operatorname{\mathbf{QFT}}^\dagger \otimes I)$ is the block-circulant operator \eqref{eq:app-block-circulant}, with $(k,l)$ block $S_{k-l}$.
Taking the scaling operators $T_1$ and $T_2$ into account, the $(k,l)$ block is $S_{k-l}$ scaled by $1/(t_1 t_2)$ when $k = l = 0$, by $1/t_2$ on the rest of row $0$, by $1/t_1$ on the rest of column $0$, and left unchanged otherwise:
\begin{equation}
  \left( T_2^{-1} \otimes I \right)
  \left( \operatorname{\mathbf{QFT}} \otimes I \right)
  M^{-1}
  \left( \operatorname{\mathbf{QFT}}^\dagger \otimes I \right)
  \left( T_1^{-1} \otimes I \right)
  \;=\;
\begin{bmatrix}
  \dfrac{1}{t_1 t_2}\, S_0 & \dfrac{1}{t_2}\, S_{n-1} & \dfrac{1}{t_2}\, S_{n-2} & \cdots & \dfrac{1}{t_2}\, S_1 \\[10pt]
  \dfrac{1}{t_1}\, S_1     & S_0                      & S_{n-1}                  & \cdots & S_2 \\[10pt]
  \dfrac{1}{t_1}\, S_2     & S_1                      & S_0                      & \cdots & S_3 \\[10pt]
  \vdots                   & \vdots                   & \vdots                   & \ddots & \vdots \\[10pt]
  \dfrac{1}{t_1}\, S_{n-1} & S_{n-2}                  & S_{n-3}                  & \cdots & S_0
\end{bmatrix}.
  \label{eq:app-preconditioned-inverse}
\end{equation}

\begin{proposition}[Preconditioned norm bound]
\label{prop:app-precond-norm}
Let $0 < t_1, t_2 \leq 1$, and let $S_m$ \eqref{eq:app-mode-def} be the discrete Fourier transform of the resolvent. The inverse \eqref{eq:app-preconditioned-inverse} of the preconditioned matrix has norm bounded by
\begin{equation}
  \label{eq:app-precond-norm-bound}
\begin{aligned}
    \alpha_{\mathrm{precond}}
    &=\norm{(T_2^{-1} \otimes I)(\operatorname{\mathbf{QFT}} \otimes I) M^{-1} (\operatorname{\mathbf{QFT}}^\dagger \otimes I)(T_1^{-1} \otimes I)}\\
  &\leq \frac{\norm{S_0}}{t_1 t_2}
  + \frac{\sqrt{n-1}}{t_1}\, \max_{m} \norm{S_m}
  + \frac{\sqrt{n-1}}{t_2}\, \max_{m} \norm{S_m}
  + \norm{M^{-1}}.
\end{aligned}
\end{equation}
\end{proposition}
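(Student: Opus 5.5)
The plan is to decompose the operator in \eqref{eq:app-preconditioned-inverse} into four pieces according to the ancilla index structure, and bound each by the triangle inequality. Write $P_0 = \ketbra{0}{0}$ and $P_\perp = I - P_0$ on the ancilla register. Then insert $I = (P_0 + P_\perp)\otimes I$ on both sides of the block-circulant operator $Z := (\operatorname{\mathbf{QFT}} \otimes I) M^{-1} (\operatorname{\mathbf{QFT}}^\dagger \otimes I)$. Since $T_2^{-1} P_0 = t_2^{-1} P_0$, $T_2^{-1} P_\perp = P_\perp$, and likewise for $T_1^{-1}$, the operator equals
\begin{equation*}
\frac{1}{t_1t_2}(P_0\otimes I)Z(P_0\otimes I) + \frac{1}{t_2}(P_0\otimes I)Z(P_\perp\otimes I) + \frac{1}{t_1}(P_\perp\otimes I)Z(P_0\otimes I) + (P_\perp\otimes I)Z(P_\perp\otimes I).
\end{equation*}
I then bound the four terms separately.

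The first term is $\ketbra{0}{0}\otimes S_0$, which has norm $\norm{S_0}$. The last term is a compression of $Z$ by orthogonal projectors, so its norm is at most $\norm{Z} = \norm{M^{-1}}$, because the QFT is unitary.

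The two middle terms are the main (though mild) step. The left-column term is $\sum_{k=1}^{n-1}\ketbra{k}{0}\otimes S_k$. Its norm equals that of the stacked column operator $x \mapsto (S_1x,\dots,S_{n-1}x)$, whose norm squared is $\norm{\sum_k S_k^\dagger S_k} \le (n-1)\max_m\norm{S_m}^2$. This gives the bound $\sqrt{n-1}\,\max_m\norm{S_m}$. The top-row term is $\sum_{l=1}^{n-1}\ketbra{0}{l}\otimes S_{n-l}$. Its norm squared is $\norm{\sum_l S_{n-l}S_{n-l}^\dagger}\le (n-1)\max_m\norm{S_m}^2$, which gives the same bound. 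Multiplying these by the factors $1/t_1$ and $1/t_2$ respectively, and summing all four contributions, yields \eqref{eq:app-precond-norm-bound}.

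No earlier result is needed beyond the block-circulant identity \eqref{eq:app-block-circulant} and unitarity of the QFT. The only point needing care is the bookkeeping of which scaling attaches to the row versus the column, and that bookkeeping is symmetric in the final bound anyway.
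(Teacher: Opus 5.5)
Your proposal is correct and follows the paper's proof essentially verbatim. Both use the same four-way split (corner, rest of the top row scaled by $1/t_2$, rest of the left column scaled by $1/t_1$, remaining principal submatrix) under the triangle inequality, bound the row by $\norm{\sum_l S_{n-l}S_{n-l}^\dagger}^{1/2}$ and the column by its $S_k^\dagger S_k$ analogue, and bound the last block as a compression of the unitarily conjugated $M^{-1}$; your projectors $P_0, P_\perp$ only make the paper's block bookkeeping explicit, and your scaling assignments are right.
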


\begin{proof}
Applying the triangle inequality to the four blocks of \eqref{eq:app-preconditioned-inverse}---the corner, the rest of the top row, the rest of the left column, and the remaining submatrix---gives
\begin{equation}
  \label{eq:app-precond-four-terms}
  \alpha_{\mathrm{precond}}
  \leq \frac{\norm{S_0}}{t_1 t_2}
  + \frac{1}{t_2}\norm{\begin{bmatrix} S_{n-1} & \cdots & S_1 \end{bmatrix}}
  + \frac{1}{t_1} \norm{\begin{bmatrix} S_1 \\ \vdots \\ S_{n-1} \end{bmatrix}}
  + \norm{\begin{bmatrix} S_0 & \cdots & S_2 \\ \vdots & \ddots & \vdots \\ S_{n-2} & \cdots & S_0 \end{bmatrix}}.
\end{equation}
Here, the first term is $\frac{\norm{S_0}}{t_1 t_2}$. The second is a single row of $n-1$ blocks, with norm at most 
\begin{equation}
\begin{aligned}
  \norm{\begin{bmatrix} S_{n-1} & \cdots & S_1 \end{bmatrix}}
  &= \norm{\begin{bmatrix} S_{n-1} & \cdots & S_1 \end{bmatrix}
\begin{bmatrix} S_{n-1}^\dagger \\ \vdots \\ S_1^\dagger \end{bmatrix}}^{1/2}
  = \norm{\sum_{l=1}^{n-1} S_{n-l} S_{n-l}^\dagger}^{1/2}\\
  &\leq \left( \sum_{l=1}^{n-1} \norm{S_{n-l}}^2 \right)^{1/2}
  \leq \sqrt{n-1}\, \max_{m} \norm{S_m}.
\end{aligned}
\end{equation}
The third is a single column of $n-1$ blocks and can be handled similarly. Finally, the fourth is a principal submatrix of the unscaled block-circulant operator \eqref{eq:app-block-circulant}, so its norm is at most $\norm{\left( \operatorname{\mathbf{QFT}} \otimes I \right) M^{-1} \left( \operatorname{\mathbf{QFT}}^\dagger \otimes I \right)}=\norm{M^{-1}}$. Substituting these into \eqref{eq:app-precond-four-terms} gives \eqref{eq:app-precond-norm-bound}.
\end{proof}

As $t_1, t_2$ decrease, the corner and the block row and column in \eqref{eq:app-precond-norm-bound} grow, so $\alpha_{\mathrm{precond}}$ increases; but the extracted block encoding is normalized by $t_1 t_2\,\alpha_{\mathrm{precond}}$, which decreases at a higher rate. To manage this tradeoff, we take
\begin{equation}
  \label{eq:app-precond-t-choice}
  t_1 = t_2 = \frac{1}{\sqrt{n}}.
\end{equation}
Substituting into
\eqref{eq:app-precond-norm-bound},
\begin{equation}
  \label{eq:app-precond-balanced}
  \alpha_{\mathrm{precond}}
  \leq n\,\norm{S_0}
  + 2\sqrt{n(n-1)}\,\max_{m}\norm{S_m}
  + \alpha_{\mathrm{res}}
  \leq 3n\,\max_{m}\norm{S_m}
  + \alpha_{\mathrm{res}},
\end{equation}
and the extracted normalization factor becomes
\begin{equation}
  \label{eq:app-precond-extracted}
  \alpha_{\Pi}
  =t_1 t_2\,\alpha_{\mathrm{precond}}
  \leq 3\,\max_{m}\norm{S_m}
  + \frac{\alpha_{\mathrm{res}}}{n}.
\end{equation}

\begin{theorem}[Quantum oblique eigenprojection on the unit disk]
\label{thm:app-disk-main}
Let $A$ be a square matrix accessed through a block encoding $A/\alpha_A$ with normalization factor $\alpha_A$, whose spectrum is disjoint from the unit circle, with Riesz projector $\Pi_{\mathrm{in}}$ \eqref{eq:app-riesz-unitcircle} and compressions $A_{\mathrm{in}} = A\Pi_{\mathrm{in}}$, $A_{\mathrm{out}} = A\Pi_{\mathrm{out}}$.

Assume the compressions $A_{\mathrm{in}}$ and $A_{\mathrm{out}}^{-1}$ admit spectral sets \eqref{eq:app-spectral-set} with constant $c$ and gap $\delta$ satisfying \eqref{eq:app-spectral-hyp}. For any $\varepsilon > 0$, the two-sided block preconditioning \eqref{eq:app-preconditioned-inverse} with $t_1 = t_2 = 1/\sqrt{n}$ and $n = \operatorname{\pmb{\Theta}}(\frac{1}{\delta}\log\frac{c\,\norm{\Pi_{\mathrm{in}}}}{\varepsilon})$ block encodes $\Pi_{\mathrm{in}}$ with normalization factor
\begin{equation}
  \label{eq:app-main-normalization}
  \alpha_\Pi = \operatorname{\mathbf{O}}(c\,\norm{\Pi_{\mathrm{in}}}),
\end{equation}
near its optimal value $\norm{\Pi_{\mathrm{in}}}$, and accuracy $\varepsilon$, using
\begin{equation}
  \label{eq:app-main-queries}
  \operatorname{\mathbf{O}}\!\left(\frac{c\,\alpha_A\,\norm{\Pi_{\mathrm{in}}}}{\delta}
  \log\frac{c\,\norm{\Pi_{\mathrm{in}}}}{\varepsilon} \log\frac{1}{\varepsilon}\right)
\end{equation}
queries to the block encoding $A/\alpha_A$. In particular, this holds:
\begin{enumerate}
  \item with $c = 1+\sqrt{2}$ and $\delta = \delta_{\mathrm{num}}$ under the numerical-range
    assumption \eqref{eq:app-nr-assumption};
  \item with $c = \kappa_V$ and $\delta = \delta_{\mathrm{eig}}/2$ under the diagonalizability
    assumption \eqref{eq:app-kappa-V}.
\end{enumerate}
\end{theorem}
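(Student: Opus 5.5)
The plan is to assemble the pieces already established: the closed form of the Fourier modes (\thm{app-dft-resolvent}), the mode and discretization bounds (\cor{app-mode-bounds}), the resolvent bound (\prop{app-resolvent-bound}), the preconditioned norm bound (\prop{app-precond-norm}) with the balanced choice \eqref{eq:app-precond-t-choice}, and finally the block-encoding lemmas.

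First fix $n=\lceil \frac{1}{\delta}\log\frac{C c\norm{\Pi_{\mathrm{in}}}}{\varepsilon}\rceil$ for a suitable constant $C$. Since $\norm{\Pi_{\mathrm{in}}}\ge 1$ and we may take $\varepsilon\le 1$, this also satisfies $n\ge \frac1\delta\log(2c)$. By \cor{app-mode-bounds}(2a) this gives $\norm{S_0-\Pi_{\mathrm{in}}}\le 4c\,\mathrm{e}^{-\delta n}\norm{\Pi_{\mathrm{in}}}\le \varepsilon/2$. By (2b) it gives $\max_m\norm{S_m}\le 4c\norm{\Pi_{\mathrm{in}}}$. \prop{app-resolvent-bound} gives $\alpha_{\mathrm{res}}\le 2c\norm{\Pi_{\mathrm{in}}}/\delta$, and $n\ge 1/\delta$ up to the log factor. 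Substituting into \eqref{eq:app-precond-balanced} yields $\alpha_{\mathrm{precond}}\le 12cn\norm{\Pi_{\mathrm{in}}}+2c\norm{\Pi_{\mathrm{in}}}/\delta=\operatorname{\mathbf{O}}(cn\norm{\Pi_{\mathrm{in}}})$. Then \eqref{eq:app-precond-extracted} gives $\alpha_\Pi\le 12c\norm{\Pi_{\mathrm{in}}}+2c\norm{\Pi_{\mathrm{in}}}/(n\delta)=\operatorname{\mathbf{O}}(c\norm{\Pi_{\mathrm{in}}})$, which is \eqref{eq:app-main-normalization}.

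For the algorithm, block encode $M$ with normalization $\alpha_A+1$ via \lem{app-be-lincomb}. Form the preconditioned matrix \eqref{eq:app-precond-matrix} via \lem{app-be-product}. Here the QFT is unitary, and $T_1,T_2$ are contractions with trivial block encodings: a single-qubit rotation controlled on ancilla state $\ket0$. Its normalization therefore stays $\operatorname{\mathbf{O}}(\alpha_A)$. Invert it with \lem{app-be-inversion}, using the upper bound $\alpha_{\mathrm{precond}}$ on the inverse norm. This produces a block encoding of the inverse divided by $2\alpha_{\mathrm{precond}}$, to accuracy $\varepsilon'$, using $\operatorname{\mathbf{O}}(\alpha_A\alpha_{\mathrm{precond}}\log(1/\varepsilon'))=\operatorname{\mathbf{O}}(\frac{c\alpha_A\norm{\Pi_{\mathrm{in}}}}{\delta}\log\frac{c\norm{\Pi_{\mathrm{in}}}}{\varepsilon}\log\frac1{\varepsilon'})$ queries. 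Projecting the Fourier ancilla onto $\ket0$ on both sides extracts the top-left block $S_0/(t_1t_2)$. The result is a block encoding of $S_0$ with normalization $2t_1t_2\alpha_{\mathrm{precond}}=\operatorname{\mathbf{O}}(c\norm{\Pi_{\mathrm{in}}})$.

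The main subtlety is error bookkeeping. An inversion error $\varepsilon'$ in the normalized encoding becomes an error $2t_1t_2\alpha_{\mathrm{precond}}\varepsilon'=\operatorname{\mathbf{O}}(c\norm{\Pi_{\mathrm{in}}}\varepsilon')$ in $S_0$. Choosing $\varepsilon'=\Theta(\varepsilon/(c\norm{\Pi_{\mathrm{in}}}))$ keeps this below $\varepsilon/2$. Combined with the discretization error, the total error is at most $\varepsilon$. The resulting $\log(1/\varepsilon')$ is absorbed into the two logarithmic factors stated in \eqref{eq:app-main-queries}, since $\log(c\norm{\Pi_{\mathrm{in}}}/\varepsilon)$ already appears there; if needed, state the product of logs with $\varepsilon$ replaced by this rescaled accuracy.

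A second point to check is that $\alpha_{\mathrm{precond}}$ is a valid a priori upper bound computable from the known parameters $c,\delta,\norm{\Pi_{\mathrm{in}}}$, which holds by the explicit estimate above. The two special cases then follow from \lem{app-spectral-instances} exactly as in \cor{app-mode-bounds}(1).
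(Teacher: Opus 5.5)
Your proposal is correct and follows the paper's proof essentially step for step: mode and resolvent bounds, the balanced choice $t_1=t_2=1/\sqrt{n}$, inversion of the preconditioned matrix, and extraction of the top-left block. Your explicit combination of discretization and inversion errors, and your check that $n\ge\frac1\delta\log(2c)$, are more careful than the paper's own argument.

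One caveat on the logarithms. If accuracy is measured on the unnormalized $\Pi_{\mathrm{in}}$, as you do, the rescaled $\varepsilon'$ gives $\log^2\frac{c\norm{\Pi_{\mathrm{in}}}}{\varepsilon}$ rather than literally $\log\frac{c\norm{\Pi_{\mathrm{in}}}}{\varepsilon}\log\frac{1}{\varepsilon}$; the two agree up to a constant once $\varepsilon\lesssim 1/(c\norm{\Pi_{\mathrm{in}}})$. The stated form is recovered exactly if accuracy refers to the normalized block, since then $\varepsilon'=\Theta(\varepsilon)$ suffices; this is the convention implicit in the paper's direct use of \lem{app-be-inversion} at accuracy $\varepsilon$.
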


\begin{proof}
By \prop{app-precond-norm} with $t_1 = t_2 = 1/\sqrt{n}$, the extracted normalization factor is $\alpha_\Pi \leq 3\max_m\norm{S_m} + \alpha_{\mathrm{res}}/n$ \eqref{eq:app-precond-extracted}, and the norm of the inverse preconditioned matrix is $\alpha_{\mathrm{precond}} \leq 3n\max_m\norm{S_m} + \alpha_{\mathrm{res}}$ \eqref{eq:app-precond-balanced}. By part (1) of \cor{app-mode-bounds}, the spectral-set hypothesis gives the geometric decay \eqref{eq:app-power-decay}. Part (2) of \cor{app-mode-bounds} then gives $\max_m\norm{S_m} = \operatorname{\mathbf{O}}(c\,\norm{\Pi_{\mathrm{in}}})$. \prop{app-resolvent-bound} gives $\alpha_{\mathrm{res}} = \operatorname{\mathbf{O}}(c\,\norm{\Pi_{\mathrm{in}}}/\delta)$.

The choice of $n$ balances two competing effects. It must be large enough that the discretization error \eqref{eq:app-discretization-error} falls below $\varepsilon$, which requires $n = \operatorname{\pmb{\Omega}}(\frac{1}{\delta}\log\frac{c\,\norm{\Pi_{\mathrm{in}}}}{\varepsilon})$. It must also be no larger than this, since the condition number $\alpha_{\mathrm{precond}} = \operatorname{\mathbf{O}}(n\,c\,\norm{\Pi_{\mathrm{in}}}+c\,\norm{\Pi_{\mathrm{in}}}/\delta)$ grows linearly in $n$. The choice $n = \operatorname{\pmb{\Theta}}(\frac{1}{\delta}\log\frac{c\,\norm{\Pi_{\mathrm{in}}}}{\varepsilon})$ meets both. For sufficiently small $\varepsilon$, it also meets the threshold $n \geq \frac{1}{\delta}\log(2c)$ of \cor{app-mode-bounds}.

With this $n$, the discretization error is at most $\varepsilon$ by \cor{app-mode-bounds}, and $\alpha_{\mathrm{res}}/n = \operatorname{\mathbf{O}}(c\,\norm{\Pi_{\mathrm{in}}})$. So $\alpha_\Pi = \operatorname{\mathbf{O}}(c\,\norm{\Pi_{\mathrm{in}}})$, which is \eqref{eq:app-main-normalization}. Also, $\alpha_{\mathrm{precond}} = \operatorname{\mathbf{O}}(n\,c\,\norm{\Pi_{\mathrm{in}}}) = \operatorname{\mathbf{O}}(\frac{c\,\norm{\Pi_{\mathrm{in}}}}{\delta}\log\frac{c\,\norm{\Pi_{\mathrm{in}}}}{\varepsilon})$. By \eqref{eq:app-precond-matrix}, the preconditioned matrix is block encoded at normalization $\operatorname{\mathbf{O}}(\alpha_A)$. Inverting it with \lem{app-be-inversion} uses $\operatorname{\mathbf{O}}(\alpha_A\,\alpha_{\mathrm{precond}}\log\frac{1}{\varepsilon})$ queries, which is \eqref{eq:app-main-queries}. Extracting the top-left block gives the block encoding of $\Pi_{\mathrm{in}}$ at normalization $\alpha_\Pi$. The two special cases follow from \lem{app-spectral-instances}, with $\delta = \delta_{\mathrm{eig}}/2$ in case (ii) as in \cor{app-mode-bounds}.
\end{proof}

%%%%%%%%%%%%%%%%%%%%%%%%%%%%%%%%%%%%%%%%%%%%%%%%%%%%%%%%%%%%%%%%%%%%%%%%%%%%%%
\section{Extension to general regions}
\label{append:region}

In this appendix, we provide details on the technical results from~\sec{regions}.

\subsection{Reduction to the unit disk}
\label{append:region-reduction}

\append{precond} treats the unit disk. We now reduce a general region to that case through a separating map.

Let $A$ be a square matrix with spectrum split into $\operatorname{\mathbf{Spec}}_{\mathrm{in}}(A)$ and $\operatorname{\mathbf{Spec}}_{\mathrm{out}}(A)$, and $\mathcal{C}$ be a closed curve enclosing $\operatorname{\mathbf{Spec}}_{\mathrm{in}}(A)$ but not $\operatorname{\mathbf{Spec}}_{\mathrm{out}}(A)$, disjoint from $\operatorname{\mathbf{Spec}}(A)$. Then, the oblique eigenprojection onto $\operatorname{\mathbf{Spec}}_{\mathrm{in}}(A)$ is given by the Riesz projector $\Pi_{\mathrm{in}} = \frac{1}{2\pi\mathrm{i}} \int_{\mathcal{C}} (w I - A)^{-1}\, \mathrm{d}w$. 

Suppose that $\mathcal{S}_{\mathrm{in}}$ and $\mathcal{S}_{\mathrm{out}}$ are disjoint open sets covering $\operatorname{\mathbf{Spec}}_{\mathrm{in}}(A)$ and $\operatorname{\mathbf{Spec}}_{\mathrm{out}}(A)$. Let $f$ be analytic on $\mathcal{S}_{\mathrm{in}} \cup \mathcal{S}_{\mathrm{out}}$ and separate the two sets in modulus,
\begin{equation}
  \label{eq:app-separation}
  \abs{f(w)} < 1 \quad (w \in \mathcal{S}_{\mathrm{in}}),
  \qquad
  \abs{f(w)} > 1 \quad (w \in \mathcal{S}_{\mathrm{out}}).
\end{equation}
The next lemma states that the eigenprojection of $A$ for the target region coincides with the eigenprojection of $f(A)$ for the unit disk.

\begin{lemma}[Change of variables]
\label{lem:app-cov}
Let $A$ be a square matrix with spectrum split into $\operatorname{\mathbf{Spec}}_{\mathrm{in}}(A)$ and $\operatorname{\mathbf{Spec}}_{\mathrm{out}}(A)$, and $\mathcal{C}$ be a closed curve enclosing $\operatorname{\mathbf{Spec}}_{\mathrm{in}}(A)$ but not $\operatorname{\mathbf{Spec}}_{\mathrm{out}}(A)$, disjoint from $\operatorname{\mathbf{Spec}}(A)$. 
Suppose that $\mathcal{S}_{\mathrm{in}}$ and $\mathcal{S}_{\mathrm{out}}$ are disjoint open sets covering $\operatorname{\mathbf{Spec}}_{\mathrm{in}}(A)$ and $\operatorname{\mathbf{Spec}}_{\mathrm{out}}(A)$. Let $f$ be analytic on $\mathcal{S}_{\mathrm{in}} \cup \mathcal{S}_{\mathrm{out}}$ with $\abs{f(w)} < 1 \ (w \in \mathcal{S}_{\mathrm{in}})$ and $\abs{f(w)} > 1 \ (w \in \mathcal{S}_{\mathrm{out}})$. Then,
\begin{equation}
  \label{eq:app-cov}
  \Pi_{\mathrm{in}}
  = \frac{1}{2\pi \mathrm{i}} \int_{\mathcal{C}} (w I - A)^{-1} \, \mathrm{d}w
  = \frac{1}{2\pi \mathrm{i}} \int_{\abs{z}=1} (z I - f(A))^{-1} \, \mathrm{d}z.
\end{equation}
\end{lemma}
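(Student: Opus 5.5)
The plan is to compute both sides of \eqref{eq:app-cov} on each generalized eigensubspace $\mathcal{H}_\lambda$ of the primary decomposition \eqref{eq:app-primary} and show they agree. The left-hand side is handled by \lem{app-riesz-image}: it is the identity on $\mathcal{H}_\lambda$ for $\lambda \in \operatorname{\mathbf{Spec}}_{\mathrm{in}}(A)$ and zero for $\lambda \in \operatorname{\mathbf{Spec}}_{\mathrm{out}}(A)$. For the right-hand side, I would first make sense of $f(A)$ through the holomorphic functional calculus on the open set $\mathcal{S}_{\mathrm{in}} \cup \mathcal{S}_{\mathrm{out}}$, which contains $\operatorname{\mathbf{Spec}}(A)$. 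Concretely, on $\mathcal{H}_\lambda$ with $A = \lambda I + N_\lambda$, I would take the finite Taylor expansion
\begin{equation*}
  f(A)\big|_{\mathcal{H}_\lambda} = \sum_{k=0}^{m_\lambda-1} \frac{f^{(k)}(\lambda)}{k!} N_\lambda^{\,k}.
\end{equation*}
This is consistent with the Cauchy-integral definition, by the same Neumann-series computation \eqref{eq:resolvent-neumann} used in \lem{app-riesz-image}. In particular $f(A)$ preserves each $\mathcal{H}_\lambda$, and its restriction there equals $f(\lambda) I + N'_\lambda$, where $N'_\lambda$ is nilpotent because it is a polynomial in $N_\lambda$ without constant term.

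This yields a decomposition of $\mathcal{H}$ into $f(A)$-invariant subspaces $\mathcal{H}_\lambda$. On each of them, $f(A)$ has the single eigenvalue $f(\lambda)$, which is the spectral mapping theorem in this finite-dimensional form. By the separation hypothesis, $\abs{f(\lambda)} < 1$ for $\lambda \in \operatorname{\mathbf{Spec}}_{\mathrm{in}}(A) \subseteq \mathcal{S}_{\mathrm{in}}$ and $\abs{f(\lambda)} > 1$ for $\lambda \in \operatorname{\mathbf{Spec}}_{\mathrm{out}}(A)$. In particular $\operatorname{\mathbf{Spec}}(f(A))$ avoids the unit circle, so the right-hand integral is well defined.

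Next I would repeat the argument of \lem{app-riesz-image} with $f(A)$ in place of $A$ and the unit circle as the contour. On $\mathcal{H}_\lambda$ the resolvent expands as
\begin{equation*}
  (zI - f(A))^{-1}\big|_{\mathcal{H}_\lambda} = \sum_k (z - f(\lambda))^{-k-1} N_\lambda'^{\,k}.
\end{equation*}
Integrating term by term over $\abs{z}=1$ kills every $k \geq 1$ term. The $k = 0$ term gives the winding number of the unit circle about $f(\lambda)$, which is $1$ exactly when $\abs{f(\lambda)} < 1$, i.e. when $\lambda \in \operatorname{\mathbf{Spec}}_{\mathrm{in}}(A)$, and $0$ otherwise. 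So the right-hand side is also the identity on the inner subspaces and zero on the outer ones. Since both operators act identically on every summand of the direct sum, they are equal.

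The main obstacle is the first step: justifying that $f(A)$ restricts to $\mathcal{H}_\lambda$ as the Taylor polynomial, given that $f$ is analytic only on a possibly disconnected open set rather than on a disk. I would handle this by noting that the functional calculus $f(A) = \frac{1}{2\pi\mathrm{i}}\int_\Gamma f(w)(wI-A)^{-1}\,\mathrm{d}w$ uses a cycle $\Gamma$ in $\mathcal{S}_{\mathrm{in}} \cup \mathcal{S}_{\mathrm{out}}$ consisting of small circles around each eigenvalue. The restriction to $\mathcal{H}_\lambda$ then picks up only the circle around $\lambda$, on which $f$ is analytic, and Cauchy's formula for derivatives gives the Taylor coefficients. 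Everything else is a direct reuse of \lem{app-riesz-image}.
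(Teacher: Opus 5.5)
Your proposal is correct and follows the same route as the paper. Both arguments identify the two integrals as Riesz projectors via \lem{app-riesz-image}, use that the functional calculus makes $f(A)$ preserve the invariant subspaces of $A$ with spectrum mapped to $f(\operatorname{\mathbf{Spec}}(A))$, and match the inner and outer parts through the modulus separation.

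Your version is more explicit than the paper's brief claim that $f(A)$ ``shares its spectral projection.'' You compute $\frac{1}{2\pi\mathrm{i}}\int_{\abs{z}=1}(zI-f(A))^{-1}\,\mathrm{d}z$ directly on each generalized eigensubspace $\mathcal{H}_\lambda$ of $A$. This justifies that step rigorously, and it also handles non-injective $f$, where several $\mathcal{H}_\lambda$ merge into a single generalized eigensubspace of $f(A)$, without any extra argument.
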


\begin{proof}
Both integrals are Riesz projectors. By \lem{app-riesz-image}, the first is the projection onto the invariant subspaces of $A$ associated with $\operatorname{\mathbf{Spec}}_{\mathrm{in}}(A)$, and the second is the projection onto the invariant subspaces of $f(A)$ associated with the part of its spectrum inside the unit circle. By the spectral mapping theorem~\cite[Theorem 12.7.1]{humpherys2017}\cite[Theorem 8.3]{roman2008advanced}, $\operatorname{\mathbf{Spec}}(f(A)) = f(\operatorname{\mathbf{Spec}}(A))$. The modulus separation map $f$ sends $\operatorname{\mathbf{Spec}}_{\mathrm{in}}(A)$ inside the unit circle and $\operatorname{\mathbf{Spec}}_{\mathrm{out}}(A)$ outside, so the two projections have matching spectral parts. The holomorphic functional calculus preserves the invariant subspaces of $A$, so $f(A)$ acts on each invariant subspace of $A$ and shares its spectral projection onto $\operatorname{\mathbf{Spec}}_{\mathrm{in}}(A)$. The two integrals therefore coincide.
\end{proof}

\begin{theorem}[Quantum oblique eigenprojection for general regions]
\label{thm:app-region-main}
Let $A$ be a square matrix with Riesz projector $\Pi_{\mathrm{in}}$ \eqref{eq:app-cov} for a region whose boundary is the separating curve $\mathcal{C}$, and compressions $A_{\mathrm{in}} = A\Pi_{\mathrm{in}}$, $A_{\mathrm{out}} = A\Pi_{\mathrm{out}}$. Let $f$ be an analytic separating map \eqref{eq:app-separation}, and suppose $f(A)$ is accessed through a block encoding $f(A)/\alpha_{f(A)}$ with normalization factor $\alpha_{f(A)}$.

If the compressions $A_{\mathrm{in}}$ and $A_{\mathrm{out}}$ admit spectral sets $\mathcal{K}_{\mathrm{in}}$ and $\mathcal{K}_{\mathrm{out}}$ \eqref{eq:app-spectral-set} with constant $c$, and $f$ is analytic on neighborhoods of $\mathcal{K}_{\mathrm{in}}$ and $\mathcal{K}_{\mathrm{out}}$ with gap $\delta_f$ satisfying
\begin{equation}
  \label{eq:app-region-gap}
  \max_{z \in \mathcal{K}_{\mathrm{in}}} \abs{f(z)} \leq 1 - \delta_f,
  \qquad
  \min_{z \in \mathcal{K}_{\mathrm{out}}} \abs{f(z)} \geq 1 + \delta_f,
\end{equation}
then for any $\varepsilon > 0$, the two-sided block preconditioning of \thm{app-disk-main} applied to $f(A)$ block encodes $\Pi_{\mathrm{in}}$ with normalization factor
\begin{equation}
  \label{eq:app-region-normalization}
  \alpha_\Pi = \operatorname{\mathbf{O}}(c\,\norm{\Pi_{\mathrm{in}}}),
\end{equation}
near its optimal value $\norm{\Pi_{\mathrm{in}}}$, and accuracy $\varepsilon$, using
\begin{equation}
  \label{eq:app-region-queries}
  \operatorname{\mathbf{O}}\!\left(\frac{c\,\alpha_{f(A)}\,\norm{\Pi_{\mathrm{in}}}}{\delta_f}
  \log\frac{c\,\norm{\Pi_{\mathrm{in}}}}{\varepsilon} \log\frac{1}{\varepsilon}\right)
\end{equation}
queries to the block encoding $f(A)/\alpha_{f(A)}$. In particular, this holds:
\begin{enumerate}
  \item with $c = 1+\sqrt{2}$, $\mathcal{K}_{\mathrm{in}} = \mathcal{W}(A_{\mathrm{in}})$, and
    $\mathcal{K}_{\mathrm{out}} = \mathcal{W}(A_{\mathrm{out}})$, under the numerical-range
    assumption;
  \item with $c = \kappa_V$ \eqref{eq:app-kappa-V}, $\mathcal{K}_{\mathrm{in}} = \operatorname{\mathbf{Spec}}(A_{\mathrm{in}})$, and
    $\mathcal{K}_{\mathrm{out}} = \operatorname{\mathbf{Spec}}(A_{\mathrm{out}})$, under the diagonalizability
    assumption.
\end{enumerate}
\end{theorem}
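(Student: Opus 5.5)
The plan is to reduce to \thm{app-disk-main} applied to $B = f(A)$ in place of $A$, so the work consists of checking that $B$ satisfies the hypotheses of that theorem with constant $c$ and gap $\delta_f$, and that its unit-disk Riesz projector is the $\Pi_{\mathrm{in}}$ of $A$.

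First, I identify the projector. By \lem{app-cov}, the unit-circle Riesz projector of $f(A)$ equals $\Pi_{\mathrm{in}}$. The condition \eqref{eq:app-region-gap} gives $\abs{f}<1$ on $\mathcal{K}_{\mathrm{in}}\supseteq\operatorname{\mathbf{Spec}}(A_{\mathrm{in}})$ and $\abs{f}>1$ on $\mathcal{K}_{\mathrm{out}}$, so the spectral mapping theorem shows that $\operatorname{\mathbf{Spec}}(f(A))$ is disjoint from the unit circle. Since $f(A)$ preserves $\mathcal{H}_{\mathrm{in}}$ and $\mathcal{H}_{\mathrm{out}}$, its compressions are $B_{\mathrm{in}} = f(A)\Pi_{\mathrm{in}}$, which restricts to $f(A_{\mathrm{in}})$ on $\mathcal{H}_{\mathrm{in}}$, and $B_{\mathrm{out}} = f(A)\Pi_{\mathrm{out}}$, which restricts to $f(A_{\mathrm{out}})$ on $\mathcal{H}_{\mathrm{out}}$. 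The inverse $B_{\mathrm{out}}^{-1}$ restricts to $(1/f)(A_{\mathrm{out}})$, and $1/f$ is analytic near $\mathcal{K}_{\mathrm{out}}$ because $\abs{f}\ge 1+\delta_f$ there (shrink the neighborhood if necessary).

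Next, I produce spectral sets for $B_{\mathrm{in}}$ and $B_{\mathrm{out}}^{-1}$. The natural candidates are $f(\mathcal{K}_{\mathrm{in}})$ and $(1/f)(\mathcal{K}_{\mathrm{out}})$. These lie in $\{\abs{z}\le 1-\delta_f\}$ and in $\{\abs{z}\le 1/(1+\delta_f)\}\subseteq\{\abs{z}\le 1-\delta_f/2\}$ respectively. For $g$ analytic near $f(\mathcal{K}_{\mathrm{in}})$, the composite $g\circ f$ is analytic near $\mathcal{K}_{\mathrm{in}}$, and by composition of the holomorphic functional calculus $g(f(A_{\mathrm{in}})) = (g\circ f)(A_{\mathrm{in}})$. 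Hence
\[
\norm{g(B_{\mathrm{in}})}\le c\sup_{\mathcal{K}_{\mathrm{in}}}\abs{g\circ f} = c\sup_{f(\mathcal{K}_{\mathrm{in}})}\abs{g},
\]
and the outer block is handled the same way with $1/f$. The image of a compact set is compact and contains the mapped spectrum, so \defn{app-spectral-set} is satisfied with the same $c$ and with gap $\delta_f/2$. Only the constant changes, which is harmless inside $\operatorname{\mathbf{O}}$.

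Then I invoke \thm{app-disk-main} with $A\to f(A)$, $\alpha_A\to\alpha_{f(A)}$, $\delta\to\delta_f/2$. This directly yields normalization $\operatorname{\mathbf{O}}(c\norm{\Pi_{\mathrm{in}}})$ and the query count \eqref{eq:app-region-queries}. The two special cases follow from \lem{app-spectral-instances}. For case (1), $\mathcal{W}(A_{\mathrm{in}})$ and $\mathcal{W}(A_{\mathrm{out}})$ are spectral sets with $c=1+\sqrt2$. For case (2), the spectra are spectral sets with $c=\kappa_V$; since these are finite sets, analyticity on a neighborhood is trivial.

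I expect the main obstacle to be the subtlety in the composition step: justifying $g(f(A_{\mathrm{in}}))=(g\circ f)(A_{\mathrm{in}})$ when $g$ is only analytic on a neighborhood of $f(\mathcal{K}_{\mathrm{in}})$. I would handle it by choosing a neighborhood $U$ of $\mathcal{K}_{\mathrm{in}}$ small enough that $f(U)$ lies inside the domain of $g$, which continuity and compactness permit, and then cite the standard composition rule of the Riesz--Dunford calculus.
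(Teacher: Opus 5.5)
Your proposal is correct and follows the paper's proof essentially step for step. You reduce via \lem{app-cov} to \thm{app-disk-main} for $f(A)$, show that $f(\mathcal{K}_{\mathrm{in}})$ and $(1/f)(\mathcal{K}_{\mathrm{out}})$ are spectral sets with the same constant $c$ through $g(f(A_{\mathrm{in}}))=(g\circ f)(A_{\mathrm{in}})$, and use $1/(1+\delta_f)\le 1-\delta_f/2$ to obtain gap $\delta_f/2$. Your extra care in choosing a neighborhood on which $g\circ f$ is analytic is a detail the paper leaves implicit.
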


\begin{proof}
By \lem{app-cov}, $\Pi_{\mathrm{in}}$ is the unit-disk Riesz projector of $f(A)$. We apply \thm{app-disk-main} to $f(A)$, with $f(A)/\alpha_{f(A)}$ as the input block encoding. It remains to show that the compressions $f(A)_{\mathrm{in}}$ and $f(A)_{\mathrm{out}}^{-1}$ admit spectral sets \eqref{eq:app-spectral-set} with constant $c$ and gap $\delta_f/2$ satisfying \eqref{eq:app-spectral-hyp}.

Since $f$ preserves the invariant subspaces, $f(A)_{\mathrm{in}}$ acts on $\mathcal{H}_{\mathrm{in}}$ as $f(A_{\mathrm{in}})$. Similarly, $f(A)_{\mathrm{out}}^{-1}$ acts on $\mathcal{H}_{\mathrm{out}}$ as $(1/f)(A_{\mathrm{out}})$. The function $1/f$ is analytic on a neighborhood of $\mathcal{K}_{\mathrm{out}}$, because $\abs{f} \geq 1 + \delta_f > 0$ there.

We claim that $f(\mathcal{K}_{\mathrm{in}})$ is a spectral set for $f(A_{\mathrm{in}})$ with constant $c$. The set $f(\mathcal{K}_{\mathrm{in}})$ is compact, and it contains $\operatorname{\mathbf{Spec}}(f(A_{\mathrm{in}})) = f(\operatorname{\mathbf{Spec}}(A_{\mathrm{in}}))$ by the spectral mapping theorem. Let $g$ be analytic on a neighborhood of $f(\mathcal{K}_{\mathrm{in}})$. Then $g \circ f$ is analytic on a neighborhood of $\mathcal{K}_{\mathrm{in}}$, and $g(f(A_{\mathrm{in}})) = (g\circ f)(A_{\mathrm{in}})$. Applying \eqref{eq:app-spectral-set} to $g \circ f$ gives
\begin{equation}
  \norm{g(f(A_{\mathrm{in}}))}
  \leq c \sup_{z \in \mathcal{K}_{\mathrm{in}}} \abs{g(f(z))}
  = c \sup_{w \in f(\mathcal{K}_{\mathrm{in}})} \abs{g(w)}.
\end{equation}
The same argument with $1/f$ in place of $f$ shows that $(1/f)(\mathcal{K}_{\mathrm{out}})$ is a spectral set for $f(A)_{\mathrm{out}}^{-1}$ with constant $c$.

By \eqref{eq:app-region-gap}, $f(\mathcal{K}_{\mathrm{in}})$ lies in $\{\abs{w} \leq 1 - \delta_f\}$. The out-block gap converts to a reciprocal bound: $\abs{1/f} \leq \frac{1}{1+\delta_f} \leq 1 - \frac{\delta_f}{2}$ on $\mathcal{K}_{\mathrm{out}}$. So $(1/f)(\mathcal{K}_{\mathrm{out}})$ lies in $\{\abs{w} \leq 1 - \delta_f/2\}$. Both sets lie in $\{\abs{w} \leq 1 - \delta_f/2\}$, so \eqref{eq:app-spectral-hyp} holds with gap $\delta_f/2$. \thm{app-disk-main} with $\delta = \delta_f/2 = \operatorname{\pmb{\Theta}}(\delta_f)$ then gives the normalization factor \eqref{eq:app-region-normalization} and the query count \eqref{eq:app-region-queries}. The two special cases follow from \lem{app-spectral-instances}.
\end{proof}

When $f(z) = z$, \thm{app-region-main} reduces to the unit-disk setting, with the out-block hypothesis placed on $A_{\mathrm{out}}$ rather than $A_{\mathrm{out}}^{-1}$. In the diagonalizable case, taking $\mathcal{K}_{\mathrm{out}} = \operatorname{\mathbf{Spec}}(A_{\mathrm{out}})$ reproduces case (ii) of \thm{app-disk-main}. In the numerical-range case, taking $\mathcal{K}_{\mathrm{out}} = \mathcal{W}(A_{\mathrm{out}})$ gives a different condition from \eqref{eq:app-nr-assumption}, and in general neither condition implies the other. Case (i) of \thm{app-disk-main} corresponds instead to the reciprocal set $\mathcal{K}_{\mathrm{out}} = 1/\mathcal{W}(A_{\mathrm{out}}^{-1})$. This set is not the numerical range of $A_{\mathrm{out}}$. But it is still a spectral set for $A_{\mathrm{out}}$ with constant $1+\sqrt{2}$, since the Crouzeix--Palencia bound applies to $g(1/z)$ over $\mathcal{W}(A_{\mathrm{out}}^{-1})$.

More generally, the set $f(\mathcal{W}(A_{\mathrm{in}}))$ is not the numerical range of $f(A_{\mathrm{in}})$, and need not even be convex. It is nevertheless a spectral set for $f(A_{\mathrm{in}})$ with constant $1+\sqrt{2}$, since the Crouzeix--Palencia bound applies to $g \circ f$ over $\mathcal{W}(A_{\mathrm{in}})$. The numerical range $\mathcal{W}(f(A_{\mathrm{in}}))$ may be larger and may even meet the unit circle. This is why we state and apply \thm{app-disk-main} for spectral sets rather than numerical ranges in the proof of~\thm{app-region-main}.

\subsection{Half-plane construction}
\label{append:region-halfplane}

We now focus on the case where the spectrum is split along the imaginary axis, which underlies the applications of \sec{app}. We instantiate \thm{app-region-main} with a degree-two rational map sending the right half-plane into the unit disk. Take
\begin{equation}
  \label{eq:app-halfplane-map}
  f(z) = \frac{(z - 1)^2}{(z + 1)^2},
\end{equation}
with a double zero at $z = 1$ and a double pole at $z = -1$. The map sends the right half-plane into the unit disk, the imaginary axis onto the unit circle, and the left half-plane outside. Moreover, the following lemma shows that a spectrum separation between half-planes transfers to the unit circle.

\begin{lemma}[Gap preservation]
\label{lem:app-halfplane-gap}
Let $f$ be the map $f(z) = \frac{(z - 1)^2}{(z + 1)^2}$, and write $z = x + iy$. Then
\begin{equation}
  \label{eq:app-halfplane-modulus}
  \abs{f(z)} = \frac{(x-1)^2 + y^2}{(x+1)^2 + y^2} = 1 - \frac{4x}{(x+1)^2 + y^2},
\end{equation}
so $\abs{f(z)} < 1$ for $x > 0$, $\abs{f(z)} = 1$ for $x = 0$, and $\abs{f(z)} > 1$ for $x < 0$.
Moreover, for $\abs{z} \leq 1/2$:
\begin{enumerate}
  \item $\abs{f(z)} \leq 1 - \tfrac{16}{9}\delta$ if $x \geq \delta$;
  \item $\abs{f(z)} \geq 1 + \tfrac{16}{9}\delta$ if $x \leq -\delta$.
\end{enumerate}
\end{lemma}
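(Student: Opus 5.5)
The proof is a direct computation of $\abs{f(z)}$ followed by elementary bounds, and I would carry it out in three steps.

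First, the modulus identity. Since $f(z) = \bigl((z-1)/(z+1)\bigr)^2$, we have $\abs{f(z)} = \abs{z-1}^2/\abs{z+1}^2$. Writing $z = x + \mathrm{i}y$ gives $\abs{z-1}^2 = (x-1)^2 + y^2$ and $\abs{z+1}^2 = (x+1)^2 + y^2$. The numerator differs from the denominator by $-4x$, which yields the second form $1 - 4x/\bigl((x+1)^2 + y^2\bigr)$. The denominator is positive except at $z=-1$, where $f$ has its pole. So the sign of $1 - \abs{f(z)}$ equals the sign of $x$, giving the trichotomy $\abs{f}<1$, $=1$, $>1$ according as $x>0$, $x=0$, $x<0$.

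Second, the gapped bounds for $\abs{z}\le 1/2$. I bound the denominator $D = (x+1)^2 + y^2 = \abs{z+1}^2$ using the triangle inequality: $\tfrac12 \le \abs{z+1} \le \tfrac32$, so $\tfrac14 \le D \le \tfrac94$.
\begin{itemize}
\item For case 1 ($x \ge \delta > 0$), $4x/D \ge 4\delta/(9/4) = \tfrac{16}{9}\delta$. Hence $\abs{f(z)} \le 1 - \tfrac{16}{9}\delta$.
\item For case 2 ($x \le -\delta$), $-4x/D \ge 4\delta/D \ge \tfrac{16}{9}\delta$, again using $D \le 9/4$. Hence $\abs{f(z)} \ge 1 + \tfrac{16}{9}\delta$.
\end{itemize}
The lower bound $D \ge 1/4$ is only needed to guarantee $D>0$, i.e.\ that $f$ is analytic (pole-free) on the disk $\abs{z}\le 1/2$.

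Third, I would check whether anything here could be an obstacle. No step is genuinely hard. The only point needing care is using the upper bound on $D$ in both cases, rather than the tighter bound $(1+x)^2$ suggested by sign considerations. A tighter estimate is possible, since $x \ge \delta$ forces $y^2 \le 1/4 - x^2$ and then $D = 1 + 2x + \abs{z}^2 \le 5/4 + 2x$. The stated constant $16/9$, however, follows from the crude bound $D \le 9/4$, which holds uniformly in both cases, so I would use that.
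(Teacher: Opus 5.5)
Your proposal is correct and follows essentially the same route as the paper: write $\abs{f(z)}$ as the squared M\"obius modulus $\abs{z-1}^2/\abs{z+1}^2$, read off $1 - 4x/\bigl((x+1)^2+y^2\bigr)$, and bound the denominator by $9/4$ on $\abs{z}\le 1/2$. The only difference is cosmetic: you obtain $\abs{z+1}\le 3/2$ from the triangle inequality, whereas the paper notes that the distance from $-1$ is maximized at $z=1/2$.
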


\begin{proof}
Since $f(z) = \left(\frac{z-1}{z+1}\right)^2$, its modulus is the squared Möbius modulus $\abs{f(z)} = \left|\frac{z-1}{z+1}\right|^2 = \frac{(x-1)^2 + y^2}{(x+1)^2 + y^2}$, which gives the explicit formula in \eqref{eq:app-halfplane-modulus}. 

The sign of $4x$ determines
whether $\abs{f(z)}$ is below, equal to, or above $1$.
For $\abs{z} \leq 1/2$ the denominator $(x+1)^2 + y^2$ is the squared distance from $-1$ to $z$, maximized at $z = 1/2$, so it is at most $(3/2)^2 = 9/4$. Hence for $x \geq \delta$,
\begin{equation}
  \abs{f(z)} = 1 - \frac{4x}{(x+1)^2 + y^2} \leq 1 - \frac{4\delta}{9/4} = 1 - \tfrac{16}{9}\delta.
\end{equation}
The case $x \leq -\delta$ is symmetric, giving $\abs{f(z)} \geq 1 + \tfrac{16}{9}\delta$.
\end{proof}

The map $f(A)$ is applied without polynomial approximation. Under $\norm{A} \leq 1/2$, the shifted matrix $A + I$ has singular values in $[1/2, 3/2]$, since $\norm{A+I}\leq\frac{3}{2}$ and $\norm{(A+I)^{-1}} \le (1 - \norm{A})^{-1} \leq 2$. Inverting it with \lem{app-be-inversion} block encodes $(A + I)^{-1}$ at normalization $\operatorname{\mathbf{O}}(1)$, using $\operatorname{\mathbf{O}}(\log(1/\varepsilon))$ queries to the block encoding of $A$. The product $f(A) = (A - I)^2 (A + I)^{-2}$ then follows from \lem{app-be-product}, with
\begin{equation}
  \norm{f(A)} \leq \norm{A - I}^2 \norm{(A+I)^{-1}}^2 \leq (3/2)^2 \, (2)^2 = 9.
\end{equation}
So $f(A)$ is block encoded at normalization $\alpha_{f(A)} = \operatorname{\mathbf{O}}(1)$ using $\operatorname{\mathbf{O}}(\log(1/\varepsilon))$ queries.

For the half-plane, the spectral-set hypothesis of \thm{app-region-main} takes a concrete geometric form. Suppose the compressions $A_{\mathrm{in}}$ and $A_{\mathrm{out}}$ admit spectral sets $\mathcal{K}_{\mathrm{in}}$ and $\mathcal{K}_{\mathrm{out}}$ \eqref{eq:app-spectral-set} with constant $c$, separated from the imaginary axis into opposite half-planes,
\begin{equation}
  \label{eq:app-halfplane-hyp}
  \mathcal{K}_{\mathrm{in}} \subseteq \{\, z \colon \Re(z)\geq \delta,\ \abs{z} \leq \tfrac{1}{2} \,\},
  \qquad
  \mathcal{K}_{\mathrm{out}} \subseteq \{\, z \colon \Re(z)\leq -\delta,\ \abs{z} \leq \tfrac{1}{2} \,\}.
\end{equation}
Since \lem{app-halfplane-gap} bounds $\abs{f(z)}$ pointwise, it applies to every point of these sets. This gives $\max_{z \in \mathcal{K}_{\mathrm{in}}}\abs{f(z)} \leq 1 - \frac{16}{9}\delta$ and $\min_{z \in \mathcal{K}_{\mathrm{out}}}\abs{f(z)} \geq 1 + \frac{16}{9}\delta$. The only pole of $f$ is at $z = -1$, which lies outside the disk $\{\abs{z} \leq \frac{1}{2}\}$. So $f$ is analytic on a neighborhood of each set. The hypothesis of \thm{app-region-main} therefore holds with $\delta_f = \frac{16}{9}\delta$.

In particular, \eqref{eq:app-halfplane-hyp} holds with $c = 1+\sqrt{2}$ when the per-block numerical ranges $\mathcal{W}(A_{\mathrm{in}})$ and $\mathcal{W}(A_{\mathrm{out}})$ satisfy it. It holds with $c = \kappa_V$ when $A_{\mathrm{in}}$ and $A_{\mathrm{out}}$ are diagonalizable and their spectra $\operatorname{\mathbf{Spec}}(A_{\mathrm{in}})$ and $\operatorname{\mathbf{Spec}}(A_{\mathrm{out}})$ satisfy it. In both cases, the condition $\abs{z} \leq \frac{1}{2}$ is automatic under $\norm{A} \leq \frac{1}{2}$.

\begin{corollary}[Quantum oblique eigenprojection for the half-plane]
\label{cor:app-halfplane}
Let $A$ be a square matrix accessed through a block encoding $A/\alpha_A$ with normalization factor $\alpha_A$. Let $\Pi_{\mathrm{in}}$ be the Riesz projector onto the right-half-plane invariant subspace, with compressions $A_{\mathrm{in}} = A\Pi_{\mathrm{in}}$, $A_{\mathrm{out}} = A\Pi_{\mathrm{out}}$.

If the compressions $A_{\mathrm{in}}$ and $A_{\mathrm{out}}$ admit spectral sets $\mathcal{K}_{\mathrm{in}}$ and $\mathcal{K}_{\mathrm{out}}$ \eqref{eq:app-spectral-set} with constant $c$ and gap $\delta$ satisfying
\begin{equation}
  \label{eq:app-halfplane-cor-hyp}
  \mathcal{K}_{\mathrm{in}} \subseteq \{\, z \colon \Re(z)\geq \delta,\ \abs{z} \leq \alpha_A \,\},
  \qquad
  \mathcal{K}_{\mathrm{out}} \subseteq \{\, z \colon \Re(z)\leq -\delta,\ \abs{z} \leq \alpha_A \,\},
\end{equation}
then for any $\varepsilon > 0$, the eigenprojection $\Pi_{\mathrm{in}}$ can be block encoded with normalization factor
\begin{equation}
  \label{eq:app-halfplane-normalization}
  \alpha_\Pi = \operatorname{\mathbf{O}}(c\,\norm{\Pi_{\mathrm{in}}}),
\end{equation}
near its optimal value $\norm{\Pi_{\mathrm{in}}}$, and accuracy $\varepsilon$, using
\begin{equation}
  \label{eq:app-halfplane-queries}
  \operatorname{\mathbf{O}}\!\left(\frac{c\,\alpha_A\,\norm{\Pi_{\mathrm{in}}}}{\delta}
  \polylog\frac{c\,\alpha_A\,\norm{\Pi_{\mathrm{in}}}}{\delta\varepsilon}\right)
\end{equation}
queries to the block encoding $A/\alpha_A$. In particular, this holds:
\begin{enumerate}
  \item with $c = 1+\sqrt{2}$, when the per-block numerical ranges
    $\mathcal{W}(A_{\mathrm{in}})$ and $\mathcal{W}(A_{\mathrm{out}})$ satisfy
    \eqref{eq:app-halfplane-cor-hyp};
  \item with $c = \kappa_V$ \eqref{eq:app-kappa-V}, when $A_{\mathrm{in}}$ and $A_{\mathrm{out}}$
    are diagonalizable and their spectra $\operatorname{\mathbf{Spec}}(A_{\mathrm{in}})$ and $\operatorname{\mathbf{Spec}}(A_{\mathrm{out}})$
    satisfy \eqref{eq:app-halfplane-cor-hyp}.
\end{enumerate}
In both cases, the condition $\abs{z} \leq \alpha_A$ is automatic, since $\norm{A} \leq \alpha_A$.
\end{corollary}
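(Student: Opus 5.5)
The corollary reduces to \thm{app-region-main} with the half-plane map $f(z)=(z-1)^2/(z+1)^2$ of \lem{app-halfplane-gap}, after rescaling $A$ so that it fits the normalization $\norm{\cdot}\le 1/2$ used there. Set $\widetilde A = A/(2\alpha_A)$. Its block encoding is obtained from $A/\alpha_A$ in one query, by \lem{app-be-scaling} part (1) with $\alpha_A'=2\alpha_A$. Then $\norm{\widetilde A}\le 1/2$.

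\textbf{Step 1: the rescaled matrix has the same projector and rescaled spectral sets.} The Riesz projector of $\widetilde A$ onto the right half-plane equals $\Pi_{\mathrm{in}}$, because a positive scaling preserves the invariant subspaces and the signs of real parts. The compressions rescale as $\widetilde A_{\mathrm{in}} = A_{\mathrm{in}}/(2\alpha_A)$ and $\widetilde A_{\mathrm{out}} = A_{\mathrm{out}}/(2\alpha_A)$. The scaled sets $\mathcal{K}_{\mathrm{in}}/(2\alpha_A)$ and $\mathcal{K}_{\mathrm{out}}/(2\alpha_A)$ are spectral sets for them with the same constant $c$: apply \eqref{eq:app-spectral-set} to $g(\cdot/(2\alpha_A))$. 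By \eqref{eq:app-halfplane-cor-hyp}, these sets lie in $\{\Re z\ge \delta/(2\alpha_A),\ \abs{z}\le 1/2\}$ and its mirror image. So \eqref{eq:app-halfplane-hyp} holds with gap $\widetilde\delta=\delta/(2\alpha_A)$. The discussion following \lem{app-halfplane-gap} then supplies the hypothesis \eqref{eq:app-region-gap} of \thm{app-region-main} with $\delta_f = \tfrac{16}{9}\widetilde\delta = \tfrac{8\delta}{9\alpha_A}$. It also gives analyticity of $f$ near both sets, since the only pole $-1$ lies outside $\{\abs{z}\le 1/2\}$.

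\textbf{Step 2: block encode $f(\widetilde A)$ and apply \thm{app-region-main}.} As shown above, $(\widetilde A+I)^{-1}$ is block encoded at $\operatorname{\mathbf{O}}(1)$ normalization with $\operatorname{\mathbf{O}}(\log(1/\varepsilon'))$ queries via \lem{app-be-inversion}. The product rule \lem{app-be-product} then gives $f(\widetilde A)$ with $\alpha_{f(\widetilde A)}=\operatorname{\mathbf{O}}(1)$. \thm{app-region-main} now yields normalization $\operatorname{\mathbf{O}}(c\norm{\Pi_{\mathrm{in}}})$. It uses $\operatorname{\mathbf{O}}\bigl(\frac{c\norm{\Pi_{\mathrm{in}}}}{\delta_f}\log\frac{c\norm{\Pi_{\mathrm{in}}}}{\varepsilon}\log\frac1\varepsilon\bigr)$ calls to $f(\widetilde A)$, each costing $\operatorname{\mathbf{O}}(\log(1/\varepsilon'))$ queries to $A/\alpha_A$. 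Substituting $1/\delta_f=\operatorname{\mathbf{O}}(\alpha_A/\delta)$ gives the leading factor $c\alpha_A\norm{\Pi_{\mathrm{in}}}/\delta$ times polylogarithms.

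\textbf{Step 3: error propagation, the main obstacle.} The inexact block encoding of $f(\widetilde A)$ has error $\varepsilon'$. This error enters the block encoding of the preconditioned matrix and is then amplified by the QSVT inversion. The inversion has condition number $\alpha_{\mathrm{precond}}=\operatorname{\mathbf{O}}(n c\norm{\Pi_{\mathrm{in}}})$, so the error is amplified by roughly $\alpha_{\mathrm{precond}}^2$. It is also multiplied by the number of queries. I would use the standard perturbation bound $\norm{X^{-1}-Y^{-1}}\le \norm{X^{-1}}\norm{Y^{-1}}\norm{X-Y}$, together with a union bound over queries. This shows that $\varepsilon' = \varepsilon/\mathrm{poly}(c\alpha_A\norm{\Pi_{\mathrm{in}}}/\delta)$ suffices. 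Hence $\log(1/\varepsilon') = \operatorname{\mathbf{O}}\bigl(\log\frac{c\alpha_A\norm{\Pi_{\mathrm{in}}}}{\delta\varepsilon}\bigr)$, which accounts for the argument of the $\polylog$ in \eqref{eq:app-halfplane-queries}.

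The two special cases then follow from \lem{app-spectral-instances}. The constraint $\abs{z}\le\alpha_A$ is automatic for numerical ranges and spectra because $\norm{A}\le\alpha_A$.
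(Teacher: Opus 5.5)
Your proof is correct and follows the paper's argument essentially step for step: rescale to $\widetilde A = A/(2\alpha_A)$, transfer the spectral sets with the same constant and gap $\delta/(2\alpha_A)$, use \lem{app-halfplane-gap} to get $\delta_f = \Theta(\delta/\alpha_A)$, block encode $f(\widetilde A)$ at $\operatorname{\mathbf{O}}(1)$ normalization by inverting $\widetilde A + I$, and apply \thm{app-region-main}. The only difference is the error accounting: the paper composes block-encoding errors linearly with $\varepsilon' = \Theta(\varepsilon/q_f)$, while you route the perturbation through the conditioning of the inversion; both make $\varepsilon'$ polynomially small and so add only polylogarithmic overhead.
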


\begin{proof}
Scaling the block encoding $A/\alpha_A$ by $1/2$ block encodes $A_2 = A/(2\alpha_A)$ at normalization $\operatorname{\mathbf{O}}(1)$, with $\norm{A_2} \leq 1/2$. Scaling by the positive real $1/(2\alpha_A)$ fixes the imaginary axis. So $A_2$ has the same invariant subspaces as $A$, the same right- and left-half-plane split, and the same projector $\Pi_{\mathrm{in}}$. Spectral sets rescale with the same constant: if $\mathcal{K}$ is a spectral set for $B$ with constant $c$, then $\mathcal{K}/s$ is a spectral set for $B/s$ with constant $c$, by applying \eqref{eq:app-spectral-set} to $g(z/s)$. So $\mathcal{K}_{\mathrm{in}}/(2\alpha_A)$ and $\mathcal{K}_{\mathrm{out}}/(2\alpha_A)$ are spectral sets for the compressions of $A_2$ with constant $c$. By \eqref{eq:app-halfplane-cor-hyp}, they satisfy \eqref{eq:app-halfplane-hyp} with gap $\delta' = \delta/(2\alpha_A)$.

Take $f$ to be the map \eqref{eq:app-halfplane-map}. Under $\norm{A_2} \leq 1/2$, the product $f(A_2) = (A_2 - I)^2 (A_2 + I)^{-2}$ is block encoded at normalization $\operatorname{\mathbf{O}}(1)$ using $\operatorname{\mathbf{O}}(\log(1/\varepsilon))$ queries to the block encoding of $A_2$. By the discussion following \eqref{eq:app-halfplane-hyp}, the hypothesis of \thm{app-region-main} holds for $A_2$ with $\delta_f = \frac{16}{9}\delta' = \operatorname{\pmb{\Theta}}(\delta/\alpha_A)$. Applying \thm{app-region-main} to $f(A_2)$ gives the normalization factor \eqref{eq:app-halfplane-normalization}, using
\begin{equation}
  q_f = \operatorname{\mathbf{O}}\!\left(\frac{c\,\alpha_A\,\norm{\Pi_{\mathrm{in}}}}{\delta}
  \log\frac{c\,\norm{\Pi_{\mathrm{in}}}}{\varepsilon}\log\frac{1}{\varepsilon}\right)
\end{equation}
queries to the block encoding of $f(A_2)$.

Each of the $q_f$ queries to $f(A_2)$ is realized by an $\varepsilon'$-accurate block encoding. Since block-encoding errors compose linearly, taking $\varepsilon' = \operatorname{\pmb{\Theta}}(\varepsilon/q_f)$ gives final accuracy $\varepsilon$. This uses $\operatorname{\mathbf{O}}(\log(q_f/\varepsilon))$ queries to $A/\alpha_A$ per query to $f(A_2)$. Since $q_f$ is polynomial in the parameters, this adds only a polylogarithmic factor, which is included in \eqref{eq:app-halfplane-queries}.
\end{proof}

%%%%%%%%%%%%%%%%%%%%%%%%%%%%%%%%%%%%%%%%%%%%%%%%%%%%%%%%%%%%%%%%%%%%%%%%%%%%%%
\section{Further analysis on applications}
\label{append:app}

In this appendix, we provide details on the technical results from~\sec{app}.

\subsection{Eigenstate preparation}
\label{append:app-eigenstate}

Suppose the target eigenvalue or cluster is enclosed by a disk $\{z:\,\abs{z - z_0} < \rho\,\}$, with the target invariant subspace separated from the circle $\abs{z - z_0} = \rho$ by an absolute gap $\delta$. The separating map is the linear $f(z) = (z - z_0)/\rho$, so $f(A) = (A - z_0 I)/\rho$ is obtained from the block encoding of $A$ by a shift and a rescaling.

\begin{corollary}[Eigenstate preparation]
\label{cor:app-eigenstate}
Let $A$ be a square matrix accessed through a block encoding $A/\alpha_A$, and let $\Pi_{\mathrm{in}}$ be the Riesz projector onto the invariant subspace whose eigenvalues lie in the disk $\{\,z \colon \abs{z - z_0} < \rho\,\}$ of center $z_0$ and radius $\rho$, with compressions $A_{\mathrm{in}} = A\Pi_{\mathrm{in}}$, $A_{\mathrm{out}} = A\Pi_{\mathrm{out}}$. Let $\ket{\psi}$ be an initial state prepared by a unitary $O_\psi$, with overlap $\eta = \norm{\Pi_{\mathrm{in}}\ket{\psi}}$.

If the compressions $A_{\mathrm{in}} - z_0 I$ and $(A_{\mathrm{out}} - z_0 I)^{-1}$ admit spectral sets $\mathcal{K}_{\mathrm{in}}$ and $\mathcal{K}_{\mathrm{out}}$ \eqref{eq:app-spectral-set} with constant $c$ and gap $0 < \delta < \rho$ satisfying
\begin{equation}
  \label{eq:app-eigenstate-hyp}
  \mathcal{K}_{\mathrm{in}} \subseteq \{\, z \colon \abs{z} \leq \rho - \delta \,\},
  \qquad
  \mathcal{K}_{\mathrm{out}} \subseteq \Bigl\{\, z \colon  \abs{z} \leq \frac{1}{\rho + \delta} \,\Bigr\},
\end{equation}
then the projected state $\Pi_{\mathrm{in}}\ket{\psi}/\eta$ can be prepared to accuracy $\varepsilon$ using
\begin{equation}
  \label{eq:app-eigenstate-queries}
  \operatorname{\mathbf{O}}\!\left(\frac{c^2\,(\alpha_A + \abs{z_0})\,\norm{\Pi_{\mathrm{in}}}^2}{\eta\,\delta}
  \polylog\frac{c\,\norm{\Pi_{\mathrm{in}}}}{\eta\,\varepsilon}\right)
\end{equation}
queries to the block encoding $A/\alpha_A$. The initial state is prepared 
\begin{equation}
  \label{eq:app-eigenstate-state-prep}
  \operatorname{\mathbf{O}}\!\left(\frac{c\,\norm{\Pi_{\mathrm{in}}}}{\eta}\right)
\end{equation}
times, using that many queries to $O_\psi$ and its inverse. In particular,
\begin{enumerate}
  \item with $c = 1+\sqrt{2}$ when the numerical ranges
    $\mathcal{W}(A_{\mathrm{in}} - z_0 I)$ and $\mathcal{W}((A_{\mathrm{out}} - z_0 I)^{-1})$
    satisfy \eqref{eq:app-eigenstate-hyp};
  \item with $c = \kappa_V$ \eqref{eq:app-kappa-V} when $A_{\mathrm{in}}$ and $A_{\mathrm{out}}$ are
    diagonalizable and the spectra $\operatorname{\mathbf{Spec}}(A_{\mathrm{in}} - z_0 I)$ and
    $\operatorname{\mathbf{Spec}}((A_{\mathrm{out}} - z_0 I)^{-1})$ satisfy
    \eqref{eq:app-eigenstate-hyp}.
\end{enumerate}
\end{corollary}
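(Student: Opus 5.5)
We reduce to the unit-disk construction of \thm{app-disk-main} applied to $B = f(A) = (A - z_0 I)/\rho$, and then finish with amplitude amplification. First, $B$ is obtained from $A/\alpha_A$ by \lem{app-be-lincomb} as a combination of $A$ and $I$ with coefficients $1/\rho$ and $-z_0/\rho$. This gives normalization $\alpha_B = (\alpha_A + \abs{z_0})/\rho$ using one query per use. Since $B$ is an affine function of $A$, it has the same invariant subspaces. Its unit-disk Riesz projector is therefore $\Pi_{\mathrm{in}}$, which can also be seen from \lem{app-cov} with the linear separating map.

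Next we convert the hypothesis \eqref{eq:app-eigenstate-hyp} into the form \eqref{eq:app-spectral-hyp} for $B$, with gap $\delta' = \Theta(\delta/\rho)$. Rescaling a spectral set by a positive real preserves its constant, as in the proof of \cor{app-halfplane}.
\begin{enumerate}
\item For the inner block, $B_{\mathrm{in}} = (A_{\mathrm{in}} - z_0 I)/\rho$ has spectral set $\mathcal{K}_{\mathrm{in}}/\rho \subseteq \{\abs{z} \le 1 - \delta/\rho\}$.
\item For the outer block, $B_{\mathrm{out}}^{-1} = \rho (A_{\mathrm{out}} - z_0 I)^{-1}$ has spectral set $\rho\,\mathcal{K}_{\mathrm{out}} \subseteq \{\abs{z} \le \rho/(\rho+\delta)\} \subseteq \{\abs{z} \le 1 - \delta/(2\rho)\}$, using $\delta < \rho$.
\end{enumerate}
So \thm{app-disk-main} applies with constant $c$ and gap $\delta/(2\rho)$. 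It gives a block encoding of $\Pi_{\mathrm{in}}$ with normalization $\alpha_\Pi = O(c\norm{\Pi_{\mathrm{in}}})$ and accuracy $\varepsilon'$, using
$O\bigl(c\,\alpha_B\,\norm{\Pi_{\mathrm{in}}}\,\rho\,\delta^{-1}\,\polylog\bigr) = O\bigl(c(\alpha_A + \abs{z_0})\norm{\Pi_{\mathrm{in}}}\,\delta^{-1}\,\polylog\bigr)$ queries to $A/\alpha_A$. The factors of $\rho$ cancel.

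Finally, apply this block encoding to $O_\psi\ket{0}$. The flagged component is $\Pi_{\mathrm{in}}\ket{\psi}/\alpha_\Pi$, with amplitude $\eta/\alpha_\Pi$. Fixed-point (or standard, with known lower bound $\eta$) amplitude amplification needs $O(\alpha_\Pi/\eta) = O(c\norm{\Pi_{\mathrm{in}}}/\eta)$ rounds. Each round uses one block encoding of $\Pi_{\mathrm{in}}$ and its inverse, and one call each to $O_\psi$ and $O_\psi^\dagger$. This gives \eqref{eq:app-eigenstate-state-prep}, and multiplying by the per-round query cost gives \eqref{eq:app-eigenstate-queries}, with the $c^2\norm{\Pi_{\mathrm{in}}}^2/\eta$ prefactor.

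The main obstacle is error propagation. An operator error $\varepsilon'$ in the block encoding perturbs the flagged unnormalized vector by at most $\varepsilon'/\alpha_\Pi$. After amplification and normalization by the amplitude $\eta/\alpha_\Pi$, the state error is $O(\varepsilon'/\eta)$, times the number of rounds if errors are counted per round. We therefore take $\varepsilon' = \Theta(\varepsilon\,\eta^2/(c\norm{\Pi_{\mathrm{in}}}))$. This only enters logarithmically and is absorbed into $\polylog(c\norm{\Pi_{\mathrm{in}}}/(\eta\varepsilon))$, using the known bounds on $\eta$ and $\norm{\Pi_{\mathrm{in}}}$. The two special cases then follow from \lem{app-spectral-instances} applied to the shifted (and inverted) compressions.
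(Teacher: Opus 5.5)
Your proposal is correct and follows the paper's proof essentially step for step. It uses the same map $f(A)=(A-z_0I)/\rho$ built with the linear-combination lemma, the same rescaled spectral sets $\mathcal{K}_{\mathrm{in}}/\rho$ and $\rho\,\mathcal{K}_{\mathrm{out}}$ giving gap $\delta/(2\rho)$ so that $\rho$ cancels, and the same fixed-point amplitude amplification with $O(c\norm{\Pi_{\mathrm{in}}}/\eta)$ rounds. The only difference is your more conservative per-application accuracy $\varepsilon'=\Theta(\varepsilon\eta^2/(c\norm{\Pi_{\mathrm{in}}}))$ versus the paper's $\Theta(\varepsilon\eta/\alpha_\Pi)$, which is harmless because it enters only logarithmically.
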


\begin{proof}
Form a block encoding of $f(A) = (A - z_0 I)/\rho$ from the block encoding of $A$ by \lem{app-be-lincomb}. Its normalization factor is $\alpha_{f(A)} = (\alpha_A + \abs{z_0})/\rho$. The block encoding is exact and uses one query to the block-encoding of $A$. $\Pi_{\mathrm{in}}$ is the unit-disk Riesz projector of $f(A)$. When rescaling by $\rho$, the spectral sets also rescale by $\rho$, so $\mathcal{K}_{\mathrm{in}}/\rho$ and $\rho\,\mathcal{K}_{\mathrm{out}}$ are spectral sets for $f(A)_{\mathrm{in}}$ and $f(A)_{\mathrm{out}}^{-1} = \rho\,(A_{\mathrm{out}} - z_0 I)^{-1}$, respectively, both with constant $c$. By \eqref{eq:app-eigenstate-hyp}, the first set lies in $\{z \colon \abs{z} \leq 1 - \delta/\rho\}$. The second lies in $\{z \colon \abs{z} \leq \rho/(\rho+\delta)\}$, and $\rho/(\rho+\delta) \leq 1 - \delta/(2\rho)$. So \eqref{eq:app-spectral-hyp} holds for $f(A)$ with gap $\delta/(2\rho)$.

\thm{app-disk-main} applied to $f(A)$ block encodes $\Pi_{\mathrm{in}}$ with normalization factor $\alpha_\Pi = \operatorname{\mathbf{O}}(c\,\norm{\Pi_{\mathrm{in}}})$. It uses
\begin{equation}
  \operatorname{\mathbf{O}}\!\left(\frac{c\,\alpha_{f(A)}\,\norm{\Pi_{\mathrm{in}}}}{\delta/(2\rho)}
  \log\frac{c\,\norm{\Pi_{\mathrm{in}}}}{\varepsilon'}\log\frac{1}{\varepsilon'}\right)
  = \operatorname{\mathbf{O}}\!\left(\frac{c\,(\alpha_A + \abs{z_0})\,\norm{\Pi_{\mathrm{in}}}}{\delta}
  \log\frac{c\,\norm{\Pi_{\mathrm{in}}}}{\varepsilon'}\log\frac{1}{\varepsilon'}\right)
\end{equation}
queries per application to accuracy $\varepsilon'$. The factor $\rho$ cancels between $\alpha_{f(A)}$ and the relative gap $\delta/(2\rho)$.

Applying this block encoding to $\ket{\psi}$ produces $\Pi_{\mathrm{in}}\ket{\psi}$ with success amplitude $\eta/\alpha_\Pi$. So fixed-point amplitude amplification prepares the normalized state to constant success amplitude with $\operatorname{\mathbf{O}}(\alpha_\Pi/\eta) = \operatorname{\mathbf{O}}(c\,\norm{\Pi_{\mathrm{in}}}/\eta)$ applications. Taking each application to accuracy $\varepsilon' = \operatorname{\pmb{\Theta}}(\varepsilon\eta/\alpha_\Pi)$ gives final accuracy $\varepsilon$. Then each logarithm is $\operatorname{\mathbf{O}}(\log\frac{c\,\norm{\Pi_{\mathrm{in}}}}{\eta\,\varepsilon})$. Multiplying the number of applications by the queries per application gives \eqref{eq:app-eigenstate-queries}. The two special cases follow from \lem{app-spectral-instances}.
\end{proof}

Alternatively, the resolvent-integral construction of \append{resolvent} (and~\cite{RZL2026}) applied to $f(A)$ block encodes the same projector with normalization factor $\alpha_{\mathrm{res}} = \operatorname{\mathbf{O}}(\rho\,\norm{\Pi_{\mathrm{in}}}/\delta)$ by \prop{app-resolvent-bound}. Each application has the same query cost as before. Amplitude amplification then needs $\operatorname{\mathbf{O}}(\alpha_{\mathrm{res}}/\eta)$ rounds instead of $\operatorname{\mathbf{O}}(\alpha_\Pi/\eta)$, a factor of order $\rho/\delta$ more. For $\rho = \operatorname{\pmb{\Theta}}(1)$, the total gap dependence of queries to block encoding is $1/\delta^2$ for the resolvent integral versus $1/\delta$ with preconditioning. The latter matches the gap dependence of eigenstate preparation for nonnormal matrices with real spectra~\cite{QEVP}. Separately, the initial state is prepared once per round. So the resolvent integral uses $\operatorname{\mathbf{O}}(\rho\,\norm{\Pi_{\mathrm{in}}}/(\eta\,\delta))$ queries to $O_\psi$ and its inverse, compared with our gap-independent count \eqref{eq:app-eigenstate-state-prep}.

\subsection{Continuous-time algebraic Riccati equation}
\label{append:app-riccati}

Consider the continuous-time algebraic Riccati equation $XQX - XP - P^\dagger X - R = 0$, where $P, Q, R$ are square matrices of the same size with $Q, R$ Hermitian. Its associated Hamiltonian matrix is $J = \left[\begin{smallmatrix} P & -Q \\ -R & -P^\dagger \end{smallmatrix}\right]$. The spectrum of $J$ splits across the imaginary axis. Let $\Pi_{\mathrm{in}}$ be the projector onto the antistable (right-half-plane) invariant subspace, and write it in column blocks as $\Pi_{\mathrm{in}} = \left[\begin{smallmatrix} \Pi_1 & \Pi_2 \end{smallmatrix}\right]$. Then the stabilizing solution is $X = -\Pi_2^{+}\Pi_1$~\cite{RZL2026}, where $(\cdot)^{+}$ is the pseudoinverse. Here, the smallest nonzero singular value of $\Pi_2$ satisfies $\operatorname{\mathbf{Sval}}_{\min}(\Pi_2) \geq 1/\sqrt{1 + \norm{X}^2}$~\cite{RZL2026}. This bound does not depend on the gap.

\begin{corollary}[Riccati equation]
\label{cor:app-riccati}
Let $P, Q, R$ be square matrices of the same size with $Q, R$ Hermitian, and let $J = \left[\begin{smallmatrix} P & -Q \\ -R & -P^\dagger \end{smallmatrix}\right]$ be the associated Hamiltonian matrix, accessed through a block encoding $J/\alpha_J$. Let $\Pi_{\mathrm{in}} = \left[\begin{smallmatrix} \Pi_1 & \Pi_2 \end{smallmatrix}\right]$ be the projector onto its antistable invariant subspace, with compressions $J_{\mathrm{in}} = J\Pi_{\mathrm{in}}$, $J_{\mathrm{out}} = J\Pi_{\mathrm{out}}$, and let $\sigma > 0$ be a lower bound on $\operatorname{\mathbf{Sval}}_{\min}(\Pi_2)$.

Assume the compressions $J_{\mathrm{in}}$ and $J_{\mathrm{out}}$ admit spectral sets \eqref{eq:app-spectral-set} with constant $c$ and gap $\delta$ satisfying \eqref{eq:app-halfplane-cor-hyp} with $\alpha_J$ in place of $\alpha_A$. For any $\varepsilon > 0$, the stabilizing solution $X = -\Pi_2^{+}\Pi_1$ of the continuous-time algebraic Riccati equation $XQX - XP - P^\dagger X - R = 0$ can be block encoded with normalization factor
\begin{equation}
  \label{eq:app-riccati-normalization}
  \alpha_X = \operatorname{\mathbf{O}}\!\left(\frac{c\,\norm{\Pi_{\mathrm{in}}}}{\sigma}\right)
\end{equation}
and accuracy $\varepsilon$, using
\begin{equation}
  \label{eq:app-riccati-queries}
  \operatorname{\mathbf{O}}\!\left(\frac{c^2\,\alpha_J\,\norm{\Pi_{\mathrm{in}}}^2}{\sigma\,\delta}
  \polylog\frac{c\,\alpha_J\,\norm{\Pi_{\mathrm{in}}}}{\sigma\,\delta\,\varepsilon}\right)
\end{equation}
queries to the block encoding $J/\alpha_J$.
\end{corollary}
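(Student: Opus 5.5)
The plan has three stages: block encode the projector, extract the two column blocks, then compose with a pseudoinverse.

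\textbf{Stage one: the projector.} We invoke \cor{app-halfplane} with $A=J$ and $\alpha_A=\alpha_J$. This gives a block encoding of $\Pi_{\mathrm{in}}$ with normalization $\alpha_\Pi=\operatorname{\mathbf{O}}(c\,\norm{\Pi_{\mathrm{in}}})$ and accuracy $\varepsilon_1$. It costs $q_\Pi=\operatorname{\mathbf{O}}(\frac{c\,\alpha_J\norm{\Pi_{\mathrm{in}}}}{\delta}\polylog)$ queries to $J/\alpha_J$.

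\textbf{Stage two: the column blocks.} Right multiplication by the isometries $\left[\begin{smallmatrix} I\\0\end{smallmatrix}\right]$ and $\left[\begin{smallmatrix} 0\\I\end{smallmatrix}\right]$ costs nothing. It yields block encodings of the rectangular blocks $\Pi_1$ and $\Pi_2$, each with the same normalization $\alpha_\Pi$. Formally, we restrict the input ancilla bit of the block encoding to $\ket{0}$ or $\ket{1}$.

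\textbf{Stage three: the pseudoinverse and the product.} Apply QSVT pseudoinversion (the analogue of \lem{app-be-inversion} for nonzero singular values) to the block encoding $\Pi_2/\alpha_\Pi$. The rescaled matrix has smallest nonzero singular value at least $\sigma/\alpha_\Pi$. Therefore we obtain $\Pi_2^{+}/(2/\sigma)$, with normalization $\operatorname{\mathbf{O}}(1/\sigma)$, using $\operatorname{\mathbf{O}}(\frac{\alpha_\Pi}{\sigma}\log\frac1{\varepsilon_2})$ queries to the block encoding of $\Pi_2$. Multiplying by $\Pi_1/\alpha_\Pi$ via \lem{app-be-product}, and absorbing the sign, gives $X$ with normalization $\operatorname{\mathbf{O}}(\alpha_\Pi/\sigma)=\operatorname{\mathbf{O}}(c\norm{\Pi_{\mathrm{in}}}/\sigma)$. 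This is \eqref{eq:app-riccati-normalization}.

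\textbf{Query count.} There are $\operatorname{\mathbf{O}}(c\norm{\Pi_{\mathrm{in}}}\sigma^{-1}\log)$ calls to the projector block encoding, each costing $q_\Pi$. The product gives \eqref{eq:app-riccati-queries}.

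\textbf{Main obstacle: error propagation.} The approximate block encoding of $\Pi_2$ is only $\varepsilon_1$-close. QSVT pseudoinversion is a polynomial of degree $d=\operatorname{\mathbf{O}}(\frac{\alpha_\Pi}{\sigma}\log\frac1{\varepsilon_2})$ in the block encoding, so the input error is amplified at most linearly, roughly by $d$. A perturbation may also create small spurious singular values below the threshold. I would handle these in three steps:
\begin{enumerate}
\item Use the standard robustness of QSVT: the output error is at most $\operatorname{\mathbf{O}}(d\sqrt{\varepsilon_1})$, or $\operatorname{\mathbf{O}}(d\,\varepsilon_1)$ for the Hermitian-dilation form.
\item Choose the pseudoinverse polynomial to be bounded on $[0,\sigma/(2\alpha_\Pi)]$. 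Perturbed singular values near zero then contribute only bounded error, which is controlled by $\varepsilon_1/\sigma$ through Weyl's inequality.
\item Take $\varepsilon_1=\operatorname{\pmb{\Theta}}(\varepsilon^2\sigma^2/(\alpha_\Pi^2 d^2))$ or similar polynomial precision, and $\varepsilon_2=\operatorname{\pmb{\Theta}}(\varepsilon\sigma/\alpha_\Pi)$.
\end{enumerate}
Since $q_\Pi$ depends on the accuracy only polylogarithmically, all of this is absorbed into the polylog factor in \eqref{eq:app-riccati-queries}.
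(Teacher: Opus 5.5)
Your proposal is correct and follows the paper's proof essentially step for step. Both block encode $\Pi_{\mathrm{in}}$ via \cor{app-halfplane}, restrict the input register to obtain $\Pi_1/\alpha_\Pi$ and $\Pi_2/\alpha_\Pi$, apply QSVT pseudoinversion with $\operatorname{\mathbf{O}}(\alpha_\Pi\sigma^{-1}\log\frac{1}{\varepsilon})$ calls, multiply by $\Pi_1$, and multiply the call count by the per-call cost. The only difference is error propagation: the paper simply asserts linear composition with $\varepsilon' = \operatorname{\pmb{\Theta}}(\varepsilon\sigma/(\alpha_\Pi\log\frac{1}{\varepsilon}))$, whereas you use the more conservative square-root robustness of QSVT, and either way the required precision is polynomial and only affects the polylogarithmic factor.
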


\begin{proof}
By hypothesis, \cor{app-halfplane} applies to $J$. It block encodes the projector $\Pi_{\mathrm{in}}$ with normalization factor $\alpha_\Pi = \operatorname{\mathbf{O}}(c\,\norm{\Pi_{\mathrm{in}}})$ and accuracy $\varepsilon'$, using
\begin{equation}
  \operatorname{\mathbf{O}}\!\left(\frac{c\,\alpha_J\,\norm{\Pi_{\mathrm{in}}}}{\delta}
  \polylog\frac{c\,\alpha_J\,\norm{\Pi_{\mathrm{in}}}}{\delta\,\varepsilon'}\right)
\end{equation}
queries to $J/\alpha_J$. Restricting the input register to the first or the second column block gives block encodings of $\Pi_1/\alpha_\Pi$ and $\Pi_2/\alpha_\Pi$, with no additional queries.

The nonzero singular values of $\Pi_2/\alpha_\Pi$ are at least $\sigma/\alpha_\Pi$. So the pseudoinverse version of quantum singular value transformation~\cite{gilyen2019} block encodes $\Pi_2^{+}$ with normalization factor $\operatorname{\mathbf{O}}(1/\sigma)$ and accuracy $\varepsilon/2$. This uses $\operatorname{\mathbf{O}}(\frac{\alpha_\Pi}{\sigma}\log\frac{1}{\varepsilon})$ applications of the block encoding of $\Pi_2/\alpha_\Pi$. \lem{app-be-product} then multiplies by $\Pi_1/\alpha_\Pi$ and absorbs the sign into a phase. The result is a block encoding of $X = -\Pi_2^{+}\Pi_1$ with normalization factor $\operatorname{\mathbf{O}}(\alpha_\Pi/\sigma)$, which is \eqref{eq:app-riccati-normalization}.

In total, the projector block encoding is applied $\operatorname{\mathbf{O}}(\frac{\alpha_\Pi}{\sigma}\log\frac{1}{\varepsilon})$ times. Errors in block encodings add up at most linearly. Taking $\varepsilon' = \operatorname{\pmb{\Theta}}(\varepsilon\sigma/(\alpha_\Pi\log\frac{1}{\varepsilon}))$ therefore keeps the final accuracy at $\varepsilon$. The argument of every logarithm in the cost is then polynomial in $c$, $\alpha_J$, $\norm{\Pi_{\mathrm{in}}}$, $1/\sigma$, $1/\delta$, and $1/\varepsilon$, so each logarithm is $\operatorname{\mathbf{O}}(\log\frac{c\,\alpha_J\,\norm{\Pi_{\mathrm{in}}}}{\sigma\,\delta\,\varepsilon})$. Multiplying the number of applications by the cost of each application gives \eqref{eq:app-riccati-queries}. 
\end{proof}

The lower bound $\operatorname{\mathbf{Sval}}_{\min}(\Pi_2) \geq 1/\sqrt{1 + \norm{X}^2}$~\cite{RZL2026} does not depend on the gap. Taking $\sigma = 1/\sqrt{1 + \norm{X}^2}$, the normalization factor \eqref{eq:app-riccati-normalization} becomes $\alpha_X = \operatorname{\mathbf{O}}(c\,\norm{\Pi_{\mathrm{in}}}\sqrt{1 + \norm{X}^2})$, which does not depend on $\delta$. In comparison, the resolvent-integral construction~\cite{RZL2026} block encodes the same projector with normalization factor
$\alpha_{\mathrm{res}} = \operatorname{\mathbf{O}}(\norm{\Pi_{\mathrm{in}}}/\delta)$ in place of $\alpha_\Pi$. This introduces a factor of $1/\delta$ to both the number of pseudoinverse applications and the solution normalization. Together with the $1/\delta$ cost of each projector application, the overall gap dependence is $1/\delta^3$. Preconditioning removes the two factors driven by the normalization, leaving only $1/\delta$.

\subsection{Sylvester equation}
\label{append:app-sylvester}

Consider the Sylvester equation $AX + XB = C$, where $A, B, C$ are square matrices of the same size. Its augmented matrix is $K = \left[\begin{smallmatrix} A & C \\ 0 & -B \end{smallmatrix}\right]$, whose spectrum is the union of the spectra of $A$ and $-B$. The projector onto the invariant subspace of the $A$ branch, along that of the $-B$ branch, is $\Pi_{\mathrm{in}} = \left[\begin{smallmatrix} I & X \\ 0 & 0 \end{smallmatrix}\right]$. The solution $X$ is therefore the top-right block of $\Pi_{\mathrm{in}}$, and no pseudoinverse is needed to extract it.

\begin{corollary}[Sylvester equation]
\label{cor:app-sylvester}
Let $A, B, C$ be square matrices of the same size, and let $K = \left[\begin{smallmatrix} A & C \\ 0 & -B \end{smallmatrix}\right]$ be the associated augmented matrix, accessed through a block encoding $K/\alpha_K$. Suppose the spectra of $A$ and $-B$ lie in the right and left half-planes, respectively. Let $\Pi_{\mathrm{in}} = \left[\begin{smallmatrix} I & X \\ 0 & 0 \end{smallmatrix}\right]$ be the projector onto the right-half-plane invariant subspace of $K$, with compressions $K_{\mathrm{in}} = K\Pi_{\mathrm{in}}$, $K_{\mathrm{out}} = K\Pi_{\mathrm{out}}$.

Assume the compressions $K_{\mathrm{in}}$ and $K_{\mathrm{out}}$ admit spectral sets \eqref{eq:app-spectral-set} with constant $c$ and gap $\delta$ satisfying \eqref{eq:app-halfplane-cor-hyp} with $\alpha_K$ in place of $\alpha_A$. For any $\varepsilon > 0$, the solution $X$ of the Sylvester equation $AX + XB = C$ can be block encoded with normalization factor
\begin{equation}
  \label{eq:app-sylvester-normalization}
  \alpha_X = \operatorname{\mathbf{O}}(c\,\norm{\Pi_{\mathrm{in}}})
\end{equation}
and accuracy $\varepsilon$, using
\begin{equation}
  \label{eq:app-sylvester-queries}
  \operatorname{\mathbf{O}}\!\left(\frac{c\,\alpha_K\,\norm{\Pi_{\mathrm{in}}}}{\delta}
  \polylog\frac{c\,\alpha_K\,\norm{\Pi_{\mathrm{in}}}}{\delta\,\varepsilon}\right)
\end{equation}
queries to the block encoding $K/\alpha_K$.
\end{corollary}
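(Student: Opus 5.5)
The plan is to reduce everything to \cor{app-halfplane} applied to the augmented matrix $K$, and then extract the top-right block of the resulting block encoding of $\Pi_{\mathrm{in}}$.

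\textbf{Step 1: identify the projector.} By hypothesis the compressions $K_{\mathrm{in}}$, $K_{\mathrm{out}}$ admit spectral sets with constant $c$ and gap $\delta$ satisfying \eqref{eq:app-halfplane-cor-hyp} with $\alpha_K$ in place of $\alpha_A$, which is exactly the hypothesis of \cor{app-halfplane}. I would first verify that the right-half-plane Riesz projector of $K$ equals $\left[\begin{smallmatrix} I & X \\ 0 & 0 \end{smallmatrix}\right]$. Its image is $\operatorname{\mathbf{Im}}\left[\begin{smallmatrix} I \\ 0 \end{smallmatrix}\right]$, and its kernel is $\operatorname{\mathbf{Im}}\left[\begin{smallmatrix} -X \\ I \end{smallmatrix}\right]$, using the invariance computation of \sec{app-sylvester}. \lem{app-riesz-image} then identifies $\Pi_{\mathrm{in}}$ as the projection onto the former along the latter. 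Existence and uniqueness of $X$ follow because the spectra of $A$ and $-B$ are disjoint, being in opposite half-planes.

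\textbf{Step 2: block encode the projector.} \cor{app-halfplane} gives a block encoding of $\Pi_{\mathrm{in}}/\alpha_\Pi$ with $\alpha_\Pi = \operatorname{\mathbf{O}}(c\norm{\Pi_{\mathrm{in}}})$ and accuracy $\varepsilon$. It uses the query count \eqref{eq:app-halfplane-queries} with $\alpha_K$ in place of $\alpha_A$, which is already \eqref{eq:app-sylvester-queries}.

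\textbf{Step 3: extract $X$.} Write the system register of $K$ as a qubit indexing the two column/row blocks, tensored with the original register. Then
\[
X = (\bra{0}\otimes I)\,\Pi_{\mathrm{in}}\,(\ket{1}\otimes I).
\]
Absorbing an $X$-gate on that qubit into the input side, and treating $\bra{0}$ on the output side as part of the ancilla postselection, turns the block encoding of $\Pi_{\mathrm{in}}/\alpha_\Pi$ into a block encoding of $X/\alpha_\Pi$. This costs no extra queries and keeps the same normalization, giving \eqref{eq:app-sylvester-normalization}. The accuracy is inherited, since a sub-block of an operator has norm at most that of the operator.

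\textbf{Main obstacle.} The only real point of care is bookkeeping: the accuracy $\varepsilon$ from \cor{app-halfplane} must transfer directly to the sub-block, and the pieces of Step 1 must all match, namely the sign conventions, the orientation of the half-planes, and the fact that the $A$ branch is the right-half-plane part. These are routine.

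Finally, since $\norm{\Pi_{\mathrm{in}}} = \sqrt{1+\norm{X}^2}$ by the computation in \lem{app-projector-norm} with $P = X$, the normalization $\alpha_X$ is within a factor $\operatorname{\mathbf{O}}(c)$ of optimal when $\norm{X}\gtrsim 1$.
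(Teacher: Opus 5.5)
Your proposal is correct and follows the paper's proof. Both apply \cor{app-halfplane} to $K$, whose hypothesis is exactly the one assumed, and then extract the top-right block $X$ of the resulting block encoding of $\Pi_{\mathrm{in}}/\alpha_\Pi$ at no extra query cost, keeping the same normalization and accuracy. Your Step 1 check of the block form of $\Pi_{\mathrm{in}}$ is extra: the paper simply takes that form from the setup in \sec{app-sylvester}.
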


\begin{proof}
By hypothesis, \cor{app-halfplane} applies to $K$. It block encodes the projector $\Pi_{\mathrm{in}}$ with normalization factor $\alpha_\Pi = \operatorname{\mathbf{O}}(c\,\norm{\Pi_{\mathrm{in}}})$ and accuracy $\varepsilon$, using the query count \eqref{eq:app-sylvester-queries}. Restricting the output register to the first block and the input register to the second block extracts the top-right block $X$, with no additional queries. This gives a block encoding of $X$ with the same normalization factor and accuracy, which is \eqref{eq:app-sylvester-normalization}. The special cases $c = 1+\sqrt{2}$ and $c = \kappa_V$ carry over from \cor{app-halfplane}.
\end{proof}

The normalization factor \eqref{eq:app-sylvester-normalization} is close to optimal. From the block form of $\Pi_{\mathrm{in}}$,
\begin{equation}
  \Pi_{\mathrm{in}} \Pi_{\mathrm{in}}^\dagger
  = \begin{bmatrix} I + X X^\dagger & 0 \\ 0 & 0 \end{bmatrix},
\end{equation}
so $\norm{\Pi_{\mathrm{in}}} = \sqrt{1 + \norm{X}^2}$. Any block encoding of $X$ has normalization factor at least $\norm{X}$, so $\alpha_X = \operatorname{\mathbf{O}}(c\sqrt{1 + \norm{X}^2})$ is within a factor $\operatorname{\mathbf{O}}(c)$ of $\max(1, \norm{X})$. In comparison, the direct augmented method~\cite[Remark 6.4]{WL2026} block encodes the same projector through the resolvent integral. Its normalization factor is $\alpha_{\mathrm{res}} = \operatorname{\mathbf{O}}(\norm{\Pi_{\mathrm{in}}}/\delta)$ by \prop{app-resolvent-bound}. Preconditioning removes this factor of $1/\delta$ while keeping the query complexity essentially unchanged.

%%%%%%%%%%%%%%%%%%%%%%%%%%%%%%%%%%%%%%%%%%%%%%%%%%%%%%%%%%%%%%%%%%%%%%%%%%%%%%
\clearpage
\bibliographystyle{myhamsplain2}
\bibliography{oblique_projection.bib}

\end{document}